\newtheorem{definition}{Definition}
\newtheorem{lemma}{Lemma}
\newtheorem{assumption}{Assumption}
\newtheorem{theorem}{Theorem}
\newtheorem{Coro}{Corollary}
\newtheorem{example}{Example}
\newtheorem{remark}{Remark}
\newcommand{\supp}{\text{Supp}}
\newcommand{\floor}[1]{\lfloor #1 \rfloor}
\newcommand{\jinzhou}[1]{\footnote{{  {Jinzhou: #1}}}}
\title{Estimation and Inference of Extremal Quantile Treatment Effects for Heavy-Tailed Distributions}
\author{David Deuber\thanks{These authors contributed equally to this work.}
	\and Jinzhou Li\footnotemark[1]
	\and Sebastian Engelke 
	\and Marloes H. Maathuis}
\date{\today}
\begin{document}

\maketitle

\begin{abstract}
Causal inference for extreme events has many potential applications in fields such as climate science, medicine and economics. We study the extremal quantile treatment effect of a binary treatment on a continuous, heavy-tailed outcome. Existing methods are limited to the case where the quantile of interest is within the range of the observations. For applications in risk assessment, however, the most relevant cases relate to extremal quantiles that go beyond the data range. We introduce an estimator of the extremal quantile treatment effect that relies on asymptotic tail approximation,
and use a new causal Hill estimator for the extreme value indices of potential outcome distributions. 
We establish asymptotic normality of the estimators 
and propose a consistent variance estimator to achieve valid statistical inference. 
We illustrate the performance of our method in simulation studies, and apply it to a real data set to estimate the extremal quantile treatment effect of college education on wage.
\end{abstract}


\section{Introduction}

Quantifying causal effects of binary treatments on extreme events is an important problem in many fields of research.
Examples include the effect of anthropogenic forcing on extreme precipitation,
the effect of smoking on low birth weights,
and the effect of education on high wages \citep[e.g.,][]{madakumbura2021anthropogenic, dessi2018exposure, heckman2018returns}.
The quantile treatment effect (QTE) \citep[e.g.,][]{doksum1974empirical, lehmann1975nonparametrics},
which is based on the potential outcome framework,
quantifies such causal effects.

Formally, for a binary treatment $D\in\{0,1\}$ and an outcome $Y \in \R$, let $Y(0)$ and $Y(1)$ denote the potential outcomes of $Y$ under treatment $D=0$ and $D=1$ respectively.
The fundamental problem of causal inference is that for each sample unit, only one of the outcomes can be observed, namely the one under the given treatment.
The observed response is $Y = Y(1)D + Y(0)(1-D)$ (i.e., we make the stable unit treatment value assumption).
Causal effects of $D$ on $Y$ can be defined in various ways according to different targets of interest. 
An example is the often used average treatment effect,
which is defined as $\E[Y(1)] - \E[Y(0)]$. 
In this paper, the causal effect at extremely high/low quantiles is our main interest.
Let the $\tau$-QTE be defined as
\begin{equation}\label{QTE_intro}
\delta(\tau) := q_1(\tau) - q_0(\tau),
\end{equation}
where $\tau \in (0,1)$ and $q_j(\tau) := \inf \{t \in \R: F_j(t) \geq \tau \}$ denotes the $\tau$-quantile of the potential outcome $Y(j)$, and $F_j(t)$ denotes its distribution function. 
We will treat the case where $\tau$ is very close to $0$ or $1$.


The QTE is generally not identifiable from observational data without making additional assumptions.
A commonly made assumption, called the unconfoundedness assumption \citep[e.g.,][]{rosenbaum1983central}, is that $(Y(1), Y(0))  \indep D \mid  X$  for some set of observed covariates $X \in \R^r$.
Under this assumption, the propensity score defined as $\Pi(x) := P(D=1 \mid X=x)$ can be used to adjust for confounding in the binary treatment setting, 
and the QTE is identifiable.

Along these lines,
\cite{firpo2007efficient} introduced an adjusted quantile estimator $\wh q_j(\tau)$ (see \eqref{eq:firpo_estimator}), 
defined as the minimizer of the inverse propensity score weighted empirical quantile loss,
to estimate the $\tau$-QTE in~\eqref{QTE_intro} for a fixed quantile level $\tau \in (0,1)$.
He also established the asymptotic normality of this estimator, allowing statistical inference.

For extreme events, the interest lies in the $\tau$-QTE for $\tau$ close to $0$ or $1$.
Considering the lower tail,
we allow the quantile $\tau=\tau_n $ to converge to zero as the sample size $n $ tends to infinity. 
In particular, we distinguish between three cases based on different rates of $\tau_n$,
and we call them: (a) intermediate: if $\tau_n \to 0$ and $n \tau_n \to \infty$;
(b) moderately extreme: if $\tau_n \to 0$ and $n \tau_n \to d > 0$; and
(c) extreme: if $\tau_n \to 0$ and $n \tau_n \to 0$.
Here $n\tau_n$ is the effective sample size or the expected number of observations below the $\tau_n$-quantile in a sample of size $n$.
We note that the cases $n \tau_n \to d > 0$ and $n \tau_n \to 0$ are sometimes referred to as ``extreme" and ``very extreme" \citep[][]{chernozhukov2005extremal, chernozhukov2016extremal, zhang2018extremal}.

The asymptotic results of $\wh q_j(\tau)$ with fixed $\tau$ in \cite{firpo2007efficient} no longer hold in the framework of changing levels $\tau_n$,
and \cite{zhang2018extremal} first established the asymptotic theory for this estimator in this framework.
Specifically, in the intermediate case, \cite{zhang2018extremal} showed that the estimator is asymptotically normal and suggested a valid full-sample bootstrap confidence interval. 
For the moderately extreme case, he showed that the limiting distribution of the estimator is no longer Gaussian, and proposed a $b$ out of $n$ bootstrap for valid inference.

For many applications, 
the most relevant quantiles are often those that go beyond the range of the data. 
For instance, in attribution studies, climate scientists investigate the causal effect of anthropogenic influences on climate extremes such as heavy precipitation. The quantiles of interest for such extreme events typically go far beyond the range of historical recordings and therefore extreme value extrapolation is required \citep[e.g.,][]{EASTERLING201617, van_Oldenborgh_2017}.
Formally this corresponds to the case of extreme (rather than intermediate or moderately extreme) $\tau_n$-QTE where $n \tau_n \to 0$. 


From now on, we use the notation $p_n$ to denote levels where $p_n \to 0$ and $n p_n \to d \ge 0$, which includes the extreme case, and we use $\tau_n$ to denote only the intermediate levels where $\tau_n \to 0$ and $n \tau_n \to \infty$.
To the best of our knowledge, there is no existing method for the estimation and inference of the $p_n$-QTE where $n p_n \to 0$.
In particular, since the effective sample size below the $p_n$-quantile tends to zero, the estimators $\wh q_j(p_n)$ based on empirical quantile loss are no longer applicable.



In this paper, we focus on heavy-tailed distributions, which have polynomially decaying tail probabilities and are thus heavier than Gaussian. Heavy tails are often encountered in risk analysis applications and many works therefore study this distribution class \citep[e.g.,][]{mat2004,wang2012estimation, athey2021semiparametric, xu2022}.
We propose a new quantile estimator $\wh Q_j(p_n)$ for $q_j(p_n)$ based on parametric tail approximations from extreme value theory \citep[][]{de2007extreme}, which enables us to extrapolate from intermediate to extreme quantiles. 
Indeed, based on the theory of regular variation, we have
\begin{equation}\label{approx}
q_j(p_n) \approx q_j(\tau_n) \left( {\tau_n}/{p_n} \right )^{\gamma_j},\quad j=0,1,
\end{equation}
where $\gamma_j > 0$ is the extreme value index of the potential outcome $Y(j)$.
Figure~\ref{fig:prediction} in Section~\ref{sec:EVT} illustrates the advantage of using extrapolation as in~\eqref{approx} for estimation of extreme quantiles compared to empirical estimates.

Our proposed estimator $\wh Q_j(p_n)$ (see \eqref{ExtremeQuantileEstimator} in Section~\ref{sec:method_asymptotics}) is obtained by plugging in
the estimator $\wh q_j(\tau_n)$ (see \eqref{eq:firpo_estimator}) from \cite{firpo2007efficient} for intermediate quantiles
and a newly proposed causal Hill estimator $\wh \gamma_j^H$ (see \eqref{hill_est}) for the extreme value index based on inverse propensity score weighting. We use the Hill type instead of the Pickands type estimator for the extreme value index because the latter is known to suffer from high variance in the heavy-tailed case,
but we would also like to note that the Hill type estimator is not invariant to location shift while the Pickands estimator is.
The final $p_n$-QTE estimator is $\wh\delta(p_n) = \wh Q_1(p_n) - \wh Q_0(p_n)$,
which we call the extremal QTE estimator.
Beyond point estimation, 
we establish the asymptotic normality of this estimator.
In particular, inspired by \cite{zhang2018extremal} and \cite{chernozhukov2011inference}, we propose a new normalizing factor for the extrapolation quantile estimators of two treatment groups, to deal with the problem that they may have different convergence rates.

The asymptotic variance of the extremal QTE estimator is unknown, and we propose a technically tractable variance estimator.
We prove that this estimator is consistent under an additional assumption (Assumption~\ref{ass:squared_hill_tozero}). Even when this assumption does not hold, this estimator is conservative, in the sense that it is still consistent to some quantity that is larger than the true variance. Thus it can be used to construct asymptotically honest confidence intervals for the extremal QTE.

The extremal QTE estimator can be used for estimation and inference of 
moderately extreme and extreme quantiles for heavy-tailed distributions. It thus provides an alternative to \cite{zhang2018extremal} for moderately extreme QTEs, and a first method for extreme QTEs.

Our approach requires additional assumptions when compared to \cite{zhang2018extremal},
including most importantly the second-order regular variation, which is fairly standard in extreme value theory \citep[e.g.,][]{de2007extreme}.
It is the price we need to pay in order to go to more extreme quantiles than \cite{zhang2018extremal}.

The topic of causality for extreme events is receiving increasing interest. 
The line of work by \cite{gis2018, gis2018a, gnecco2021causal} and \cite{mha2020} define structural causal models and investigate causal relationships in the setting where several variables are simultaneously extreme, and they focus on learning the unknown causal structure. 
In climate science, there is a large body of literature on attribution studies where the effect of climate change on weather extremes is analyzed \citep[e.g.,][]{han2016, EASTERLING201617, van_Oldenborgh_2017, nav2018, naveau2020statistical}. 
These methods focus on model-based data where interventions on, say, carbon dioxide emissions, are possible, and no adjustment for confounding is required. 
\cite{jan2022} propose a method to quantify the causal effect of London cycling superhighways on extreme
traffic congestion, but no theoretical analysis of this method is done.
Our method adds to this growing literature and provides a theoretically justified approach for estimation and inference of extremal QTEs in the presence of confounding variables.

The paper is structured as follows. 
In Section~\ref{sec:preliminaries}, we review some key concepts from extreme value theory and
the $\tau$-QTE estimator.
In Section~\ref{sec:method_asymptotics}, we propose the causal Hill extreme value index estimator and the extremal QTE estimator, and show their asymptotic normality.
Furthermore, we propose a variance estimator and prove that it is consistent under an additional assumption, and that it is conservative otherwise.
In Section~\ref{sec:simulations}, we present the finite sample behavior of our proposed extremal QTE estimator in different simulation settings, and compare it to existing methods.
We apply our methodology to a real data set about college education and wage in Section~\ref{sec:application}. All proofs, more technical details and additional simulations can be found in the Supplementary Material. Any equations, theorems, etc.,~from the Supplementary Material are referred to starting with a letter A--F corresponding to the respective section in that document.

To be in line with the literature on  extreme value theory, we focus on extremal QTEs in the upper tail, that is, $\delta(1-p_n)$. Results for the lower tail can be derived similarly.

\section{Preliminaries} \label{sec:preliminaries}

\subsection{Extreme Value Theory} \label{sec:EVT}

We are interested in extreme quantiles with level $p_n$ where $n p_n \to d \geq 0$. Empirical estimates of extreme quantiles with $d=0$ become highly biased and classical asymptotic theory no longer applies (see Figure~\ref{fig:prediction}). Extreme value theory studies methods for quantile extrapolation that result in more accurate estimates of extremal quantiles \citep[e.g.,][]{de2007extreme}. This theory relies on the following mild assumption on a distribution that guarantees that the tail can be well approximated in a parametric way. 
Let the random variable $Y$ have distribution $F$ and quantile function $q(\cdot) = F^\leftarrow(\cdot)$,
where $f^{\leftarrow}(x) := \inf\{ y \in \R: f(y) \ge x \}$ is the left-continuous inverse of a non-decreasing function $f$.


\begin{definition}(cf. \cite{de2007extreme})
	The distribution $F$ is in the max-domain of attraction of a generalized extreme value distribution if
	there exist $\gamma \in \R$ and
	sequences of constants $a_n >0$ and $ b_n \in \R $,
	$n =1,2, \ldots$,
	such that
	\begin{equation}\label{doa_cond}
	\lim_{n \to \infty} F^n(a_n x + b_n) = \exp(-(1+\gamma x)^{-1/\gamma})
	\end{equation}
	for all $x$ such that $1+\gamma x > 0$.
	For $\gamma=0$, the right hand side is interpreted as $\exp(-e^{-x})$.
	The parameter $\gamma$ is called the extreme value index (EVI).
	\label{def:domain_of_attraction}
\end{definition}
This condition is mild as it is satisfied for most standard distributions, for example, the normal, Student-$t$ and beta distributions. For a complete characterization of the max-domain of attraction of the three regimes ($\gamma<0$, $\gamma=0$, $\gamma > 0$), we refer to \cite{res2008} and \cite{emb1997}.
We focus on the heavy-tailed case where $\gamma > 0$,
that is, distributions $F$ with regularly varying tails $1 - F(x) = L(x) x^{-1/\gamma}$, where $L$ is a slowly varying function at $\infty$.
The regular variation of the tail of $F$ can be equivalently expressed in
terms of the tail quantile function $U := \left( 1/(1-F) \right)^{\leftarrow}$. The max-domain of attraction condition~\eqref{doa_cond} for $\gamma > 0$ is equivalent to 
\begin{equation}
\quad \lim_{t \to \infty} \frac{U(tx)}{U(t)} = x^\gamma, \quad \forall x > 0.
\label{eq:first_order_u}
\end{equation}
Relation~\eqref{eq:first_order_u} and the fact that the quantile function $q = F^\leftarrow$ can be expressed as
$q(\tau) = U( 1/(1- \tau) )$ for $\tau \in (0,1)$, imply that for large enough $n$,
\begin{equation} \label{ExtrapolationApprox}
q(1-p_n) \approx q(1-\tau_n) \left( {\tau_n}/{p_n} \right )^\gamma,
\end{equation}
where $\tau_n$ is a sequence such that $\tau_n \to 0$ and $\tau_n>p_n $. 
By using an intermediate sequence $\tau_n$, 
relation~\eqref{ExtrapolationApprox} allows us to extrapolate from intermediate to extreme quantiles.
We illustrate the benefit of using the extreme value extrapolation~\eqref{ExtrapolationApprox}
in Figure~\ref{fig:prediction}. It can be seen that the empirical estimates are strongly biased for extreme quantiles when the effective sample size $np_n< 1$, while the extrapolation allows for approximately unbiased estimates.
\begin{figure}[h!]
	\centering
	\includegraphics[scale=.6]{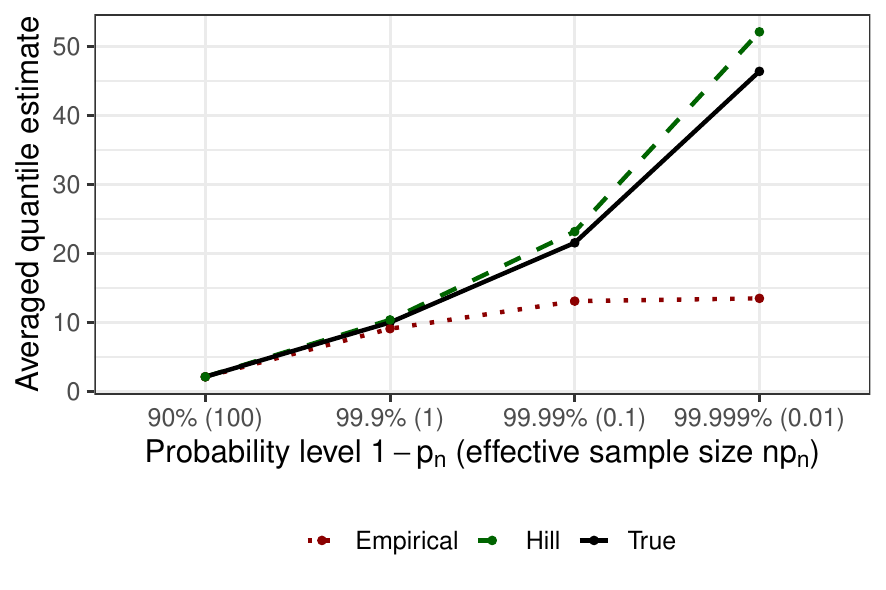}
	\caption[]
	{Averaged quantile estimates (over 1000 repetitions) at different probability levels $1-p_n$ fitted on $n=1000$ i.i.d.\ samples
	from a Fr\'echet distribution with shape $3$ (i.e., $\gamma = 1/3$), location $0$ and scale $1$. 
	The corresponding effective sample size $n p_n$ is shown with parentheses. 
	The solid black line denotes the true quantiles,
	the dotted red line denotes the empirical estimator, 
	and the dotted green line denotes the extrapolation method~\eqref{ExtrapolationApprox}, where the intermediate quantile level is $1-\tau_n = 90\%$
	and the EVI $\gamma$ is estimated by the Hill estimator \citep{hill1975simple}.} 
	\label{fig:prediction}
\end{figure}

To analyze the asymptotic distribution of the
extrapolated quantiles, usually a second-order condition is assumed.
\begin{definition}(cf. \cite{de2007extreme}) \label{def:second-order regular variation}
	The function $U$ is of second-order regular variation
	with 
	first-order parameter \mbox{$\gamma > 0$}
	and second-order parameter $\rho \le 0$
	if there exists
	a positive or negative function $A$
	with $\lim_{t \to \infty} A(t) = 0$
	such that for all $x > 0$
	\[
	\lim_{t \to \infty} \frac{U(tx)/U(t) - x^\gamma}{A(t)}
	= x^\gamma \frac{x^\rho - 1}{\rho}.
	\]
	The function $A$ is called the second-order auxiliary function. For $\rho=0$, $\frac{x^\rho - 1}{\rho}$ is defined as $ \lim_{\rho \to 0} \frac{x^\rho - 1}{\rho} = \log(x)$.
	\label{def:second_order}
\end{definition}
We note that Definition~\ref{def:second-order regular variation}
is stronger than Definition~\ref{def:domain_of_attraction}:
if $U$ is of second-order regular variation with first-order parameter $\gamma > 0$, then $F$ 
satisfies the max-domain of attraction condition with extreme value \mbox{index $\gamma$}. Second-order regular variation is satisfied by many popular distributions, and it is often possible to derive the function $A$ and the value of $\rho$ explicitly; see \cite{alves2007note} for details and examples. 

\subsection{Estimation of the $\tau$-QTE}  \label{sec:estQTEFirpo}




The estimation of $\tau$-QTE from \cite{firpo2007efficient} relies on the propensity score $\Pi(x)$,
which is used to adjust for confounding in the binary treatment setting.
In particular, \citet[][Corollary 1]{firpo2007efficient} shows that the QTE is identifiable under the following assumptions.
\begin{assumption} \label{ass:1}
	\
	\begin{enumerate}
		\item[i)] $(Y(1), Y(0)) \ \indep \ D \ \mid \ X$.
		\item[ii)] $X$ has compact support $\supp(X)$ 
		and there exists $c>0$ such that $c < \Pi(x) < 1-c$ for all $x \in \supp(X)$.
	\end{enumerate}
\end{assumption}
Assumption~\ref{ass:1}~$i)$ is the unconfoundedness condition and Assumption~\ref{ass:1}~$ii)$ is called the common support assumption. 
Both assumptions are fairly standard in causal inference literature, and we impose them throughout this paper.


The propensity score $\Pi(x)$ is generally unknown in practice and needs to be estimated.
For $n$ independent copies $(Y_i, D_i, X_i)_{i=1}^n$ of $(Y, D, X)$, we follow \cite{hirano2003efficient}, \cite{firpo2007efficient} and \cite{zhang2018extremal} and use the nonparametric sieve method to obtain an estimate $\wh \Pi(x)$ (see Section~\ref{appendix:estPropensity} in the Supplementary Material for more details).
Based on inverse propensity score weighting,
\cite{firpo2007efficient} proposed 
the following estimators for the quantiles of the potential outcomes $Y(1)$ and $Y(0)$:
\begin{equation}
\begin{aligned}
\wh q _1(\tau) &:= \argmin_{q\in \R} \Sn \frac{D_i}{\wh \Pi (X_i) }
(Y_i - q) (\tau - \Ind{Y_i \le q}),  
\\
\wh q_0(\tau) &:= \argmin_{q\in \R} \Sn \frac{1-D_i}{1-\wh \Pi (X_i) }
(Y_i - q) (\tau - \Ind{Y_i \le q}).
\label{eq:firpo_estimator} 
\end{aligned}
\end{equation}
The $\tau$-QTE is then estimated by $\wh q _1(\tau) - \wh q _0(\tau)$.

Under Assumption~\ref{ass:tail} below and the two regularity Assumptions~\ref{ass:sieve} and \ref{ass:zhangTechnical} in Section~\ref{appendix:sieve} of the Supplementary Material,
\citet[Theorem 3.1][]{zhang2018extremal} showed 
that for the intermediate quantile index $\tau_n$ (i.e., $\tau_n \to 0$, $n \tau_n \to \infty$),
the $\tau_n$-QTE estimator $\wh q_1(1-\tau_n) - \wh q_0 (1-\tau_n)$ is asymptotic normal.
For the moderately extreme case (i.e., $\tau_n \to 0$ and $n \tau_n \to d > 0$), however, \cite{zhang2018extremal} showed that the limiting distribution is no longer Gaussian.

\begin{assumption}\label{ass:tail}(Regularity conditions on the potential outcome distributions)
	~\\
	For $j=0,1$:
	\begin{enumerate}
		\item[i)] $Y(j)$ and $Y(j) \mid X$ are continuously distributed with densities $f_j$ and $f_{j \mid X}$, respectively.
		\item[ii)] The density $f_j$ of $Y(j)$ is  monotone in its upper tail.
		\item[iii)] The distribution function $F_j$ of $Y(j)$ belongs to the max-domain of attraction of a generalized extreme value distribution
		with extreme value index $\gamma_j$ (see Definition~\ref{def:domain_of_attraction}). 
	\end{enumerate}
\end{assumption}

\section{Extremal Quantile Treatment Effect Estimation for Heavy-Tailed Models}
\label{sec:method_asymptotics}


We now go beyond the intermediate and moderately extreme cases and propose an extremal QTE estimator based on quantile extrapolation,
that can be used in moderately
extreme and extreme cases (i.e., $p_n \to 0$, $n p_n \to d \geq 0$). We also derive its asymptotic normality and propose an asymptotic variance estimator which enables us to construct a confidence interval for the extremal QTE.

\subsection{Extremal QTE Estimator}

Our extremal QTE estimator is based on the quantile extrapolation approach (see approximation \eqref{ExtrapolationApprox}),
so estimators for the EVIs of the potential outcome distributions are required.
In the classical setting, the Hill estimator introduced by \cite{hill1975simple} is a common choice for heavy-tailed cases.
In the potential outcome setting, however, the classical Hill estimator is not appropriate due to confounding.
Therefore, we propose the following causal Hill estimators that adjust for confounding via 
the estimated propensity score $\wh{\Pi}$ (see \eqref{SieveEstimatorPropensityScore}):
\begin{equation} 
\begin{aligned}
\wh \gamma_1^H &:= \frac{1}{n\tau_n} \Sn (\log(Y_i) - \log(\wh q_1(1-\tau_n)))
\frac{D_i}{\wh \Pi(X_i)} \Ind{Y_i > \wh q_1(1-\tau_n)}, \\ 
\wh \gamma_0^H &:= \frac{1}{n\tau_n} \Sn (\log(Y_i) - \log(\wh q_0(1-\tau_n))) 
\frac{1-D_i}{1-\wh\Pi(X_i)} \Ind{Y_i > \wh q_0(1-\tau_n)}.
\label{hill_est} 
\end{aligned}
\end{equation}

\cite{zhang2018extremal} also proposed two EVI estimators for short-tailed distributions in his supplementary material,
one is the Pickands type estimator and one is the Hill type estimator. 
Rather than extrapolation, his goal of using the EVI estimator is to estimate the $0$-th QTE, that is, the lower endpoint of the distribution. 
His Hill type estimator is moment-based, which also uses inverse propensity score weighting, but with the true propensity score.
In the simulations in Section~\ref{sec:simulations}, we implement the quantile extrapolation approach with his Pickands type estimator and compare it to our proposed estimator~\eqref{hill_est}.

Building on the causal Hill estimators $\wh \gamma_j^H$ in~\eqref{hill_est} and the intermediate quantile estimator $\wh q_j(1-\tau_n)$ in~\eqref{eq:firpo_estimator}, 
we propose the following quantile extrapolation estimator
\begin{align}\label{ExtremeQuantileEstimator}
\wh{Q}_j(1-p_n) := 
\wh q_j(1-\tau_n) \left( \frac{\tau_n}{p_n} \right)^{\wh \gamma_j^H},
\end{align}
for $j=0,1$, and the final extremal QTE estimator is defined as the difference of the extrapolated quantiles:
\begin{align} \label{ExtremeQTEestimator}
\wh{\delta}(1-p_n) := \wh{Q}_1(1-p_n) - \wh{Q}_0(1-p_n).
\end{align}
For simplicity, we use the same intermediate level $\tau_n$ for both extrapolation estimators $\wh Q_1(1-p_n)$ and $\wh Q_0(1-p_n)$ in this paper,
but in principle, one can use different intermediate levels for each potential outcome.
The latter might be advantageous if the potential outcome distributions have very different tail behaviors, 
or if there is a severe imbalance between treated and non-treated samples.

To obtain the extremal QTE estimator \eqref{ExtremeQTEestimator}, 
the intermediate level $\tau_n$ (or $k$ if we consider $\tau_n = k/n$) needs to be chosen.
The optimal choice of the $\tau_n$ depends on the underlying data distribution and is difficult in practice,
and there is a bias-variance trade-off. 
Specifically,
if $\tau_n$ is too small, then the effective sample size $n \tau_n$ is small
and this will lead to high variance.
On the other hand, if $\tau_n$ is too large, then the assumptions of extreme value theory may not hold because we are no longer in the tail of the distribution, and this will lead to high bias.
In practice, there are some commonly used approaches for choosing $\tau_n$.
The simplest one is to set $\tau_n$ to some reasonable fixed value based on the background knowledge of the concrete problem.
One can also plot the estimates depending on different $\tau_n$ and then select $\tau_n$ in the first stable region of the plot \citep[e.g.,][]{resnick2007heavy}.
There are also adaptive methods to approximate the optimal $\tau_n$ \citep[e.g.,][]{drees1998selecting, boucheron2015tail}.

\subsection{Asymptotic Properties}

The main theoretical result in this subsection is Theorem~\ref{thm:hill_extreme_qte}, which shows the asymptotic normality of the extremal QTE estimator $\wh{\delta}(1-p_n)$.
The major steps to prove this theorem are the following: (1) showing that the asymptotic behavior of the quantile extrapolation estimator $\wh{Q}_j(1-p_n)$ depends only on the asymptotic distribution of the causal Hill estimator $\wh \gamma_j^H$ (see Lemma~\ref{lemma:hill_extreme_quantiles}); 
(2) deriving the asymptotic distribution of $\wh \gamma_j^H$ (see Theorem~\ref{thm:hill_normality}), which builds on an asymptotic linearity result (see Theorem~\ref{thm:hill});
(3) introducing a new normalizing factor $\wh {\beta}_n$ (see formula~\eqref{NormalizingFactorQTE}) when deriving the asymptotic distribution of $\wh{\delta}(1-p_n)$ to account for the issue that $\wh{Q}_1(1-p_n)$ and $\wh{Q}_0(1-p_n)$ may have different convergence rates.


\subsubsection{Asymptotic Properties of the Causal Hill Estimator} \label{sec:hill}
We first present a result showing that under the same conditions as the Theorem 3.1 of \cite{zhang2018extremal}, our proposed causal Hill estimator is consistent.
\begin{lemma}
	\label{lemma:hill_consistency}
	Suppose that Assumptions~\ref{ass:1}, \ref{ass:tail},
	\ref{ass:sieve} and \ref{ass:zhangTechnical} hold, 
	and $n \tau_n  \to \infty$ and $\tau_n \to 0$.
	If for $j=0,1$, the extreme value index $\gamma_j>0$,
	then 
	\[
	\wh \gamma_j^H \tendsto{P} \gamma_j.
	\]
      \end{lemma}
      
To obtain asymptotic normality of the causal Hill estimator, we require Assumption~\ref{ass:second_order_regular_var} below,
which is a  second-order regular variation assumption.
This assumption is standard to obtain asymptotic normality results for heavy-tailed distributions, and it is satisfied by most heavy-tailed distributions such as the Student-$t$ and the Fr\'echet distribution \citep[e.g.,][]{de2007extreme}.
\begin{assumption}\label{ass:second_order_regular_var} 
	~\\
	For $j=0,1$,
	the tail function $U_j = (1/(1-F_j))^\leftarrow$ of $Y(j)$ is of 
	second-order regular variation (see Definition~\ref{def:second_order}) with extreme value index $\gamma_j>0$,
	second-order auxiliary function $A_j$ and 
	second-order parameter $\rho_j \leq 0$. 
\end{assumption}

To guarantee that the estimated propensity score is compatible with the causal Hill estimator, 
we also require Assumption~\ref{ass:hill_sieve} below, which is a regularity assumption on the sieve estimator. 
This assumption is of similar type as Assumption~\ref{ass:sieve} which is used in the Theorem 3.1 of \cite{zhang2018extremal}.
\begin{assumption} \label{ass:hill_sieve}
	~\\
	For $j=0,1$,
	$\ERW{ 
		\log \left( \frac{\tau_n}{1-F_j(Y(j))} \right)  \Ind{ Y(j) > q_j(1-\tau_n)} \mid X = x}$
	is $t$-times continuously differentiable in $x$ with all derivatives
	bounded by some $N_n$ uniformly over $\supp(X)$.
	Here $\sqrt{\frac{n}{\tau_n}} N_n  h_n^{- t/2r} \to 0$ 
	as $n\to \infty$, where $h_n$ is the number of sieve bases 
	and $r$ is the dimension of $X$.
\end{assumption}

Before showing asymptotic normality of the causal Hill estimator,
we show an important asymptotic linearity result under the above two extra assumptions.
\begin{theorem}
	Suppose that Assumptions~\ref{ass:1} -- \ref{ass:hill_sieve},
	\ref{ass:sieve} and \ref{ass:zhangTechnical} hold, 
	and $k = n \tau_n \to \infty$ and $\tau_n=k/n \to 0$.
	If for $j=0,1$, $\sqrt{k} A_j(n/k) \to \lambda_j \in \R$ and the extreme value index $\gamma_j>0$,
	then
	\[
	\sqrt{k} (\wh \gamma^H_j - \gamma_j) = \frac{\lambda_j}{1-\rho_j} + \frac{1}{\sqrt{n}} \Sn \left( \psi_{i,j,n} -\gamma_j\phi_{i,j,n} \right) + o_p(1),
	\]
	where
	\begin{align*}
	\psi_{i,1,n} &:= \frac{1}{\sqrt{\tau_n}}  \left(
	\frac{ D_i}{\Pi(X_i)} S_{i,1,n}  
	- \gamma_1 \tau_n   -
	\frac{\ERW{S_{i,1,n} \mid  X_i}}{\Pi(X_i)} (D_i - \Pi(X_i))
	\right), \\
	\psi_{i,0,n} &:= \frac{1}{\sqrt{\tau_n}}  \left(
	\frac{ 1-D_i}{1-\Pi(X_i)} S_{i,0,n}  
	-  \gamma_0 \tau_n   +
	\frac{\ERW{S_{i,0,n} \mid  X_i}}{1-\Pi(X_i)} (D_i - \Pi(X_i))
	\right),
	\\
	\phi_{i,1,n} &:= \frac{1}{\sqrt{ \tau_n}} \left( \frac{D_i}{\Pi(X_i)} T_{i,1,n} 
	- \frac{\ERW{T_{i,1,n}\mid X_i}}{\Pi(X_i)} (D_i - \Pi(X_i)) \right),
	\\
	\phi_{i,0,n} &:= \frac{1}{\sqrt{\tau_n}} \left( \frac{1-D_i}{1-\Pi(X_i)} T_{i,0,n} 
	+ \frac{\ERW{T_{i,0,n}\mid X_i}}{1-\Pi(X_i)} (D_i - \Pi(X_i)) \right)
	\end{align*}
	with  
	$T_{i,j,n} :=  \Ind{Y_i(j) > q_j(1-\tau_n)}- \tau_n$
	and 
	$S_{i,j,n} := 
	\gamma_j \log \left( \frac{\tau_n}{1-F_j(Y_i(j))} \right)  \Ind{ Y_i(j) > q_j(1-\tau_n)}$.
	\label{thm:hill}
\end{theorem}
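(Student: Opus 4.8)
We sketch the argument for $j=1$; the case $j=0$ is the mirror image, obtained by replacing $D_i/\Pi(X_i)$ and $\wh\Pi$ by $(1-D_i)/(1-\Pi(X_i))$ and $1-\wh\Pi$, which is exactly what flips the sign in front of the propensity-score correction and produces $\psi_{i,0,n}$ and $\phi_{i,0,n}$. Abbreviate $q_n:=q_1(1-\tau_n)$ and, for a threshold $u>0$ and a function $\pi$, write $\Gamma_n(u;\pi):=\tfrac1k\sum_{i=1}^n(\log Y_i-\log u)\,\tfrac{D_i}{\pi(X_i)}\Ind{Y_i>u}$, so that $\wh\gamma_1^H=\Gamma_n(\wh q_1(1-\tau_n);\wh\Pi)$. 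The plan is to use the telescoping split
\[
\wh\gamma_1^H-\gamma_1
=\underbrace{\bigl(\Gamma_n(q_n;\Pi)-\gamma_1\bigr)}_{\text{(I)}}
+\underbrace{\bigl(\Gamma_n(q_n;\wh\Pi)-\Gamma_n(q_n;\Pi)\bigr)}_{\text{(II)}}
+\underbrace{\bigl(\Gamma_n(\wh q_1(1-\tau_n);\wh\Pi)-\Gamma_n(q_n;\wh\Pi)\bigr)}_{\text{(III)}},
\]
and to show, after multiplying by $\sqrt k$, that (I) contributes $\lambda_1/(1-\rho_1)$ plus the first two terms of $n^{-1/2}\sum_i\psi_{i,1,n}$, that (II) contributes the propensity-score correction term of $n^{-1/2}\sum_i\psi_{i,1,n}$, and that (III) contributes $-\gamma_1 n^{-1/2}\sum_i\phi_{i,1,n}$.

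\emph{Term (I)} is an inverse-propensity-weighted version of the classical Hill expansion under a second-order condition. The weights $D_i/\Pi(X_i)$ are bounded by Assumption~\ref{ass:1}~$ii)$, and by Assumption~\ref{ass:1}~$i)$ one has $\E\bigl[\tfrac{D_i}{\Pi(X_i)}g(Y_i(1))\bigr]=\E[g(Y_i(1))]$ for any integrable $g$, so each summand behaves, conditionally on $Y_i(1)$, like in the unconfounded case. Using the second-order regular variation of $U_1$ (Assumption~\ref{ass:second_order_regular_var}) together with a uniform (Drees-type) bound, I would replace $(\log Y_i-\log q_n)\Ind{Y_i>q_n}$ by $S_{i,1,n}$; writing $V=1-F_1(Y_i(1))$, which is uniform on $(0,1)$ by Assumption~\ref{ass:tail}~$i)$, the expectation of the replacement error, after weighting and scaling by $\sqrt k$, converges to $\lambda_1/(1-\rho_1)$ by the hypothesis $\sqrt kA_1(n/k)\to\lambda_1$ and the identity $\int_0^1\tfrac{u^{-\rho_1}-1}{\rho_1}\,du=1/(1-\rho_1)$, while the fluctuation of that error around its mean is $o_p(k^{-1/2})$ because $A_1\to0$. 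Since $\E\bigl[\tfrac{D_i}{\Pi(X_i)}S_{i,1,n}\bigr]=\E[S_{i,1,n}]=\gamma_1\tau_n$ exactly and $k\gamma_1=\sum_i\gamma_1\tau_n$, this gives $\sqrt k\,\text{(I)}=\lambda_1/(1-\rho_1)+k^{-1/2}\sum_i\bigl(\tfrac{D_i}{\Pi(X_i)}S_{i,1,n}-\gamma_1\tau_n\bigr)+o_p(1)$, and $k^{-1/2}=n^{-1/2}\tau_n^{-1/2}$ identifies this with the first two terms of $\psi_{i,1,n}$.

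\emph{Term (II)} isolates the effect of the sieve-estimated propensity score at the fixed threshold $q_n$. After the same second-order reduction, (II) equals $\tfrac1k\sum_i S_{i,1,n}D_i\bigl(\wh\Pi(X_i)^{-1}-\Pi(X_i)^{-1}\bigr)$ up to negligible terms, and I would then run the standard sieve linearization of \cite{hirano2003efficient}, \cite{firpo2007efficient} and \cite{zhang2018extremal} — a mean-value expansion of $1/\wh\Pi$, the first-order expansion of the sieve logistic fit, and an orthogonal-projection step — turning this into $-\tfrac1k\sum_i\tfrac{\E[S_{i,1,n}\mid X_i]}{\Pi(X_i)}(D_i-\Pi(X_i))$ plus a remainder made $o_p(k^{-1/2})$ by Assumption~\ref{ass:hill_sieve}, which imposes precisely the smoothness of $x\mapsto\E[S_{i,1,n}\mid X_i=x]$ and the rate $\sqrt{n/\tau_n}\,N_nh_n^{-t/2r}\to0$ needed here; its $\sqrt k$-multiple is the third term of $\psi_{i,1,n}$. \emph{Term (III)} captures the effect of the estimated intermediate quantile. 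Using $(\log Y_i-\log u)\Ind{Y_i>u}=\int_u^\infty t^{-1}\Ind{Y_i>t}\,dt$ and Fubini, $\Gamma_n(u;\wh\Pi)=\int_u^\infty t^{-1}\bar G_n(t)\,dt$ with $\bar G_n(t):=\tfrac1k\sum_i\tfrac{D_i}{\wh\Pi(X_i)}\Ind{Y_i>t}$, so $\text{(III)}=-\int_{q_n}^{\wh q_1(1-\tau_n)}t^{-1}\bar G_n(t)\,dt$. By Theorem~\ref{thm:interquantile_light}, $\lambda_{1,n}(\wh q_1(1-\tau_n)-q_n)=O_p(1)$ with $\lambda_{1,n}q_n=\sqrt k/\gamma_1\to\infty$, hence $\wh q_1(1-\tau_n)/q_n\to1$ in probability; combining this with a uniform-consistency bound for the weighted tail $\bar G_n$ on a shrinking neighbourhood of $q_n$ and with $\tfrac nk(1-F_1(t))\to1$ there gives $\text{(III)}=-(1+o_p(1))\log\bigl(\wh q_1(1-\tau_n)/q_n\bigr)=-(1+o_p(1))(\wh q_1(1-\tau_n)-q_n)/q_n$. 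The monotone-density Assumption~\ref{ass:tail}~$ii)$ yields, via the monotone density theorem, $q_nf_1(q_n)/\tau_n\to1/\gamma_1$, so $\sqrt k\,\text{(III)}=-\gamma_1\lambda_{1,n}(\wh q_1(1-\tau_n)-q_n)+o_p(1)$; inserting the asymptotic linear representation underlying Theorem~\ref{thm:interquantile_light}, namely $\lambda_{1,n}(\wh q_1(1-\tau_n)-q_n)=n^{-1/2}\sum_i\phi_{i,1,n}+o_p(1)$, produces $-\gamma_1 n^{-1/2}\sum_i\phi_{i,1,n}$. Summing (I), (II) and (III) gives the claim.

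\emph{Main obstacle.} The technical heart is Term~(III): one needs genuinely uniform control of the inverse-propensity-weighted tail-exceedance process $t\mapsto\bar G_n(t)$ over a shrinking neighbourhood of the moving threshold $q_n$, which is complicated both by the estimated weights $D_i/\wh\Pi(X_i)$ and by the randomness of $\wh q_1(1-\tau_n)$ itself; monotonicity of $\bar G_n$ in $t$ together with a VC/chaining bound for the weighted exceedance indicators should handle it, but this is where the main work lies. The second delicate point is the uniform second-order inequality used in Term~(I) to pass from $\log Y_i-\log q_n$ to $S_{i,1,n}$ while keeping the limiting bias $\lambda_1/(1-\rho_1)$ exact.
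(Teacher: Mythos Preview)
Your decomposition and your treatment of Terms~(I) and~(II) match the paper's proof almost exactly: the paper also splits $\wh\gamma_1^H$ into a piece with true threshold and true propensity ($G_n^1$, your (I)), a propensity-correction piece ($G_n^4$, your (II)), and a threshold-correction piece, and handles (I) via the uniform second-order inequality for $\log U_1$ and (II) via the Hirano--Imbens--Ridder sieve linearization, exactly as you outline.

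The genuine difference is in Term~(III). You propose the integral representation $\Gamma_n(u;\wh\Pi)=\int_u^\infty t^{-1}\bar G_n(t)\,dt$ and then call for uniform control of the weighted tail process $\bar G_n$ over a shrinking random interval, flagging this as the ``main obstacle''. The paper avoids this entirely by a further algebraic split of (III) into
\[
(\log q_n-\log\wh q_1)\cdot\frac1k\sum_i\frac{D_i}{\Pi(X_i)}\Ind{Y_i>q_n}
\quad+\quad
\frac1k\sum_i(\log Y_i-\log\wh q_1)\frac{D_i}{\wh\Pi(X_i)}\bigl(\Ind{Y_i>\wh q_1}-\Ind{Y_i>q_n}\bigr).
\]
The first piece is immediate: the factor in front is $-\Delta_n$ and the sum converges to~$1$, so after the delta method this is $-\gamma_1 n^{-1/2}\sum_i\phi_{i,1,n}+o_p(1)$. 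For the second piece the paper uses the elementary pointwise inequality
\[
0\;\le\;(\log Y_i-\log\wh q_1)\bigl(\Ind{Y_i>\wh q_1}-\Ind{Y_i>q_n}\bigr)\;\le\;(\log q_n-\log\wh q_1)\bigl(\Ind{Y_i>\wh q_1}-\Ind{Y_i>q_n}\bigr),
\]
which holds for every $i$ regardless of the sign of $\wh q_1-q_n$; this bounds the whole second piece by $|\Delta_n|$ times the weighted indicator difference, and a single lemma (Lemma~\ref{lemma:weighted_cdf_diff}, which follows from the subgradient condition for $\wh q_1$) shows that this weighted indicator difference is $o_p(1)$. No empirical-process or chaining argument is needed. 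So your ``main obstacle'' is a self-imposed one; the paper's route for (III) is both shorter and more elementary, while your integral approach would also work (monotonicity of $\bar G_n$ plus endpoint control via the same lemma suffices, so VC/chaining is overkill) but obscures this simplicity.
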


\begin{remark} \label{rmk:hill_information_negligible}
	$\phi_{i,j,n}$ is the influence function for
	the intermediate quantile estimator $\wh q_j(1-\tau_n)$ which also appears in \cite{zhang2018extremal} (see also Theorem~\ref{thm:interquantile}), and
	$\psi_{i,j,n}$ is the new influence function appearing in our result, which has a similar form as $\phi_{i,j,n}$.
	The factors $\ERW{S_{i,j,n}\mid X_i}$ and $\ERW{T_{i,j,n}\mid X_i}$ in the influence functions
	correspond to the information gain from the nonparametric estimation of the propensity score.
	\cite{zhang2018extremal} made the observation that
	since $P(Y_i(j) > q_j(1-\tau_n) \mid X_i)$ is of order $O_p(\tau_n)$,
	the term with factor $\ERW{T_{i,j,n}\mid X_i}$ in $\phi_{i,j,n}$ is negligible under
	suitable integrability conditions.
	The same holds for the information gain term in $\psi_{i,j,n}$ 
	because $\ERW{S_{i,j,n}\mid X_i}$ is also of order $O_p(\tau_n)$ (see Lemma~\ref{lemma:s_order}).
	We discuss this in more detail when considering variance estimation in Section~\ref{sec:variance}.
\end{remark}

Given the asymptotic linearity result in Theorem~\ref{thm:hill} and the fact that the influence functions depend on the sample size $n$, 
we will use a triangular array central limit theorem to obtain asymptotic normality.
This requires that the covariance matrix $\Sigma_n$ of the random vectors
$(\psi_{i,1,n}, \psi_{i,0,n}, \phi_{i,1,n}, \phi_{i,0,n})$ converge (see Assumptions~\ref{ass:zhangTechnical} and \ref{ass:hill_Technical}).
The extra Assumption~\ref{ass:hill_Technical} is of similar flavor as Assumption~\ref{ass:zhangTechnical} used in Theorem 3.1 of \cite{zhang2018extremal}.
In fact, we can show that the sequence $\Sigma_n$ is bounded by using similar arguments as in the proof of Theorem~\ref{thm:hill_normality}.
Its convergence is therefore mostly a technical condition.



We now give the asymptotic normality result of the causal Hill estimator.
\begin{theorem}
	Suppose that Assumptions~\ref{ass:1} -- \ref{ass:hill_sieve}
	and \ref{ass:sieve} -- \ref{ass:hill_Technical} hold,
	and $k = n \tau_n \to \infty$ and $k/n \to 0$.
	If for $j=0,1$, $\sqrt{k} A_j(n/k) \to \lambda_j \in \R$ and
	the extreme value index $\gamma_j>0$,
	then 
	\[
	\sqrt{k} (\wh \gamma_1^H - \gamma_1, \wh \gamma_0^H - \gamma_0)
	\tendsto{D} \N( \mu_{\gamma}, B \Sigma B^T), 
	\]
	where $\mu_{\gamma} = \left(\frac{\lambda_1}{1-\rho_1}, \frac{\lambda_0}{1-\rho_0} \right)^T$,
	$B =  \left(
	\begin{matrix}                                
	1 & 0 & -\gamma_1 & 0 \\
	0 & 1 & 0 & -\gamma_0 \\
	\end{matrix}
	\right)$
	and $\Sigma$ defined as in \eqref{covariance_true}. 
	\label{thm:hill_normality}
\end{theorem}


\subsubsection{Asymptotic Properties of the Extremal QTE Estimator}

We now study the asymptotic properties of the quantile extrapolation estimator $\wh Q_j(1-p_n)$ in~\eqref{ExtremeQuantileEstimator}. We first give the following lemma which shows that the asymptotic behavior of $\wh Q_j(1-p_n)$ only depends on the 
asymptotic distribution of the EVI estimator.
\begin{lemma}
    Suppose that Assumptions~\ref{ass:1} -- \ref{ass:hill_sieve}
    and \ref{ass:sieve} -- \ref{ass:hill_Technical} hold,
    and $k = n \tau_n \to \infty$, $k/n \to 0$, $n p_n = o(k)$, and $\log(n p_n) = o(\sqrt{k})$.
    If for $j=0,1$, $\sqrt{k} A_j(n/k) \to \lambda_j \in \R$ and
    the extreme value index $\gamma_j>0$,
    then 
    \[
     \frac{\sqrt{k}}{\log(\tau_n/p_n)} \left( \frac{\wh{Q}_j(1-p_n) }{ q_j(1-p_n)} -1  \right)
     = \sqrt{k} (\wh \gamma_j^H - \gamma_j) + o_p(1).
    \]
    In particular, the above implies that $ \wh{Q}_j(1-p_n)- q_j(1-p_n)  \tendsto{P}=0 $ for $j=0,1$.
    \label{lemma:hill_extreme_quantiles}
\end{lemma}


Lemma~\ref{lemma:hill_extreme_quantiles} 
(and the main result Theorem~\ref{thm:hill_extreme_qte})
allows that $n p_n \to 0$, but it cannot converge to zero arbitrarily fast as $\log(n p_n) = o(\sqrt{k})$. 
This is reasonable since it means that there are limitations on how far the extrapolation can be pushed.
The other rate condition $n p_n = o(k)$ is also natural since we are interested in the case where $p_n$ converges to $0$ much faster than $\tau_n$.
These rate conditions are standard \citep[see, e.g.,][]{de2007extreme}.

We already know from Theorem~\ref{thm:hill} that the causal Hill estimator is asymptotically normal.
Thus, to show asymptotic normality of the extremal QTE estimator $\wh{\delta}(1-p_n)$ in \eqref{ExtremeQTEestimator},
the only remaining difficulty is that $\wh{Q}_1(1-p_n)$ and $\wh{Q}_0(1-p_n)$ can have
different normalizing factors and convergence rates. 
This is problematic if the ratio of the normalizing factors oscillates.
\cite{zhang2018extremal} encountered the same issue, 
and he followed the idea from 
\cite{chernozhukov2011inference}
to construct a feasible normalizing factor under the assumption that the ratio of normalizing factors converges.
We proceed similarly. Specifically, based on Lemma \ref{lemma:hill_extreme_quantiles}, we introduce the following normalizing factor
\begin{align}
    \label{NormalizingFactorQTE}
	\wh{\beta}_{n} := \frac{\sqrt{k}}{\log(\tau_n/p_n) \max\{ \wh Q_1(1-p_n), \wh Q_0(1-p_n) \}},
\end{align}
and make the following assumption.
\begin{assumption} \label{ass:normal_conv} 
	
	\[
	\frac{q_1(1-\tau)}{q_0(1-\tau)}
	\to \kappa \in [0, +\infty]
	\text{ as } \tau \to 0.
	\]
\end{assumption}
Assumption~\ref{ass:normal_conv} states that the tails of the potential outcome distributions are either comparable or that one of them is heavier than the other. This is a fairly standard assumption satisfied by many models. 

Using the normalizing factor \eqref{NormalizingFactorQTE}, we can show asymptotic normality of the 
extremal QTE estimator.
\begin{theorem} \label{thm:hill_extreme_qte}
	Suppose that Assumptions~\ref{ass:1} -- \ref{ass:normal_conv}
	and \ref{ass:sieve} -- \ref{ass:hill_Technical} hold,
	and $k = n \tau_n \to \infty$, $k/n \to 0$,  $n p_n = o(k)$ and $\log(n p_n) = o(\sqrt{k})$.
	If for $j=0,1$, $\sqrt{k} A_j(n/k) \to \lambda_j \in \R$ and
	the extreme value index $\gamma_j>0$, then
	\[
	\wh{\beta}_n \left( \wh{\delta}(1-p_n) - \delta(1-p_n) \right)
	\tendsto{D} \N( \mu, \sigma^2), 
	\]
	where $\mu = v_{\kappa}^T w_{\lambda, \rho}$ and $\sigma^2 = v_{\kappa}^T B \Sigma B^T v_{\kappa}$ 
	with $B$ and $\Sigma$ defined as in Theorem~\ref{thm:hill_normality}, and
	$$ v_\kappa = \left(
	\begin{matrix}
	\min\{1, \kappa\} \\
	-\min\{1, {1}/{\kappa} \} \\
	\end{matrix}
	\right), \qquad w_{\lambda, \rho} = \left(
	\begin{matrix}
	\lambda_1/(1-\rho_1) \\
	\lambda_0/(1-\rho_0)  \\
	\end{matrix}
      \right).
      $$
\end{theorem}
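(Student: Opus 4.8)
The plan is to reduce the asymptotic behaviour of $\wh{\delta}(1-p_n)$ entirely to the joint asymptotics of the causal Hill estimators established in Theorem~\ref{thm:hill_normality}, using Lemma~\ref{lemma:hill_extreme_quantiles} as the bridge. First I would write
\[
\wh{\delta}(1-p_n) - \delta(1-p_n)
= \bigl(\wh{Q}_1(1-p_n) - q_1(1-p_n)\bigr) - \bigl(\wh{Q}_0(1-p_n) - q_0(1-p_n)\bigr),
\]
and rewrite each summand via Lemma~\ref{lemma:hill_extreme_quantiles} as
\[
\wh{Q}_j(1-p_n) - q_j(1-p_n)
= q_j(1-p_n)\,\frac{\log(\tau_n/p_n)}{\sqrt{k}}\Bigl(\sqrt{k}(\wh\gamma_j^H - \gamma_j) + o_p(1)\Bigr).
\]
Multiplying by the normalizing factor $\wh{\beta}_n$ from~\eqref{NormalizingFactorQTE} cancels the deterministic factor $\log(\tau_n/p_n)/\sqrt{k}$, leaving
\[
\wh{\beta}_n\bigl(\wh{Q}_j(1-p_n) - q_j(1-p_n)\bigr)
= \frac{q_j(1-p_n)}{\max\{\wh{Q}_1(1-p_n),\wh{Q}_0(1-p_n)\}}\Bigl(\sqrt{k}(\wh\gamma_j^H - \gamma_j) + o_p(1)\Bigr).
\]
Here the rate conditions $np_n = o(k)$ and $\log(np_n) = o(\sqrt{k})$ are used only through Lemma~\ref{lemma:hill_extreme_quantiles}.

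The next step is to control the random coefficient $c_{j,n} := q_j(1-p_n)/\max\{\wh{Q}_1(1-p_n),\wh{Q}_0(1-p_n)\} \in [0,1]$. Since $\wh{Q}_j(1-p_n)/q_j(1-p_n) \tendsto{P} 1$ by Lemma~\ref{lemma:hill_extreme_quantiles}, the denominator satisfies $\max\{\wh{Q}_1(1-p_n),\wh{Q}_0(1-p_n)\} = \max\{q_1(1-p_n),q_0(1-p_n)\}\,(1+o_p(1))$ (using that $\max$ of two nonnegative sequences multiplied by $1+o_p(1)$ factors is again multiplied by $1+o_p(1)$). Combined with Assumption~\ref{ass:normal_conv}, which gives $q_1(1-p_n)/q_0(1-p_n) \to \kappa$, a short case distinction over $\kappa \in [0,+\infty]$ — including the degenerate endpoints — shows $q_1(1-p_n)/\max\{q_1,q_0\} \to \min\{1,\kappa\}$ and $q_0(1-p_n)/\max\{q_1,q_0\} \to \min\{1,1/\kappa\}$, hence $c_{1,n} \tendsto{P} \min\{1,\kappa\}$ and $c_{0,n} \tendsto{P} \min\{1,1/\kappa\}$. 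Since the $c_{j,n}$ are bounded and $\sqrt{k}(\wh\gamma_j^H - \gamma_j) = O_p(1)$ by Theorem~\ref{thm:hill_normality}, the products simplify to $\wh{\beta}_n(\wh{Q}_j(1-p_n) - q_j(1-p_n)) = c_{j,n}\sqrt{k}(\wh\gamma_j^H - \gamma_j) + o_p(1)$, and replacing $c_{j,n}$ by its limit absorbs another $o_p(1)$.

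Combining the two summands (with the minus sign on the $j=0$ term) then yields
\[
\wh{\beta}_n\bigl(\wh{\delta}(1-p_n) - \delta(1-p_n)\bigr)
= v_\kappa^T \begin{pmatrix}\sqrt{k}(\wh\gamma_1^H - \gamma_1)\\[2pt] \sqrt{k}(\wh\gamma_0^H - \gamma_0)\end{pmatrix} + o_p(1),
\]
with $v_\kappa = (\min\{1,\kappa\},\,-\min\{1,1/\kappa\})^T$ exactly as in the statement. By Theorem~\ref{thm:hill_normality} the vector $\sqrt{k}(\wh\gamma_1^H - \gamma_1,\wh\gamma_0^H - \gamma_0)^T$ converges in distribution to a Gaussian with mean $w_{\lambda,\rho}$ and covariance $B\Sigma B^T$; applying the continuous mapping theorem to the linear map $x \mapsto v_\kappa^T x$ and then Slutsky's theorem to absorb the $o_p(1)$ term gives convergence to $\N(v_\kappa^T w_{\lambda,\rho},\, v_\kappa^T B\Sigma B^T v_\kappa) = \N(\mu,\sigma^2)$, which is the claim.

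The main obstacle is the second step: the denominator of $\wh{\beta}_n$ is the random quantity $\max\{\wh{Q}_1(1-p_n),\wh{Q}_0(1-p_n)\}$, so the implicit ratio of normalizing factors for the two potential outcomes could oscillate, and one must rule this out. Assumption~\ref{ass:normal_conv} is exactly what forces convergence of the coefficients $c_{j,n}$, but the argument still requires a careful case analysis over $\kappa \in [0,+\infty]$, the endpoints $\kappa \in \{0,+\infty\}$ being the delicate ones: there the coefficient of one of the two Hill-estimator contributions tends to $0$, so that contribution is asymptotically negligible and the limit is driven by the other potential outcome alone. Everything else is routine bookkeeping, since the analytically hard parts — the asymptotic linearization of $\wh{Q}_j(1-p_n)$ in terms of $\wh\gamma_j^H$ and the joint central limit theorem for the causal Hill estimators — are already provided by Lemma~\ref{lemma:hill_extreme_quantiles} and Theorem~\ref{thm:hill_normality}.
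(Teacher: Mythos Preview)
Your proposal is correct and follows essentially the same route as the paper's proof: reduce each summand via Lemma~\ref{lemma:hill_extreme_quantiles} to $\sqrt{k}(\wh\gamma_j^H-\gamma_j)$, use $\wh Q_j/q_j\tendsto{P}1$ together with Assumption~\ref{ass:normal_conv} to show that the random coefficients converge to the entries of $v_\kappa$, and then apply Theorem~\ref{thm:hill_normality} with the continuous mapping theorem and Slutsky. The only cosmetic difference is that the paper factors out $\wh Q_j(1-p_n)$ rather than $q_j(1-p_n)$ in the coefficient, but since their ratio tends to $1$ in probability the two decompositions are equivalent.
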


Due to the asymptotic bias of $\widehat\gamma_j^H$ (see Theorem~\ref{thm:hill_normality}), there is also an asymptotic bias of the extremal QTE estimator $\widehat{\delta}(1-p_n)$, which affects the validity of our later proposed confidence interval (see \eqref{ConfidenceInterval}). This asymptotic bias equals $0$ if $\sqrt{k} A_j(n/k) \to 0$. Recall that $\lim_{t \to \infty} A_j(t) = 0$ by the second-order regular variation assumption (see Assumption~\ref{ass:second_order_regular_var}), so $\sqrt{k} A_j(n/k) \to 0$ holds if $k$ grows not too fast. The rate at which $A_j(n/k)$ tends to zero is unknown, and we therefore advise being conservative in the sense that one should choose $k$ (or equivalently, $\tau_n$) rather small in practice to ensure that $\sqrt{k} A_j(n/k) \to 0$, and hence the asymptotic bias is negligible.

\subsection{Variance Estimation and Confidence Intervals} \label{sec:variance}
 
In order to conduct statistical inference based on Theorem~\ref{thm:hill_extreme_qte}, a consistent estimator of the asymptotic variance $\sigma^2$ is needed.
The main difficulty in estimating $\sigma^2$ lies in estimating the corresponding matrix $\Sigma$, 
which is the limit of the covariance matrices of the influence functions. 
Recall from Remark~\ref{rmk:hill_information_negligible} that these influence functions contain terms describing the information gain from nonparametric estimation of the propensity score. 
\cite{firpo2007efficient} encountered a similar issue and he proposed a nonparametric regression approach to estimate the contribution of the information gain to the variance.
In this paper, however, we will not go in this direction. Instead, we show that under suitable assumptions, the information gain for the proposed extremal QTE estimator is actually negligible, which can simplify the covariance matrix needed to be estimated, and thus a simpler and computational cheaper method can be proposed.
Specifically, we require the following assumption:
\begin{assumption} \label{ass:squared_hill_tozero} 
	~\\
	For $j=0,1$ 
	\begin{align*}
	\frac{1}{\tau_n} \ERW{ P(Y(j) > q_j(1-\tau_n) \mid  X)^2}  \to 0,
	\quad \text{and} \quad
	\frac{1}{\tau_n} \ERW{ \ERW{S_{j,n}\mid X}^2}  \to 0,
	\end{align*}
	where
	\[
	S_{j,n} := 
	\gamma_j \log\left(\frac{\tau_n}{1-F_j(Y(j))}\right) \Ind{Y(j) >  q_j(1-\tau_n)} .
	\]
\end{assumption}

Lemma~\ref{lemma:s_order} in the Supplementary Material shows that both $P(Y(j) > q_j(1-\tau_n) \mid  X)$ and
$\ERW{S_{j,n}\mid X}$ are of order $O_p(\tau_n)$.
Hence Assumption~\ref{ass:squared_hill_tozero} holds under suitable integrability conditions, and Section~\ref{appendix:example} in the Supplementary Material presents a concrete example where it is satisfied.
We propose the following variance estimator
\begin{align}\label{var_estimator}
	\wh \sigma^2 := \wh v_{\kappa}^T \wh B \wh \Sigma \wh B^T \wh v_{\kappa},
\end{align}
where 
\begin{equation*}
	\wh B :=  \left(
	\begin{matrix}                                
		1 & 0 & -\wh \gamma_1^H & 0 \\
		0 & 1 & 0 & -\wh \gamma_0^H \\
	\end{matrix}
	\right)
	\text{ and }\
	\wh v_{\kappa} := \left(
	\begin{matrix}
		\min\{1, \wh \kappa \} \\
		-\min\{1, \frac{1}{\wh \kappa} \}
	\end{matrix}
	\right)
	\text{ with }
	\
	\wh{\kappa} := 
	\frac{\wh Q_1(1-p_n)}{\wh Q_0(1-p_n)},
\end{equation*}
and $\wh \Sigma$ is defined in \eqref{covariance_est_ass6} and its entries are estimated using inverse propensity score weighting; see Section~\ref{appendix:variance} of the Supplementary Material for the for details.
The following result shows that this estimator is consistent.


\begin{theorem} \label{thm:hill_variance_estimation}
	Suppose that Assumptions~\ref{ass:1} -- \ref{ass:hill_Technical} hold,
	and $k = n \tau_n \to \infty$, $k/n \to 0$,  $n p_n = o(k)$ and $\log(n p_n) = o(\sqrt{k})$.
	If for $j=0,1$, $\sqrt{k} A_j(n/k) \to \lambda_j \in \R$ and
	the extreme value index $\gamma_j>0$,
	then
	\[
	\wh \sigma^2 \tendsto{P} \sigma^2, \qquad n\to\infty,
	\]
	where $\sigma^2$ is the asymptotic variance of the extremal QTE estimator in Theorem~\ref{thm:hill_extreme_qte} and $\wh \sigma^2$ is defined by~\eqref{var_estimator}.
\end{theorem}

Based on 
Theorem~\ref{thm:hill_extreme_qte} and the variance estimator \eqref{var_estimator},
we propose the following approximate $(1-\alpha)-$confidence interval of the extremal QTE:
\begin{align}\label{ConfidenceInterval}
\left[\wh \delta(1-p_n) \pm z_{(1-\alpha/2)} \frac{\wh \sigma}{\wh \beta_n} \right],
\end{align}
where $\wh \beta_n$ is the normalization constant in~\eqref{NormalizingFactorQTE}
and $z_{(1-\alpha/2)}$ is the $(1-\alpha/2)$-quantile of the standard normal distribution.

Since Assumption~\ref{ass:squared_hill_tozero} only affects the information gain terms in the covariance matrix $\Sigma$ and these terms reduce the variance,
even Assumption \ref{ass:squared_hill_tozero} does not hold, 
the variance estimator is conservative, in the sense that it is still consistent to some quantity $\wt\sigma^2$
(see \eqref{variance_conservative} and the proof of Theorem~\ref{thm:hill_variance_estimation}) that is
larger than the true variance $\sigma^2$; see Lemma~\ref{lemma:hill_conservative} below. Therefore, even if Assumption \ref{ass:squared_hill_tozero} does not hold, it is safe to use our estimator $\wh \sigma^2$.
\begin{lemma} \label{lemma:hill_conservative}
	Let $\sigma^2$ be the asymptotic variance of the extremal QTE estimator in Theorem~\ref{thm:hill_extreme_qte} and $\wt\sigma^2$ be defined in \eqref{variance_conservative},
	then $\wt\sigma^2 \geq \sigma^2$.
\end{lemma}

\section{Simulations} \label{sec:simulations}

We conducted simulations to examine the finite sample behavior of 
our proposed extremal QTE estimator \eqref{ExtremeQTEestimator} 
and the related confidence interval \eqref{ConfidenceInterval}, and to compare them to other methods. 
All simulations were carried out in R and the code is available at Github. 

\subsection{Simulation Set-up}
Throughout the simulation study, we consider univariate covariate $X$ as \cite{zhang2018extremal}.
Specifically, let $X$ and $U$ be uniformly distributed random variables on $[0,1]$ and assign the treatment by $D = \Ind{U \le \Pi(X)}$ with propensity score $\Pi(x) = 0.5x^2 + 0.25$. We generate the outcomes from the following three models:
\begin{align*}
\text{H}_1:
\begin{cases}
Y(1) & = 5 S \cdot (1+X) \\
Y(0) &= S \cdot (1+X) 
\end{cases}
\quad
\text{H}_2:
\begin{cases}
Y(1) & = C_2 \cdot  \exp(X) \\
Y(0) &= C_3 \cdot \exp(X) 
\end{cases}
\quad
\text{H}_3:
\begin{cases}
Y(1) & = P_{1.75+X, 2} \\ 
Y(0) &= P_{1.75+5X, 1} 
\end{cases}
\end{align*}
where $S$ follows a Student-t distribution with $3$ degrees of freedom,
$C_s$ is Fr\'echet distributed with shape parameter $s$, location $0$ and scale $1$,
and  $P_{a, b}$ is Pareto distributed with shape parameter $a$ and scale $b$.

The EVIs of the potential outcome distributions are $\gamma_1=\gamma_0 = 1/3$ for model $\text{H}_1$  and $\gamma_1 = 1/2$ and $\gamma_0 = 1/3$ for model $\text{H}_2$. 
For model $\text{H}_3$, a small calculation yields $\gamma_1 = \gamma_0 = 4/7$.
Models $\text{H}_2$ and $\text{H}_3$ are more heavy-tailed than $\text{H}_1$.

We consider data sets with sample size $n\in \{1000, 2000, 5000\}$ and aim to estimate the $(1-p_n)$-QTE with $p_n \in \{5/n, 1/n, 5/(n \log n)\}$.
Throughout, the target coverage for the confidence intervals is $90\%$ (i.e., $\alpha=0.1$).
For all bootstrap based methods, we use $1000$ bootstrapped data sets.
The empirical squared error and coverage are calculated based on $1000$ sampled data sets.

\subsection{Implemented Methods}
For point estimation of the extremal QTE, we compare the squared errors of three methods:
\begin{itemize}
	\item Firpo--Zhang estimator: the non-extrapolated, empirical QTE estimator $\wh q_1(1-p_n) - \wh q_0(1-p_n)$,
	where the quantile estimators are defined by~\eqref{eq:firpo_estimator}. It was proposed by \cite{firpo2007efficient} and further studied by \cite{zhang2018extremal}.
	\item Extrapolation with a causal Pickands estimator: 
	\begin{align} \label{extrapolation_pickands}
	\wh q_1(1-\tau_n) (\tau_n/p_n)^{\wh \gamma_1^P}- \wh q_0(1-\tau_n) (\tau_n/p_n)^{\wh \gamma_0^P},
	\end{align}
	where
	\[
	\wh{\gamma}_j^P = \frac{1}{\log(2) }
	\log \left( \frac{ \hat{q}_j(1-  \tau_n) - \hat{q}_j(1-2 \tau_n)}{ \hat{q}_j(1-2 \tau_n) -  \hat{q}_j(1-4\tau_n)} \right), \quad j=0,1.
	\]
	This estimator is based on quantile extrapolation with the causal Pickands EVI estimator $\wh{\gamma}_j^P$ proposed in the supplementary materials of \cite{zhang2018extremal}.
	\item Extremal QTE estimator (see~\eqref{ExtremeQTEestimator}): our proposed estimator based on quantile extrapolation with the causal Hill EVI estimator. 
\end{itemize}

For the confidence interval of the extremal QTE, we compare the empirical coverages of the following four methods:
\begin{itemize}
	\item Zhang: the $b$ out of $n$ bootstrap confidence interval proposed by \cite{zhang2018extremal} that builds on the Firpo--Zhang estimator. We use the ``with replacement" version as \cite{zhang2018extremal} suggested in his paper. 
	Its tuning parameters are described in Section~\ref{appendix:tuningZhang} of the Supplementary Material.
	\item BS Pickands: a bootstrap based method with the bootstrap confidence interval
	\begin{align}\label{simu:bootstrap-interval}
		\left[\wh \delta'(1-p_n) \pm z_{(1-\alpha/2)} \wh \sigma_* \right],
	\end{align}
	where $\wh \delta'(1-p_n)$ is the point estimate \eqref{extrapolation_pickands} of the extremal QTE based on the full sample and $\wh \sigma_*$ is the estimated standard deviation of this estimate via the non-parametric bootstrap.
	\item BS Hill: a non-parametric bootstrap based method as \eqref{simu:bootstrap-interval}, 
	but using  \eqref{ExtremeQTEestimator} to obtain the point estimate.
	\item Extremal QTE CI: our proposed confidence interval \eqref{ConfidenceInterval}.
\end{itemize}

For the intermediate quantile level $\tau_n= k/n$ of the extrapolation based methods, we use $k=n^{0.65}$. 
This value guarantees that all rate assumptions about $k, n$ and $p_n$ in Theorems~\ref{thm:hill_extreme_qte} and \ref{thm:hill_variance_estimation} are satisfied.
An additional sensitivity analysis can be found in Section~\ref{appendix:kdependency} of the Supplementary Material.
We note that $k=n^{0.65}$ may not be optimal in all settings, and we use it in our simulations mostly for convenience.
Choosing the optimal data-dependent $\tau_n$ is a difficult problem in extreme value theory.
In practice, we recommend choosing it by plotting the estimates using different $\tau_n$ and selecting the $\tau_n$ in the first stable region of the plot (e.g., Resnick 2007). Please also see the real data application in Section~\ref{sec:application} for an illustration of this approach.


For the size of sieve basis functions $h_n$ in the propensity score estimation, 
we use $h_n = \floor{2 n^{1/11}}$.
Note that this choice is only for the case of univariate $X$, and please see Section~\ref{appendix:sieve} of the Supplementary Material for some justifications about this choice.
Specifically, we have $h_{1000}=h_{2000} = 3$ and  $h_{5000}=4$.
In practice, people may choose the sieve basis functions according to their specific problem and use model selection methods such as cross-validation. Please see the real data application in Section~\ref{sec:application} for an illustration.






\subsection{Simulation Results}
The squared errors of the point estimates are shown in Figure~\ref{fig:squared_error}.
We see that our proposed extremal QTE estimator generally performs better than the other two methods. 
In particular, it exhibits the lowest mean squared error (MSE) over almost all settings.
This is especially true for the more heavy-tailed models $\text{H}_2$ and $\text{H}_3$, in which our method greatly outperforms the others.
The extrapolation based method using the Pickands EVI estimator has the worst performance,
which is not surprising as the Pickands estimator is known to suffer from high variance in heavy-tailed settings.
This also indicates that choosing a suitable EVI estimator is crucial for the extrapolation based method. 
\begin{figure}
	\centering
	\includegraphics[scale=0.34]{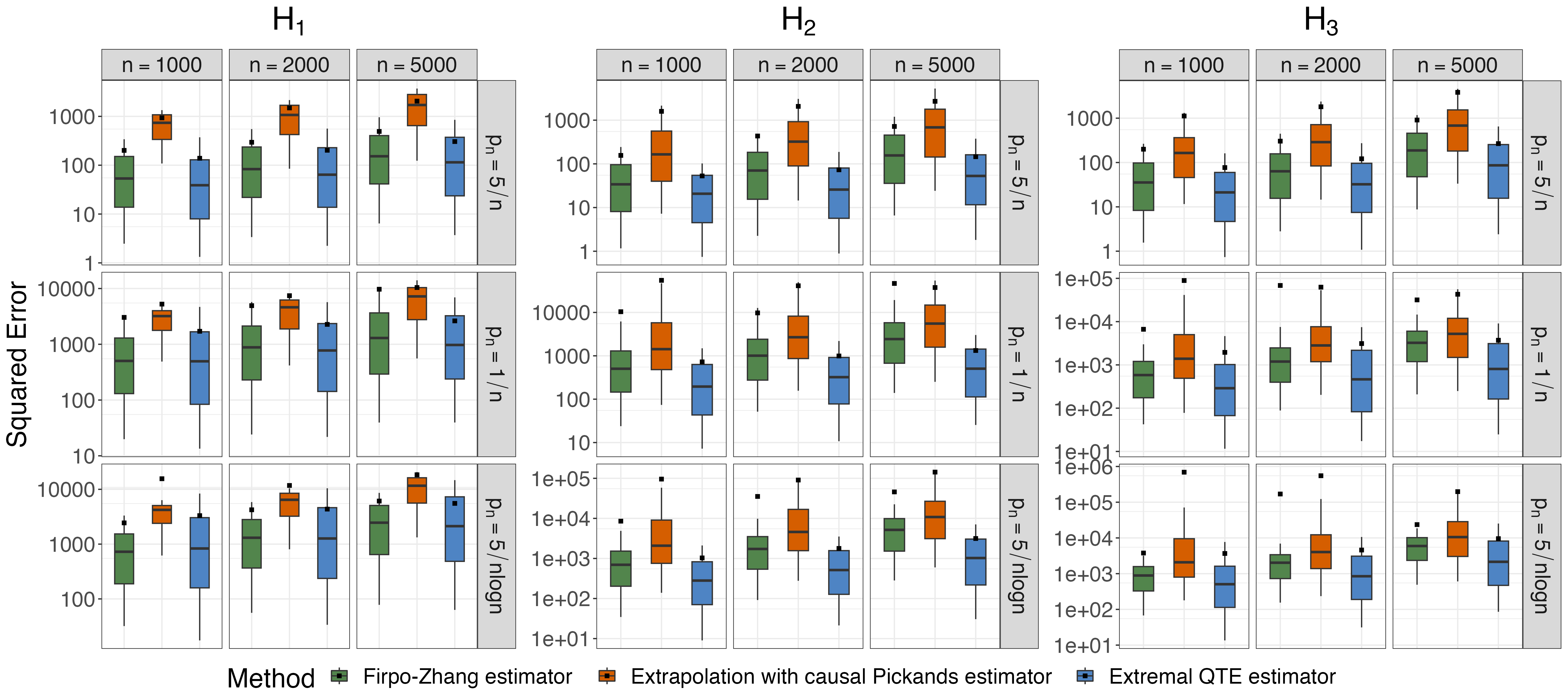}
	\caption[]
	{Box plots of the squared errors of different point estimators for the extremal QTE.
		The whiskers of the box plots correspond to the 0.1 and 0.9
		quantiles, the black horizontal line in the box corresponds to the median,
		and the square indicates the mean.
		Please note that the log-scale is used for the $y$-axis.} 
	\label{fig:squared_error}
\end{figure}

\begin{figure}
    \centering
    \includegraphics[scale=0.34]{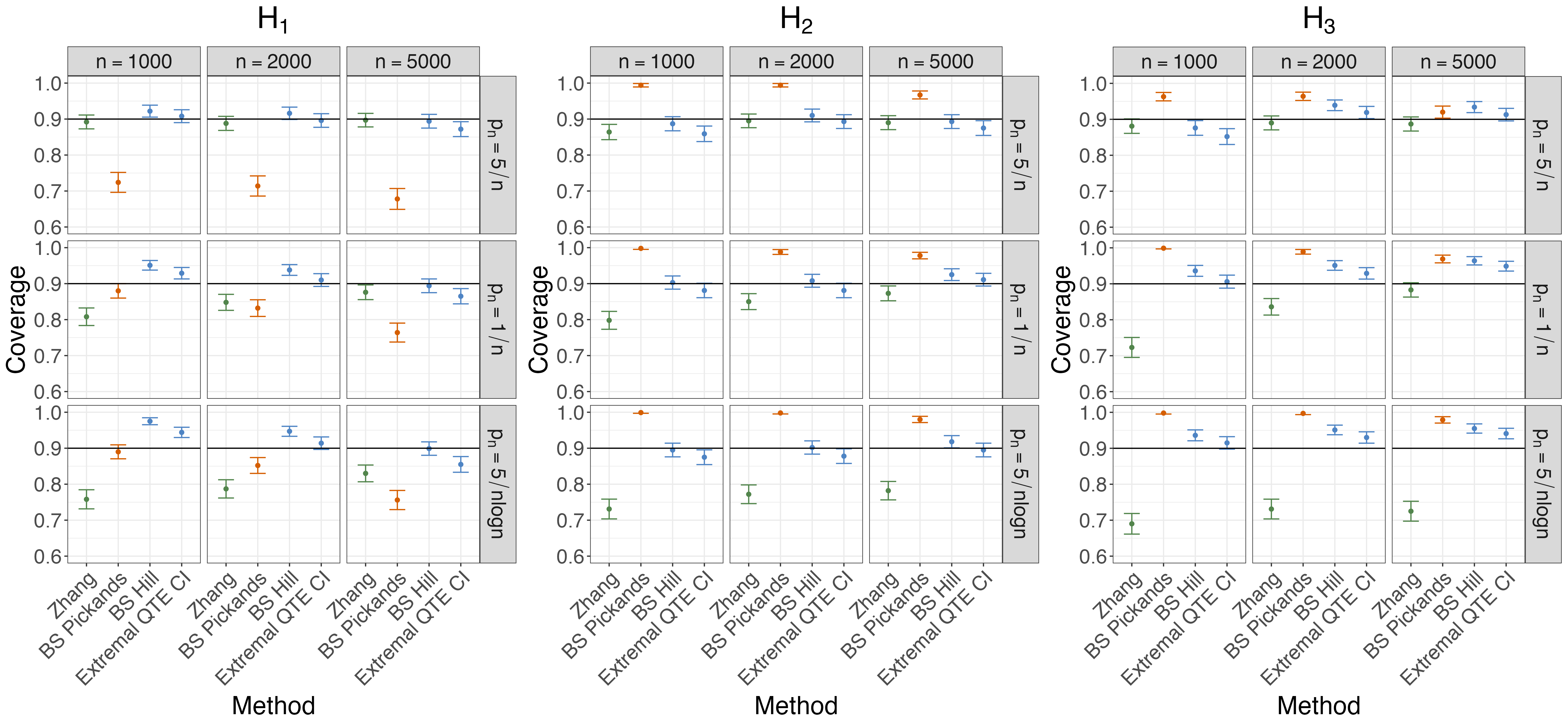}
    \caption[]
    {Coverage of different methods to construct confidence intervals. 
    	The target coverage is $90\%$ and is indicated by the solid horizontal black lines. 
    	The dots indicate the empirical coverage over $1000$ simulations,
    	and the error bars indicate an approximate normal based $95\%$-confidence interval for the true coverage over $1000$ simulations. 
    }
    \label{fig:coverage}
  \end{figure}

Figure~\ref{fig:coverage} compares the empirical coverage of the different confidence intervals.
We see that ``Zhang"
performs quite well for the not so extreme quantile level $p_n=5/n$, 
but that it can undercover largely when the quantile index becomes more extreme. 
Such results were expected as this method is designed for the moderately extreme case where $n p_n \to d > 0$, and not for the extreme case where $n p_n \to 0$.
The bias of the Firpo--Zhang estimator in the extreme case could be a reason for this undercoverage.
We see that ``BS Pickands" may suffer from both undercoverage (e.g., setting $p_n=5/n$ of $\text{H}_1$) and overcoverage (e.g., setting $p_n=5/n$ of $\text{H}_2$).
In comparison, ``BS Hill" and our proposed confidence interval ``Extremal QTE CI" perform better,
with empirical coverage close to the nominal level. 
The performance of ``BS Hill" shows that the bootstrap based method may be valid, and it would be interesting to formalize this in future research.
Compared to ``BS Hill", ``Extremal QTE CI" has  computational advantage, as it does not require bootstrapping the data.

The asymptotic normality result in Theorem~\ref{thm:hill_extreme_qte} is also confirmed by the normal Q-Q plot in Section~\ref{appendix:qqplot} of the Supplementary Material.

\section{Extremal Quantile Treatment Effect of College Education on Wages} \label{sec:application}

The causal effect of education on wage has been studied extensively in the literature \citep[e.g.,][]{ card1995geographic, heckman1998instrumental, card1999causal, messinis2013returns, heckman2018returns}.
It is well-known that wage exhibits heavy-tailed behavior, 
and there can be considerable confounding between education and wage \citep[e.g.,][]{griliches1977estimating, heckman2006effects}.
In this section, we apply our method to obtain point estimates and confidence intervals of the extremal QTE of college education on wage in the upper tail of the distribution,
where we focus on the 0.99, 0.995, 0.997 and 0.999-QTEs.
As comparison,
we also implement the Firpo--Zhang estimator and the $b$ out of $n$ bootstrap confidence interval of \cite{zhang2018extremal}.

We use data from the National Longitudinal Survey of Youth (NLSY79).
It consists of a representative sample of young Americans who were between 14 and 21 years old at the time of the first interview in 1979, and 
contains a wide range of information about education, adult income, parental background, test scores and behavioral measures of the study participants. 
In particular, we use the NLSY79 data analyzed by \cite{heckman2006effects, heckman2018returns}, which is available from \url{https://www.journals.uchicago.edu/doi/suppl/10.1086/698760}. 
This data set consists of male participants who finished their education before the age of $30$ and were not in the military.
We only consider participants that graduated from high school.
The same (or similar) data set was also used in other literature \cite[e.g.,][]{brand2010benefits, cheng2021heterogeneous, zhou2022attendance}.

The outcome $Y$ is the hourly wage (in US dollar) at age $30$, 
and the treatment $D$ equals $0$ if the person did not receive any college education, and $1$ otherwise.
For the covariates $X$ used for propensity score estimation, 
we follow \cite{heckman2018returns} and consider 
race, region of residence in 1979, urban status in 1979, broken home statue, age in 1979, number of siblings, family income in 1979, education (highest grade completed) of father and mother, scores from the Armed Services Vocational Aptitude Battery (ASVAB) test, and GPAs from 9th grade core subjects (language, math, science and social science). 
This leads to $19$ covariates in total as some of the variables are categorical.
We omit samples with missing values, leading to a data set with $n=805$ samples, among which $432$ $(53.7\%)$ went to college. 
In this data set, the 0.99, 0.995, 0.997 and 0.999 quantiles of hourly wage are 50.47,  60.72, 85.50 and 154.73 US dollar, respectively. 

For the propensity score estimation, considering that there are $19$ covariates and the sample size is $805$, 
we refrain from using too many high-order terms in the sieve method to avoid overfitting.
In particular, we consider two approaches.
The first approach, which we refer to as PROP1, uses only linear terms, leading to sieve basis functions $H_{h_n}(x) = (H_{h_n,j}(x))_{j=1,\ldots,h_n} = (1, x_1, \dots, x_{19})$ with $h_n=20$. This approach is equivalent to logistic regression, 
which is widely used in practice for the propensity score estimation, and is the default option in many packages \citep{olmos2015propensity}. 
The second approach, which we refer to as PROP2, allows second-order terms and uses model selection to avoid overfitting.
Specifically, we first apply a model selection procedure on the $19$ covariates.
Then, we do another round of model selection, allowing only first- and second- order terms of all covariates that were selected in the first step. Both model selection steps are implemented using the R package \textit{glmulti} \citep{JSSv034i12} with Akaike's information criteria and a genetic search algorithm.
The resulting model of PROP2 can be found in Section~\ref{appendix:applicationPROP2} of the Supplementary Material.
We use the same estimated propensity scores for all methods,
and we mention that with a larger sample size, one may consider to use more higher-order terms in the sieve method for the propensity score estimation.


To select the tuning parameter $k$ for our method, here we use the approach of plotting the estimated EVI and QTE versus $k$ and then choose $k$ from the first stable region \citep[e.g.,][]{resnick2007heavy}.
The corresponding plots can be found in Section~\ref{appendix:applicationPlots} of the Supplementary Material.
Based on these plots, we choose $k=85$.
Note that the choice used in Section~\ref{sec:simulations} leads to $k=805^{0.65} \approx 77$, resulting in similar confidence intervals as using $k=85$.

\begin{figure}[h!]
	\centering
	\includegraphics[width=8cm, height=7cm]{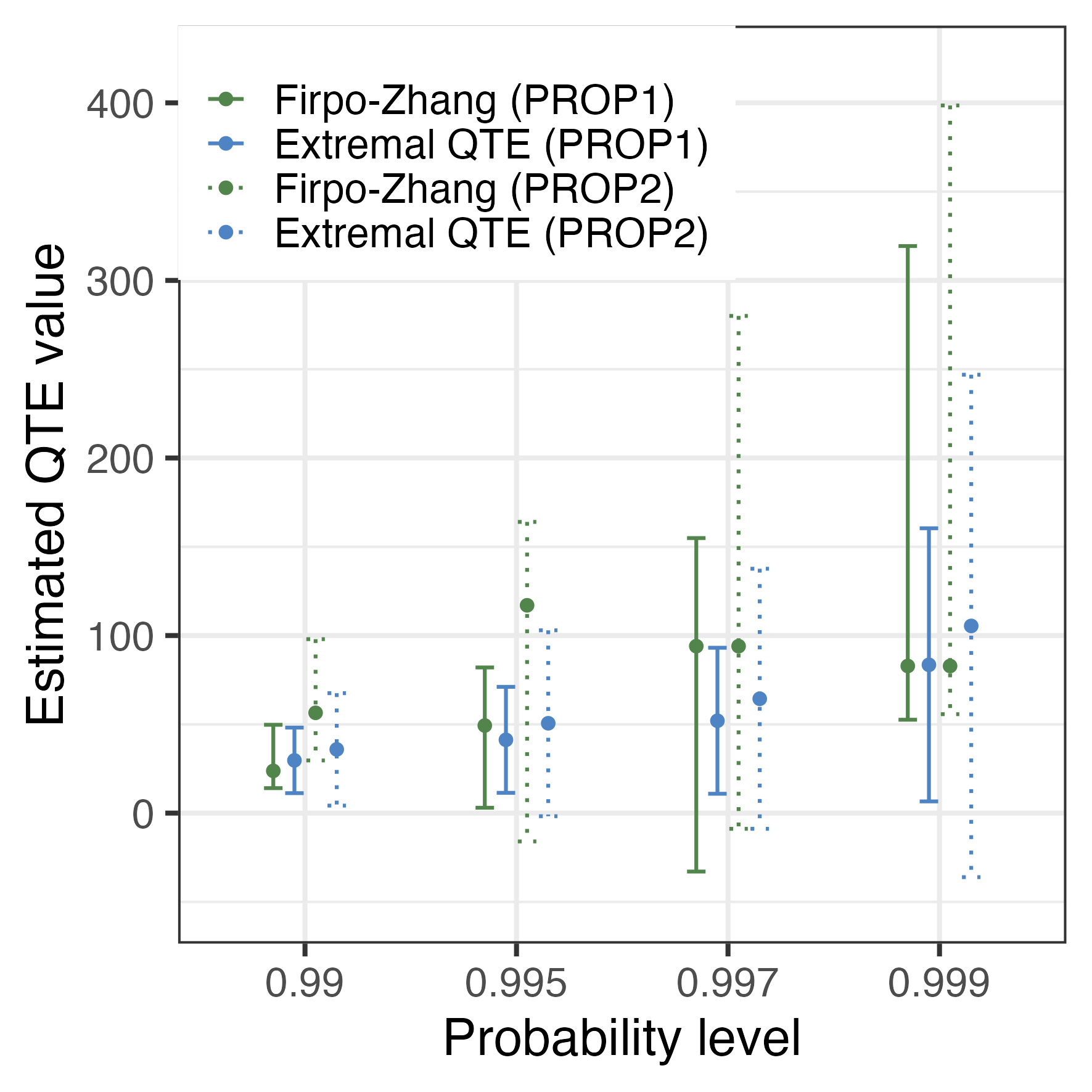}
	\caption[]
	{
    Point estimates and $90\%$-confidence intervals of the extremal QTEs of college education on wage for different quantile levels. 
    Solid and dashed lines denote methods with 
    estimated propensity scores using PROP1 and PROP2, respectively.}
	\label{fig:college_extremal_qte_all}
\end{figure}

Figure~\ref{fig:college_extremal_qte_all} presents the results of the different methods.
Considering the point estimate, we see that both Firpo--Zhang and our method give positive QTE estimates, 
but the corresponding values are quite different in some cases.
In particular, the estimated values of Firpo--Zhang are not monotonically increasing when the quantile levels become more extreme, 
whereas ours are monotonic for these data.
Such monotonicity implies that college education would have a stronger effect on wages for higher quantiles, which seems possible.



The confidence intervals of both methods mostly lie on the positive part, showing strong evidence that the QTEs are positive.
The intervals of our method are considerably narrower than Zhang's $b$ out of $n$ bootstrap intervals for the 0.997- and 0.999-QTEs, a clear advantage of our methodology.
We also observe that the propensity scores estimated by the second approach lead to wider confidence intervals in all cases.
 

At last, we would like to note that the unconfoundedness assumption can not be verified in practice. For example, one may suspect that cognitive ability is not explicitly controlled for in this data example, so the unconfoundedness assumption may not hold. But since we have controlled for many important related covariates, we think that the unconfoundedness assumption is still a suitable approximation. For more discussion we refer to \cite{brand2010benefits}.

\section{Discussion}

We propose a method to estimate the extremal QTE of a binary treatment on a continuous outcome for heavy-tailed distributions under the unconfoundedness assumption. 
Our method, which we call the extremal QTE estimator, builds on the quantile extrapolation approach from extreme value theory.
We use the inverse propensity score weighted intermediate quantile estimates of \cite{firpo2007efficient} 
and our newly proposed causal Hill estimator to extrapolate to extreme quantiles.
We show the asymptotic normality of the causal Hill estimator and the extremal QTE estimator.
In particular, asymptotic normality of the extremal QTE estimator holds for extremal $(1-p_n)$-QTEs, where $n p_n$ may converge to $0$. This is particularly important since it represents a common setting in risk assessment where the quantities of interest are beyond the range of the data.
To the best of our knowledge, our approach is the first that achieves this.
We also develop an estimator for the asymptotic variance which is consistent under suitable assumptions.
This enables us to construct confidence intervals for the extremal QTEs.
Simulations show that our method generally performs well. 



As mentioned before, there is an asymptotic bias term of our proposed extremal QTE estimator $\wh{\delta}(1-p_n)$ (see Theorem~\ref{thm:hill_extreme_qte}), which is due to the asymptotic bias of the causal Hill estimator $\gamma_j^H$. In this paper, we suggest choosing a sufficiently small $k$ so that the asymptotic bias is negligible.
It would be interesting and desired to formally propose bias-corrected versions of $\gamma_j^H$ and $\wh{\delta}(1-p_n)$ in future research.

One potential issue of introducing inverse propensity score weighting to EVI estimators is that it complicates the estimation of the asymptotic variance, making statistical inference difficult.
Bootstrap methods can be useful in practice for constructing confidence intervals for QTEs, as our simulation results suggest.
It is important to study the theoretical validity of such bootstrap based methods in future research.
In particular, it would be interesting to investigate whether the bootstrap based methods are valid even without Assumption~\ref{ass:squared_hill_tozero}.

The proposed method is just the first step towards the goal of doing causal inference for extremes.
The considered causal inference setting is the most common and simplest one: binary treatment with assumed unconfoundedness.
We believe that it is possible to extend it in many ways to fit a range of applications.
For example, it would be interesting to generalize our method to categorical and continuous treatment settings
by using the generalized propensity score \citep{imbens2000role}. 
One may also consider extending our method to other causal inference settings, such as instrumental variable settings that allow for some types of confounding.
At last, quantile extrapolation is not limited to heavy-tailed distributions,
and it would be interesting to extend our proposed extremal QTE estimator to other settings 
where the potential outcome distributions may have lighter tails. 


\section*{Acknowledgments}
Sebastian Engelke was supported by an Eccellenza grant of the Swiss National Science Foundation.

\bibliographystyle{apalike}
\bibliography{extremeQTE}

\begin{thebibliography}{}

\bibitem[Alves et~al., 2007]{alves2007note}
Alves, M.~F., Gomes, M.~I., De~Haan, L., and Neves, C. (2007).
\newblock A note on second order conditions in extreme value theory: linking
  general and heavy tail conditions.
\newblock {\em REVSTAT Statistical Journal}, 5(3):285--304.

\bibitem[Athey et~al., 2021]{athey2021semiparametric}
Athey, S., Bickel, P.~J., Chen, A., Imbens, G., and Pollmann, M. (2021).
\newblock Semiparametric estimation of treatment effects in randomized
  experiments.
\newblock Technical report, National Bureau of Economic Research.

\bibitem[Boucheron and Thomas, 2015]{boucheron2015tail}
Boucheron, S. and Thomas, M. (2015).
\newblock Tail index estimation, concentration and adaptivity.
\newblock {\em Electronic Journal of Statistics}, 9(2):2751--2792.

\bibitem[Brand and Xie, 2010]{brand2010benefits}
Brand, J.~E. and Xie, Y. (2010).
\newblock Who benefits most from college? evidence for negative selection in
  heterogeneous economic returns to higher education.
\newblock {\em American sociological review}, 75(2):273--302.

\bibitem[Bubeck, 2015]{bubeck2015convex}
Bubeck, S. (2015).
\newblock Convex optimization: Algorithms and complexity.
\newblock {\em Foundations and Trends{\textregistered} in Machine Learning},
  8(3-4):231--357.

\bibitem[Calcagno and de~Mazancourt, 2010]{JSSv034i12}
Calcagno, V. and de~Mazancourt, C. (2010).
\newblock glmulti: An r package for easy automated model selection with
  (generalized) linear models.
\newblock {\em Journal of Statistical Software}, 34(12):1–29.

\bibitem[Card, 1995]{card1995geographic}
Card, D. (1995).
\newblock Using geographic variation in college proximity to estimate the
  return to schooling.
\newblock pages 201--222.
\newblock In \emph{Aspects of Labour Market Behaviour: Essays in Honour of John
  Vanderkamp} (Louis N. Christofides, E. Kenneth Grant and Robert Swidinsky,
  eds.).

\bibitem[Card, 1999]{card1999causal}
Card, D. (1999).
\newblock The causal effect of education on earnings.
\newblock {\em Handbook of Labor Economics}, 3:1801--1863.

\bibitem[Cheng et~al., 2021]{cheng2021heterogeneous}
Cheng, S., Brand, J.~E., Zhou, X., Xie, Y., and Hout, M. (2021).
\newblock Heterogeneous returns to college over the life course.
\newblock {\em Science advances}, 7(51):eabg7641.

\bibitem[Chernozhukov, 2005]{chernozhukov2005extremal}
Chernozhukov, V. (2005).
\newblock Extremal quantile regression.
\newblock {\em The Annals of Statistics}, 33(2):806--839.

\bibitem[Chernozhukov and Fern{\'a}ndez-Val, 2011]{chernozhukov2011inference}
Chernozhukov, V. and Fern{\'a}ndez-Val, I. (2011).
\newblock Inference for extremal conditional quantile models, with an
  application to market and birthweight risks.
\newblock {\em The Review of Economic Studies}, 78(2):559--589.

\bibitem[Chernozhukov et~al., 2016]{chernozhukov2016extremal}
Chernozhukov, V., Fern{\'a}ndez-Val, I., and Kaji, T. (2016).
\newblock Extremal quantile regression: An overview.
\newblock In \emph{Handbook of Quantile Regression} (R. Koenker, V.
  Chernozhukov, X. He, and L. Peng, eds.). Chapman and Hall.

\bibitem[de~Haan, 1970]{dehaan1970regular}
de~Haan, L. (1970).
\newblock {\em On Regular Variation and Its Application to the Weak Convergence
  of Sample Extremes}.
\newblock Number~63 in Mathematical Centre tracts. Mathematisch Centrum.

\bibitem[de~Haan and Ferreira, 2007]{de2007extreme}
de~Haan, L. and Ferreira, A. (2007).
\newblock {\em Extreme Value Theory: An Introduction}.
\newblock Springer Series in Operations Research and Financial Engineering. New
  York: Springer New York.

\bibitem[Dess{\`\i} et~al., 2018]{dessi2018exposure}
Dess{\`\i}, A., Corona, L., Pintus, R., and Fanos, V. (2018).
\newblock Exposure to tobacco smoke and low birth weight: from epidemiology to
  metabolomics.
\newblock {\em Expert Review of Proteomics}, 15(8):647--656.

\bibitem[Doksum, 1974]{doksum1974empirical}
Doksum, K. (1974).
\newblock Empirical probability plots and statistical inference for nonlinear
  models in the two-sample case.
\newblock {\em The Annals of Statistics}, 2(2):267--277.

\bibitem[Drees and Kaufmann, 1998]{drees1998selecting}
Drees, H. and Kaufmann, E. (1998).
\newblock Selecting the optimal sample fraction in univariate extreme value
  estimation.
\newblock {\em Stochastic Processes and their Applications}, 75(2):149--172.

\bibitem[Durrett, 2013]{durrett2013probability}
Durrett, R. (2013).
\newblock {\em Probability: Theory and Examples}, volume 4.1.
\newblock Cambridge: Cambridge University Press.

\bibitem[Easterling et~al., 2016]{EASTERLING201617}
Easterling, D.~R., Kunkel, K.~E., Wehner, M.~F., and Sun, L. (2016).
\newblock Detection and attribution of climate extremes in the observed record.
\newblock {\em Weather and Climate Extremes}, 11:17--27.
\newblock Observed and Projected (Longer-term) Changes in Weather and Climate
  Extremes.

\bibitem[Embrechts et~al., 1997]{emb1997}
Embrechts, P., Kl\"{u}ppelberg, C., and Mikosch, T. (1997).
\newblock {\em Modelling Extremal Events: for Insurance and Finance}.
\newblock Springer, London.

\bibitem[Firpo, 2007]{firpo2007efficient}
Firpo, S. (2007).
\newblock Efficient semiparametric estimation of quantile treatment effects.
\newblock {\em Econometrica}, 75(1):259--276.

\bibitem[Gissibl and Kl{\"u}ppelberg, 2018]{gis2018}
Gissibl, N. and Kl{\"u}ppelberg, C. (2018).
\newblock Max-linear models on directed acyclic graphs.
\newblock {\em Bernoulli}, 24(4A):2693--2720.

\bibitem[Gissibl et~al., 2018]{gis2018a}
Gissibl, N., Klüppelberg, C., and Otto, M. (2018).
\newblock Tail dependence of recursive max-linear models with regularly varying
  noise variables.
\newblock {\em Econometrics and Statistics}, 6:149 -- 167.

\bibitem[Gnecco et~al., 2021]{gnecco2021causal}
Gnecco, N., Meinshausen, N., Peters, J., and Engelke, S. (2021).
\newblock Causal discovery in heavy-tailed models.
\newblock {\em The Annals of Statistics}, 49(3):1755--1778.

\bibitem[Griliches, 1977]{griliches1977estimating}
Griliches, Z. (1977).
\newblock Estimating the returns to schooling: Some econometric problems.
\newblock {\em Econometrica: Journal of the Econometric Society}, 45(1):1--22.

\bibitem[Hannart et~al., 2016]{han2016}
Hannart, A., Pearl, J., Otto, F. E.~L., Naveau, P., and Ghil, M. (2016).
\newblock Causal counterfactual theory for the attribution of weather and
  climate-related events.
\newblock {\em Bulletin of the American Meteorological Society}, 97(1):99--110.

\bibitem[Heckman and Vytlacil, 1998]{heckman1998instrumental}
Heckman, J. and Vytlacil, E. (1998).
\newblock Instrumental variables methods for the correlated random coefficient
  model: Estimating the average rate of return to schooling when the return is
  correlated with schooling.
\newblock {\em Journal of Human Resources}, 33(4):974--987.

\bibitem[Heckman et~al., 2018]{heckman2018returns}
Heckman, J.~J., Humphries, J.~E., and Veramendi, G. (2018).
\newblock Returns to education: The causal effects of education on earnings,
  health, and smoking.
\newblock {\em Journal of Political Economy}, 126(S1):S197--S246.

\bibitem[Heckman et~al., 2006]{heckman2006effects}
Heckman, J.~J., Stixrud, J., and Urzua, S. (2006).
\newblock The effects of cognitive and noncognitive abilities on labor market
  outcomes and social behavior.
\newblock {\em Journal of Labor economics}, 24(3):411--482.

\bibitem[Hill, 1975]{hill1975simple}
Hill, B.~M. (1975).
\newblock A simple general approach to inference about the tail of a
  distribution.
\newblock {\em The Annals of Statistics}, 3(5):1163--1174.

\bibitem[Hirano et~al., 2003]{hirano2003efficient}
Hirano, K., Imbens, G.~W., and Ridder, G. (2003).
\newblock Efficient estimation of average treatment effects using the estimated
  propensity score.
\newblock {\em Econometrica}, 71(4):1161--1189.

\bibitem[Hua and Joe, 2011]{hua2011second}
Hua, L. and Joe, H. (2011).
\newblock Second order regular variation and conditional tail expectation of
  multiple risks.
\newblock {\em Insurance: Mathematics and Economics}, 49(3):537--546.

\bibitem[Imbens, 2000]{imbens2000role}
Imbens, G.~W. (2000).
\newblock The role of the propensity score in estimating dose-response
  functions.
\newblock {\em Biometrika}, 87(3):706--710.

\bibitem[Jana et~al., 2021]{jan2022}
Jana, K., Bhuyan, P., and McCoy, E.~J. (2021).
\newblock Causal analysis at extreme quantiles with application to london
  traffic flow data.

\bibitem[Kosorok, 2007]{kosorok2007introduction}
Kosorok, M.~R. (2007).
\newblock {\em Introduction to empirical processes and semiparametric
  inference}.
\newblock New York: Springer Science \& Business Media.

\bibitem[Lehmann and D'Abrera, 1975]{lehmann1975nonparametrics}
Lehmann, E.~L. and D'Abrera, H.~J. (1975).
\newblock {\em Nonparametrics: statistical methods based on ranks}.
\newblock Toronto: Holden-Day.

\bibitem[Lorentz, 1966]{lorentz1966approximation}
Lorentz, G. (1966).
\newblock {\em Approximation of functions}.
\newblock New York: Holt, Rinehart and Winston.

\bibitem[Madakumbura et~al., 2021]{madakumbura2021anthropogenic}
Madakumbura, G.~D., Thackeray, C.~W., Norris, J., Goldenson, N., and Hall, A.
  (2021).
\newblock Anthropogenic influence on extreme precipitation over global land
  areas seen in multiple observational datasets.
\newblock {\em Nature Communications}, 12(1):1--9.

\bibitem[Matthys et~al., 2004]{mat2004}
Matthys, G., Delafosse, E., Guillou, A., and Beirlant, J. (2004).
\newblock Estimating catastrophic quantile levels for heavy-tailed
  distributions.
\newblock {\em Insurance: Mathematics and Economics}, 34(3):517--537.

\bibitem[Messinis, 2013]{messinis2013returns}
Messinis, G. (2013).
\newblock Returns to education and urban-migrant wage differentials in china:
  {IV} quantile treatment effects.
\newblock {\em China Economic Review}, 26:39--55.

\bibitem[Mhalla et~al., 2020]{mha2020}
Mhalla, L., Chavez-Demoulin, V., and Dupuis, D.~J. (2020).
\newblock Causal mechanism of extreme river discharges in the upper {Danube}
  basin network.
\newblock {\em Journal of the Royal Statistical Society: Series C (Applied
  Statistics)}, 69(4):741--764.

\bibitem[Naveau et~al., 2020]{naveau2020statistical}
Naveau, P., Hannart, A., and Ribes, A. (2020).
\newblock Statistical methods for extreme event attribution in climate science.
\newblock {\em Annual Review of Statistics and its Application}, 7(1):89--110.

\bibitem[Naveau et~al., 2018]{nav2018}
Naveau, P., Ribes, A., Zwiers, F., Hannart, A., Tuel, A., and Yiou, P. (2018).
\newblock Revising return periods for record events in a climate event
  attribution context.
\newblock {\em Journal of Climate}, 31(9):3411--3422.

\bibitem[Newey, 1994]{newey1994asymptotic}
Newey, W.~K. (1994).
\newblock The asymptotic variance of semiparametric estimators.
\newblock {\em Econometrica: Journal of the Econometric Society},
  62(6):1349--1382.

\bibitem[Olmos and Govindasamy, 2015]{olmos2015propensity}
Olmos, A. and Govindasamy, P. (2015).
\newblock Propensity scores: a practical introduction using r.
\newblock {\em Journal of MultiDisciplinary Evaluation}, 11(25):68--88.

\bibitem[Resnick, 2007]{resnick2007heavy}
Resnick, S.~I. (2007).
\newblock {\em Heavy-tail phenomena: probabilistic and statistical modeling}.
\newblock New York: Springer Science \& Business Media.

\bibitem[Resnick, 2008]{res2008}
Resnick, S.~I. (2008).
\newblock {\em Extreme Values, Regular Variation and Point Processes}.
\newblock Springer, New York.

\bibitem[Rosenbaum and Rubin, 1983]{rosenbaum1983central}
Rosenbaum, P.~R. and Rubin, D.~B. (1983).
\newblock The central role of the propensity score in observational studies for
  causal effects.
\newblock {\em Biometrika}, 70(1):41--55.

\bibitem[van Oldenborgh et~al., 2017]{van_Oldenborgh_2017}
van Oldenborgh, G.~J., van~der Wiel, K., Sebastian, A., Singh, R., Arrighi, J.,
  Otto, F., Haustein, K., Li, S., Vecchi, G., and Cullen, H. (2017).
\newblock Attribution of extreme rainfall from hurricane harvey, august 2017.
\newblock {\em Environmental Research Letters}, 12(12):124009.

\bibitem[Wang et~al., 2012]{wang2012estimation}
Wang, H.~J., Li, D., and He, X. (2012).
\newblock Estimation of high conditional quantiles for heavy-tailed
  distributions.
\newblock {\em Journal of the American Statistical Association},
  107(500):1453--1464.

\bibitem[Xu et~al., 2022]{xu2022}
Xu, W., Wang, H.~J., and Li, D. (2022).
\newblock Extreme quantile estimation based on the tail single-index model.
\newblock {\em Statistica Sinica}, 32:893--914.

\bibitem[Zhang, 2018]{zhang2018extremal}
Zhang, Y. (2018).
\newblock Extremal quantile treatment effects.
\newblock {\em The Annals of Statistics}, 46(6B):3707--3740.

\bibitem[Zhou, 2022]{zhou2022attendance}
Zhou, X. (2022).
\newblock Attendance, completion, and heterogeneous returns to college: A
  causal mediation approach.
\newblock {\em Sociological Methods \& Research}, page 00491241221113876.

\end{thebibliography}

\addtocontents{toc}{\vspace{.5\baselineskip}}
\addtocontents{toc}{\protect\setcounter{tocdepth}{1}}
\appendix
\newpage

The supplementary material consists of the following seven sections.
\begin{description}
        \item[A] Details of the estimated propensity score using sieve method
	\item[B] Regularity assumptions for sieve estimation and the central limit theorem
	\item[C] Examples satisfying Assumption~\ref{ass:squared_hill_tozero}
	\item[D] Details of Variance Estimation
	\item[E] Supplementary material for simulations
	\item[F] Supplementary material for real application
	\item[G] Proofs
\end{description}


\section{Details of the estimated propensity score using sieve method} \label{appendix:estPropensity}

Suppose that we observe $n$ independent copies $(Y_i, D_i, X_i)_{i=1}^n$ of $(Y, D, X)$.
The main idea of the sieve method is to approximate the logit of the propensity score by a linear combination of sieve basis functions, and then estimate the propensity score by
\begin{equation} \label{SieveEstimatorPropensityScore}
\wh \Pi(x) := \frac{1}{1  + \exp \{ - H_{h_n}(x)^T \wh \pi_n\}}
\end{equation}
where $H_{h_n} = (H_{h_n,j})_{j=1,\ldots,h_n}: \R^r \to \R^{h_n}$ is a vector consisting of sieve basis functions, 
and
\begin{equation}
\wh \pi_n := \argmax_{\pi \in \R^{h_n}} \Sn 
D_i \log L(H_{h_n}(X_i)^T\pi) + (1-D_i) \log(1- L(H_{h_n}(X_i)^T\pi) ),
\label{eq:pihat}
\end{equation}
and  $L(a) := 1/(1+e^{-a})$ is the sigmoid function.

Let $H_{h_n} = (H_{h_n,j})_{j=1,\ldots,h_n}: \R^r \to \R^{h_n}$ be a vector consisting of $h_n$ sieve basis functions.
Following \cite{hirano2003efficient} and \cite{firpo2007efficient},
we use polynomials as the sieve basis functions in this paper.
In particular, we require that $H_{h_n,1} = 1$ and for all $m$ such that $h_n > (m+1)^r$, the span of $H_{h_n}$ contains all polynomials up to order $m$.
For an illustration purpose, some possible examples for $H_{h_n}$ are
$H_{h_n}(x) = (1, x_1, x_2, \dots, x_r)$
or
$H_{h_n}(x) = (1, x_1, x_2, \dots, x_r, x_1^2, x_2^2, \dots,x_r^2)$.
The crucial point in sieve estimation is that the dimension of the sieve space $h_n$ grows to infinity at an appropriate speed with the sample
size $n$. 
In other words, with larger sample size, one may consider a more complex model for the estimation.

\section{Regularity assumptions for sieve estimation and the central limit theorem} \label{appendix:sieve}



For sieve estimation, we require certain regularity assumptions:
\begin{assumption} \label{ass:sieve}
	\
	\begin{enumerate}
		\item[i)] $X$ is continuous and has density $f_X$ such that $\exists c' > 0: c' < f_X(x) < \frac{1}{c'},$
		\mbox{$\forall x \in \supp(X)$}. 
		\item[ii)] $\Pi(x)$ is $s$-times continuously differentiable with 
		all the derivatives bounded, where $s \ge 4r$ and $r$ denotes the dimension of $X$.
		\item[iii)] $\E[\tau_n - \Ind{Y(j) > q_j (1 -  \tau_n )} | x]$  is $u$-times
		continuously differentiable in $x$  with all derivatives bounded by some $M_n$
		uniformly over $\supp(X)$, where $u \in \mathbb{N}$.
		\item[iv)] Let $\zeta(h_n) = \sup_{x \in \supp(X)} \Vert H_{h_n}(x) \Vert$\footnote{For any vector or matrix $A$, the norm $\Vert A \Vert = \sqrt{\text{tr}(A^TA)}$.}.  We assume 
		$\frac{\zeta(h_n)^2 h_n}{\sqrt{n}} \to 0, \frac{ \tau_n \zeta(h_n)^{10} h_n}{n} \to 0 $,
		$n \tau_n \zeta(h_n)^6 h_n^{-s/r} \to 0$ and $\frac{n M_n}{\tau_n h_n^{u/r}} \to 0$.
	\end{enumerate}
\end{assumption}
In Assumption \ref{ass:sieve}, i) and ii) are standard assumptions for sieve estimation. iii) and iv) were introduced by \cite{zhang2018extremal} for the intermediate quantile estimation, and we refer to \cite{zhang2018extremal} for more discussions about these two assumptions.


\cite{newey1994asymptotic} showed that if $H_{h_n}$ consists of orthonormal polynomials, then $\zeta(h_n) = O(h_n)$. In this case, if $h_n = \floor{c_2 n^{c_1}}$ for some positive constants $c_1, c_2$, Assumption \ref{ass:sieve} $iv)$ is equivalent to $ c_1 < \frac{1}{6}$, $ \tau_n n^{11c_1 - 1 } \to 0$, $ \tau_n n^{c_1(6-s/r) + 1} \to 0$ and $M_n n^{1-c_1 u/r}  / \tau_n \to 0$. In particular, since $\tau_n \to 0$ and $n \tau_n \to \infty$, this assumption holds if we have $c_1\le\frac{1}{11}$ and sufficient smoothness.


Below we present the regularity assumptions for the central limit theorem.
\begin{assumption} \label{ass:zhangTechnical}
	~\\
	There exist real numbers $H_1, H_0, H_{10}$ such that
	\begin{align*}
		\frac{1}{\tau_n} \E \biggl[ \frac{P(Y(1) > q_1(1- \tau_n) \mid X)}{\Pi(X)}
		- \frac{1-\Pi(X)}{\Pi(X)} P(Y(1) > q_1(1- \tau_n)\mid X)^2 
		\biggr]
		&\to H_1 \\
		\frac{1}{\tau_n} \E \biggl[ \frac{P(Y(0) > q_0(1- \tau_n) \mid X)}{1-\Pi(X)}
		- \frac{\Pi(X)}{1-\Pi(X)} P(Y(0) > q_0(1- \tau_n)\mid X)^2
		\biggr]
		&\to H_0 \\
		\frac{1}{\tau_n} \ERW{ P(Y(1) > q_1(1-\tau_n)\mid X) P(Y(0) > q_0(1-\tau_n)\mid X) } 
		&\to H_{10}. 
	\end{align*}
	
\end{assumption}

\begin{assumption} \label{ass:hill_Technical}
	~\\
	There exist real numbers $G_1, G_0, G_{10}$ and $J_{1}, J_{0}, J_{10}, J_{01}$ such that
	\begin{align*}
		\frac{1}{\tau_n}  \E \bigg[ \frac{1}{\Pi(X)} \ERW{ S_{1,n}^2 \mid X}  
		-   \frac{1-\Pi(X)}{\Pi(X)}
		\ERW{S_{1,n}\mid X}^2
		\bigg]
		& \to G_1
		\\
		\frac{1}{\tau_n}  \E \bigg[ \frac{1}{1-\Pi(X)} \ERW{ S_{0,n}^2 \mid X}  
		-   \frac{\Pi(X)}{1-\Pi(X)}
		\ERW{S_{0,n}\mid X}^2
		\bigg]
		& \to G_0
		\\ 
		\frac{1}{\tau_n}   \E \bigg[  \ERW{ 
			S_{1,n} \mid X}  
		\ERW{S_{0,n}\mid X}
		\bigg] & \to G_{10}
		\\
		\frac{1}{\tau_n}  \E \bigg[ 
		\frac{1}{\Pi(X)} \ERW{ S_{1,n} \mid X} 
		- \frac{1-\Pi(X)}{\Pi(X)} \ERW{ S_{1,n} \mid X}  P(Y(1) > q_1(1-\tau_n)\mid X) 
		\bigg]
		& \to J_{1}
		\\ 
		\frac{1}{\tau_n}  \E \bigg[
		\frac{1}{1-\Pi(X)} \ERW{ S_{0,n} \mid X}
		- \frac{\Pi(X) }{1-\Pi(X)} \ERW{ S_{0,n} \mid X} P(Y(0) > q_0(1-\tau_n)\mid X) \bigg)
		\bigg]
		& \to J_{0}
		\\
		\frac{1}{\tau_n}  \E \bigg[ \ERW{ 
			S_{1,n}\mid X}  
		P(Y(0) > q_0(1-\tau_n)\mid X) 
		\bigg] & \to J_{10}
		\\
		\frac{1}{\tau_n}  \E \bigg[ \ERW{ 
			S_{0,n}\mid X}  
		P(Y(1) > q_1(1-\tau_n)\mid X) 
		\bigg] & \to J_{01},
	\end{align*}
	where
	\[
	S_{j,n} = 
	\gamma_j \log \left( \frac{\tau_n}{1-F_j(Y(j))} \right) \Ind{Y(j) >  q_j(1-\tau_n)}. 
	\]
\end{assumption}

\section{Examples satisfying Assumption~\ref{ass:squared_hill_tozero}}
\label{appendix:example}
Assumption~\ref{ass:squared_hill_tozero} is sarisfied for 
the following random scale model in Example~\ref{example:random_scale}.
\begin{example} \label{example:random_scale}
	Let $X \in \R^r$ be a random vector and
	let $h_j: \R^r \to \R_+$, $j=0,1$, be functions satisfying 
	$ C_1 \le h_j(X) \le C_2$ almost surely for some real numbers $C_1, C_2$ such that $0 < C_1 \le C_2 < \infty$.
	Let $\varepsilon_0, \varepsilon_1$ be random variables independent of $X$ and let potential outcome
	\[
	Y(j) = h_j(X) \cdot \varepsilon_j. 
	\]
	Let $F_{\varepsilon_j}$ be the distribution function of $\varepsilon_j$.
	Suppose that $U_{\varepsilon_j} := (1/(1-F_{\varepsilon_j}))^\leftarrow$ satisfies the second-order regular variation condition with extreme
	value index $\gamma_j>0$ and second-order parameter $\rho_j \leq 0$. 
	Then $U_j = (1/(1-F_j))^\leftarrow$ satisfies the second-order regular variation condition with the same parameters $\gamma_j$, $\rho_j$,
	and Assumption~\ref{ass:squared_hill_tozero} is met.
\end{example}

We give two concrete examples for the distribution of $\epsilon_j$ in Example~\ref{example:random_scale}.
\begin{example}
	The first example is the  Student $t$-distribution.
	Let $F_\nu$ be the CDF of the Student $t$-distribution with $\nu$ degrees of freedom. 
	It is known that $F _\nu$ satisfies the second-order regular-variation condition with first-order parameter $-\nu$ and
	second-order parameter $\rho = -2$ (see e.g. \mbox{Example 3 in \cite{hua2011second})}.
	Therefore, by Theorem 2.3.9 in \cite{de2007extreme},
	$U_\nu:= (1/(1-F_\nu))^\leftarrow$ is 
	of second-order regular variation with extreme value index $\gamma=1/\nu$
	and second-order parameter $\rho=-2$.
	One special case is the Cauchy distribution for $\nu=1$.
	\label{example:student}
\end{example}

\begin{example}
	Another example is the Fr\'echet model $F_\gamma(x) = \exp(-x^{-1/\gamma})\Ind{x\ge 0}$
	for $\gamma > 0$.
	It is known that  $F_\gamma$ satisfies the second-order regular variation condition with first order parameter $\gamma$ and second-order parameter
	$\rho=-1$ (see e.g. Example 4.2 in \cite{alves2007note}).
	Thus, by Theorem 2.3.9 in \cite{de2007extreme},
	$U_\gamma := (1/(1-F_\gamma))^\leftarrow$ is 
	of second-order regular variation with extreme value index $\gamma$
	and second-order parameter $\rho=-1$.
	\label{example:frechet}
\end{example}

We prove the following Lemma~\ref{lemma:scale_heavy_tailed}. The claim of Example~\ref{example:random_scale} then follows from setting $t=q_j(1-\tau_n)$.
\begin{lemma}
	Let $X \in \R^r$ be a random vector and let $h: \R^r \to \R_+$ be a function satisfying 
	$ C_1 \le h(X) \le C_2$ almost surely for some real numbers $C_1, C_2$ such that $0 < C_1 \le C_2 < \infty$.
	Let $\varepsilon$ be a random variable such that $\varepsilon \indep X$ and let $Z = h(X) \cdot \varepsilon$.
	Denote the CDFs of $Z$ and $\varepsilon$ by $F$ and $\wt F$, respectively.
	Suppose that $\wt U = (1/(1-\wt F))^\leftarrow$ is of second-order regular variation with extreme value index $\gamma > 0$ and second-order parameter $\rho \le 0$.
	Then, $U = (1/(1-F))^\leftarrow$ is of second-order regular variation with extreme value index $\gamma$ and second-order parameter $\rho$.
	In addition, for $t \to \infty$, we have
	\begin{align} \label{ex_claim1}
		\frac{\ERW{P(Z > t \mid X)^2}}{P(Z > t)} \to 0
	\end{align}
	and
	\begin{align} \label{ex_claim2}
		\frac{1}{1-F(t)} \ERW{ \ERW{ \log\left(\frac{1-F(t)}{1-F(Z)}\right) \Ind{Z > t}\bigg|  X }^2}  \to 0 .
	\end{align}
	\label{lemma:scale_heavy_tailed}
\end{lemma}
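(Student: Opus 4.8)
The backbone of the argument is the identity $1-F(t)=\ERW{1-\wt F\bigl(t/h(X)\bigr)}$, which holds because $\varepsilon\indep X$, so that $P(Z>t\mid X)=P(\varepsilon>t/h(X)\mid X)=1-\wt F(t/h(X))$; since $1/h(X)$ takes values in the compact interval $[1/C_2,1/C_1]$, every regular-variation estimate for $1-\wt F$ applies uniformly over the relevant argument range. With this in hand, \eqref{ex_claim1} is immediate and needs no second-order information: for any nonnegative random variable $Y$ one has $Y^2\le(\operatorname{ess\,sup}Y)\,Y$ and hence $\ERW{Y^2}\le(\operatorname{ess\,sup}Y)\,\ERW{Y}$, and applying this with $Y=P(Z>t\mid X)=1-\wt F(t/h(X))\le 1-\wt F(t/C_2)$ and $\ERW{Y}=P(Z>t)$ gives
\[
\frac{\ERW{P(Z>t\mid X)^2}}{P(Z>t)}\le 1-\wt F(t/C_2)\longrightarrow 0 .
\]

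For the second-order statement I would transfer the regular-variation structure through the tail $1-F$. By the standard translation between the second-order condition on $\wt U$ and that on the tail (cf.\ Theorem~2.3.9 in \cite{de2007extreme}), the hypothesis gives a second-order expansion $\frac{1-\wt F(tx)}{1-\wt F(t)}=x^{-1/\gamma}\bigl(1+\wt B(t)\tfrac{x^{\rho/\gamma}-1}{\rho/\gamma}+o(\wt B(t))\bigr)$ uniformly for $x$ in compact subsets of $(0,\infty)$, where $\wt B(t)\to0$ and $|\wt B|$ is regularly varying of index $\rho/\gamma$ (reading $\tfrac{x^{\rho/\gamma}-1}{\rho/\gamma}=\log x$ when $\rho=0$). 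Substituting $x=1/h(X)$, multiplying by $1-\wt F(t)$ and taking expectations — the remainder being uniform over $1/h(X)\in[1/C_2,1/C_1]$ — yields $1-F(t)=(1-\wt F(t))\bigl(c_0+c_1\wt B(t)+o(\wt B(t))\bigr)$ with $c_0=\ERW{h(X)^{1/\gamma}}$ and $c_1=\ERW{h(X)^{1/\gamma}\tfrac{h(X)^{-\rho/\gamma}-1}{\rho/\gamma}}$. Dividing the expansion at $tx$ by that at $t$ and using $\wt B(tx)-\wt B(t)=\wt B(t)(x^{\rho/\gamma}-1)+o(\wt B(t))$ then produces $\frac{1-F(tx)}{1-F(t)}=x^{-1/\gamma}\bigl(1+B(t)\tfrac{x^{\rho/\gamma}-1}{\rho/\gamma}+o(B(t))\bigr)$ with $B=\bigl(c_0+\tfrac{\rho}{\gamma}c_1\bigr)\wt B/c_0$; the crucial point is that $c_0+\tfrac{\rho}{\gamma}c_1=\ERW{h(X)^{(1-\rho)/\gamma}}>0$ because $h(X)\ge C_1>0$, so $B$ is a genuine, nonvanishing second-order auxiliary function. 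Hence $1-F$ is of second-order regular variation with first-order parameter $-1/\gamma$ and second-order parameter $\rho/\gamma$, and a second use of the same translation gives that $U=(1/(1-F))^\leftarrow$ is of second-order regular variation with parameters $\gamma$ and $\rho$. I expect this transfer of the second-order condition — the bookkeeping of the auxiliary functions and the non-degeneracy of the constant $c_0+\tfrac{\rho}{\gamma}c_1$ — to be the main obstacle.

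For \eqref{ex_claim2}, I would set $W_t:=\log\!\bigl(\tfrac{1-F(t)}{1-F(Z)}\bigr)\Ind{Z>t}\ge0$ and apply $\ERW{Y^2}\le(\operatorname{ess\,sup}Y)\ERW{Y}$ once more with $Y=\ERW{W_t\mid X}$, so that it suffices to bound $\operatorname{ess\,sup}_X\ERW{W_t\mid X}$ by $o(1)\cdot(1-F(t))$ while $\ERW{W_t}=O(1-F(t))$. Since $1-F$ is regularly varying of index $-1/\gamma$ by the first part, Potter's bounds give $W_t\le\log2+(1/\gamma+\delta)\log(Z/t)$ on $\{Z>t\}$ for $t$ large and a fixed small $\delta>0$; taking the conditional expectation given $X=x$, writing $Z=h(x)\varepsilon$ and $\Ind{Z>t}=\Ind{\varepsilon>t/h(x)}$, and using the Karamata estimate $\ERW{\log(\varepsilon/s)\Ind{\varepsilon>s}}=\int_s^\infty\tfrac{1-\wt F(v)}{v}\,dv\sim\gamma(1-\wt F(s))$ together with $P(\varepsilon>s)=1-\wt F(s)$, one obtains $\ERW{W_t\mid X=x}\le C(1-\wt F(t/h(x)))\le C(1-\wt F(t/C_2))$ uniformly in $x$ for $t$ large; the same estimates applied unconditionally (or, since $F$ is continuous, the identity $\ERW{(E-u)_+}=e^{-u}$ with $u=-\log(1-F(t))$ and $E$ standard exponential) give $\ERW{W_t}=O(1-F(t))$. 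Combining,
\[
\frac{1}{1-F(t)}\,\ERW{\ERW{W_t\mid X}^2}\le C\,(1-\wt F(t/C_2))\longrightarrow 0 ,
\]
which is \eqref{ex_claim2}. The statement of Example~\ref{example:random_scale} then follows by taking $t=q_j(1-\tau_n)$, for which $1-F(t)=\tau_n$.
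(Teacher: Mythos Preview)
Your proof is correct and the overall strategy matches the paper's, but the execution differs in two places. For the second-order claim the paper simply invokes Theorem~2.3.9 in \cite{de2007extreme} together with equation~(25) of \cite{hua2011second} (the moment condition there being trivially satisfied since $h(X)\le C_2$), whereas you carry out the transfer by hand via the locally uniform second-order expansion; your computation is right, including the key non-degeneracy $c_0+\tfrac{\rho}{\gamma}c_1=\ERW{h(X)^{(1-\rho)/\gamma}}>0$, and has the merit of being self-contained. For \eqref{ex_claim1} and \eqref{ex_claim2} your use of the inequality $\ERW{Y^2}\le(\operatorname{ess\,sup}Y)\,\ERW{Y}$ is slicker than the paper's route: the paper bounds $\ERW{P(Z>t\mid X)^2}\le(1-\wt F(t/C_2))^2$ and then controls the ratio via regular variation, and for \eqref{ex_claim2} it bounds the inner conditional expectation by inserting $1-\wt F(\varepsilon)$ and $1-\wt F(\varepsilon C_2/C_1)$ to split the logarithm into two pieces, each shown to be $O(1-F(t))$. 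Your Potter--Karamata argument and the exact identity $\ERW{W_t}=1-F(t)$ achieve the same endpoint more directly. One small writeup slip: after applying $\ERW{Y^2}\le(\operatorname{ess\,sup}Y)\,\ERW{Y}$ it suffices that $\operatorname{ess\,sup}_X\ERW{W_t\mid X}=o(1)$ (not $o(1)\cdot(1-F(t))$) together with $\ERW{W_t}=O(1-F(t))$; your final display is correct regardless.
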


\begin{proof}[Proof of Lemma~\ref{lemma:scale_heavy_tailed}]
	First, since replacing $\varepsilon$ by $\varepsilon \Ind{\varepsilon \ge 0}$ does not change
	the CDF of $\varepsilon$ or $Z$ on $(0, +\infty)$, we can assume without loss of generality that $\varepsilon \ge 0$.
	
	By Theorem 2.3.9 in \cite{de2007extreme}, $\wt U$ being
	second-order regular variation with extreme value index $\gamma > 0$ and second-order parameter $\rho \le 0$
	is equivalent to
	$1-\wt F$ being second-order regular variation with first-order parameter $-1/\gamma$ and second-order parameter $\rho$.
	Therefore, equation (25) in \cite{hua2011second}
	implies that $1-F$ is of second-order regular variation with parameters $-1/\gamma$ and $\rho$, 
	given the condition that there exists $\delta>0$ such that $\ERW{ h(X)^{1/\gamma - \rho + \delta}} < \infty$, 
	The required condition holds in our case because
	$\ERW{ h(X)^{1/\gamma - \rho + \delta}} \le C_2^{1/\gamma - \rho + \delta} < \infty$ for all $\delta>0$. 
	Theorem 2.3.9 in \cite{de2007extreme} then implies that $U$ is second-order regularly varying with parameters $\gamma$ and $\rho$.
	
	Now we prove claim \eqref{ex_claim1}.
	
	First, since $\varepsilon \indep X$, we have $F(t) = P(\varepsilon \le t/h(X)) = \ERW{ \wt F (t / h (X))} $, and thus for all $t > 0$,
	\begin{equation*}
		\wt F(t/C_2) \le F(t) \le \wt F(t/C_1)
	\end{equation*}
	and
	\[
	1 = \frac{1-\wt F(t/C_2 )}{1-\wt F(t/C_2)}
	\le 
	\frac{ 1- \wt F (t/C_2)}{1-F(t)} \le 
	\frac{1-\wt F(t/C_2 )}{1-\wt F(t/C_1)}.
	\]
	Because $1-\wt F$ is of second-order regular variation 
	with first-order parameter $-1/\gamma$, we have
	\[
	\frac{1-\wt F(t/C_2 )}{1-\wt F(t/C_1)} 
	=
	\frac{1-\wt F(t/C_2 )}{1-\wt F(t)} 
	\frac{1-\wt F(t)}{1-\wt F(t/C_1)} \tendsto{t \to \infty} 
	\left( \frac{C_2}{C_1} \right) ^{1/\gamma},
	\]
	which implies
	\begin{equation}
		\frac{ 1- \wt F (t/C_2)}{1-F(t)} = 
		O(1).
		\label{eq:ratio_wtF_F}
	\end{equation}
	Since $P(Z > t \mid X) = 1- \wt F(t/h(X)) \le 1-\wt F(t/C_2)$, we have that for $t \to \infty$,
	\[
	\frac{\ERW{P(Z > t \mid X)^2}}{P(Z > t)} 
	\le 
	\frac{(1-\wt F(t/C_2))^2}{1-F(t)}
	= O(1-F(t)) = o(1).
	\]
	
	Now we prove claim \eqref{ex_claim2}.
	
	We have that almost surely,
	\begin{equation}\label{secondClaim}
		\begin{aligned}
			0 \le \ERW{ \log\left(\frac{1-F(t)}{1-F(Z)}\right) \Ind{Z > t}\bigg|  X } 
			&   \le 
			\ERW{ \log\left(\frac{1-F(t)}{1-F(\varepsilon\cdot C_2)}\right) \Ind{\varepsilon \cdot C_2 > t} } 
			\\ & \le 
			\ERW{ \log\left(\frac{1-F(t)}{1-\wt F(\varepsilon  \cdot C_2 / C_1)}\right) \Ind{\varepsilon \cdot C_2 > t} } 
			\\ & = 
			\ERW{ \log\left(\frac{1-F(t)}{1-\wt F(\varepsilon)}\right) \Ind{\varepsilon \cdot C_2 > t} } 
			\\ & \quad + 
			\ERW{ \log\left(\frac{1-\wt F(\varepsilon)}{1-\wt F(\varepsilon\cdot C_2 / C_1)}\right) \Ind{\varepsilon \cdot C_2 > t} } .
		\end{aligned}
	\end{equation}
	For the first term, since $\wt F(\varepsilon)$ is uniformly distributed, 
	we have that $\log\left(\frac{1}{1-\wt F(\varepsilon)}\right)$ follows a standard exponential distribution.
	Thus
	\begin{align*}
		\ERW{ \log\left(\frac{1-F(t)}{1-\wt F(\varepsilon)}\right) \Ind{\varepsilon \cdot C_2 > t} } 
		& = \int_{-\log(1-\wt F(t/C_2))}^\infty (z + \log(1-F(t)))e^{-z} \ dz 
		\\ & =
		(1-F(t)) \int_{\log((1-F(t))/(1-\wt F(t/C_2)))}^\infty z 
		e^{-z} \ dz .
	\end{align*}
	Based on \eqref{eq:ratio_wtF_F}, we have that $\log((1-F(t))/(1-\wt F(t/C_2)))$ is of order $O(1)$, and thus
	\[
	\ERW{ \log\left(\frac{1-F(t)}{1-\wt F(\varepsilon)}\right) \Ind{\varepsilon\cdot C_2 > t} } = O(1-F(t)).
	\]
	For the second term, we have
	\begin{align*}
		&\ERW{ \log\left(\frac{1-\wt F(\varepsilon)}{1-\wt F(\varepsilon\cdot C_2/C_1)}\right) \Ind{\varepsilon \cdot C_2 > t} } 
		\le
		(1 - \wt F(t/ C_2))
		\sup_{q > t/C_2 } \log\left(\frac{1-\wt F(q)}{1-\wt F(q\cdot C_2 / C_1)}\right)
	\end{align*}
	Since $1-\wt F$ is of second-order regular variation, we have that
	$ 1 \le \frac{1-\wt F(q)}{1-\wt F(q\cdot C_2/C_1)} = O(1)$,
	and it follows that
	\[
	\sup_{q > t /C_2} \log\left(\frac{1-\wt F(q)}{1-\wt F(q\cdot C_2/C_1)}\right)
	= O(1).
	\]
	Combining this with (\ref{eq:ratio_wtF_F})
	yields
	\[
	\ERW{ \log\left(\frac{1-\wt F(\varepsilon)}{1-\wt F(\varepsilon\cdot C_2/C_1)}\right) \Ind{\varepsilon/C_2 > t} }  = O(1-F(t)). 
	\]
	
	Therefore, by taking squares and expectations on both sides of \eqref{secondClaim}, and using the above two results, we have 
	\begin{align*}
		0 \le \frac{1}{1-F(t)} \ERW{ \ERW{ \log\left(\frac{1-F(t)}{1-F(Z)}\right) \Ind{Z > t}\bigg|  X }^2} 
		\le \frac{1}{1-F(t)} O((1-F(t))^2)
		= o(1),
	\end{align*}
	which proves claim \eqref{ex_claim2}.
\end{proof}


\section{Details of Variance Estimation} \label{appendix:variance}

The true variance in Theorem~\ref{thm:hill_extreme_qte} is $\sigma^2 = v_{\kappa}^T B \Sigma  B^T v_{\kappa}$, where we denote the true covariance matrix
\begin{align}\label{covariance_true}
	\Sigma :=  \left(
	\begin{matrix}                                
		G_{1} &  G_{10} & J_{1} & J_{10} \\
		G_{10} &  G_{0} & J_{01} & J_{0} \\
		J_{1} & J_{01} & H_{1} & H_{10} \\
		J_{10} & J_{0} & H_{10} & H_{0} \\
	\end{matrix}
	\right),
\end{align}
where $H_1, H_0, H_{10}, H_{10}$ are defined according to
Assumption~\ref{ass:zhangTechnical}. and
$G_1, G_0, G_{10}$, $J_1$, $J_0$, $J_{10}$, $J_{01}$ are defined according to
Assumption~\ref{ass:hill_Technical}.

Let 
\begin{align}\label{variance_conservative}
	\wt\sigma^2 := v_{\kappa}^T B \wt \Sigma  B^T v_{\kappa}
\end{align}
with the simplified covariance matrix
\begin{align}\label{covariance_simp}
	\wt \Sigma :=  \left(
	\begin{matrix}                                
		\wt G_{1} &  0 & \wt J_{1} & 0 \\
		0 &  \wt G_{0} & 0 & \wt J_{0} \\
		\wt J_{1} & 0 & \wt H_{1} & 0 \\
		0 &  \wt J_{0} &  0 & \wt H_{0}
	\end{matrix}
	\right)
\end{align}
where the entries are defined as the following limits:
\begin{align*}
	\frac{1}{\tau_n} \ERW{ \frac{P(Y(1) > q_1(1- \tau_n) \mid X)}{\Pi(X)}}
	& \to \wt H_1 \\
	\frac{1}{\tau_n} \ERW{\frac{P(Y(0) > q_0(1- \tau_n) \mid X)}{1-\Pi(X)}}
	& \to \wt H_0 \\
	\frac{1}{\tau_n}  \ERW{ \frac{1}{\Pi(X)} \ERW{ S_{1,n}^2 \mid X } } & \to \wt G_1
	\\
	\frac{1}{\tau_n}  \ERW{ \frac{1 }{1-\Pi(X)} \ERW{ S_{0,n}^2 \mid X} } &\to \wt G_0
	\\ 
	\frac{1}{\tau_n}  \ERW { \frac{1}{\Pi(X)} \ERW{ S_{1,n} \mid X } }
	& \to\wt  J_{1}
	\\ 
	\frac{1}{\tau_n}  \ERW{ \frac{1 }{1-\Pi(X)} \ERW{ S_{0,n} \mid X} } 
	& \to \wt J_{0}.
\end{align*}

Under Assumption \ref{ass:squared_hill_tozero}, the true covariance matrix $\Sigma$ is simplified to $\wt \Sigma$ (see Lemma~\ref{lemma:easy_sigma_hill}),
which leads to the estimator
\begin{equation}\label{covariance_est_ass6}
	\begin{aligned} 
	\wh \Sigma :=  \left(
	\begin{matrix}                                
		\wh G_{1} &  0 & \wh J_{1} & 0 \\
		0 &  \wh G_{0} & 0  & \wh J_{0} \\
		\wh J_{1} & 0 & \wh H_{1} & 0 \\
		0 & \wh J_{0} & 0 & \wh H_{0} \\
	\end{matrix}
	\right)
 \end{aligned}
\end{equation}
with entries
\begin{equation}\label{covariance_est_terms}
	\begin{aligned} 
		\wh{H}_1 &:= \frac{1}{k} \Sn \frac{D_i}{\wh{\Pi}(X_i)^2} \Ind{Y_i > \wh{q}_1(1- \tau_n)} \\ 
		\wh{H}_0 &:= \frac{1}{k} \Sn \frac{1-D_i}{(1-\wh{\Pi}(X_i))^2} \Ind{Y_i > \wh{q}_0(1- \tau_n)} \\
		\wh G_1 & := 
		\frac{1}{k} \Sn (\log(Y_i) - \log(\wh q_1(1-\tau_n))^2 
		\frac{D_i}{\wh \Pi(X_i)^2} \Ind{Y_i > \wh q_1(1-\tau_n)} \\
		\wh G_0 & := 
		\frac{1}{k} \Sn (\log(Y_i) - \log(\wh q_0(1-\tau_n))^2 
		\frac{1-D_i}{ (1- \wh\Pi(X_i))^2} \Ind{Y_i > \wh q_0(1-\tau_n)} \\
		\wh J_{1} & := 
		\frac{1}{k} \Sn (\log(Y_i) - \log(\wh q_1(1-\tau_n)) 
		\frac{D_i}{\wh \Pi(X_i)^2} \Ind{Y_i > \wh q_1(1-\tau_n)} \\
		\wh J_{0} & := 
		\frac{1}{k} \Sn (\log(Y_i) - \log(\wh q_0(1-\tau_n)) 
		\frac{1-D_i}{ (1- \wh\Pi(X_i))^2} \Ind{Y_i > \wh q_0(1-\tau_n)},
	\end{aligned}
\end{equation}
where $\wh q_j(1-\tau_n)$ is the estimator defined by \eqref{eq:firpo_estimator}, and
$\wh \Pi$ is the estimated propensity score in~\eqref{SieveEstimatorPropensityScore}.
Finally, the estimator of the variance is given by 
\begin{align*}
	\wh \sigma^2 := \wh v_{\kappa}^T \wh B \wh \Sigma \wh B^T \wh v_{\kappa},
\end{align*}
where 
\begin{equation*}
	\wh B :=  \left(
	\begin{matrix}                                
		1 & 0 & -\wh \gamma_1^H & 0 \\
		0 & 1 & 0 & -\wh \gamma_0^H \\
	\end{matrix}
	\right)
	\text{ and }\
	\wh v_{\kappa} := \left(
	\begin{matrix}
		\min\{1, \wh \kappa \} \\
		-\min\{1, \frac{1}{\wh \kappa} \}
	\end{matrix}
	\right)
	\text{ with }
	\
	\wh{\kappa} := 
	\frac{\wh Q_1(1-p_n)}{\wh Q_0(1-p_n)}.
\end{equation*}



\section{Supplementary material for simulations}

\subsection{Tuning parameters of Zhang's $b$ out of $n$ bootstrap} \label{appendix:tuningZhang}
For the tuning parameters of the $b$ out of $n$ bootstrap of \cite{zhang2018extremal}, we use the same values as suggested in the paper. Specifically, for the subsample size $b$, we follow the formula suggested in Section 5.5 of \cite{zhang2018extremal}:
\[
b = \left\lfloor 0.4n - \frac{1}{7}(n-300)^{+} - \frac{2.3}{28}(n-1000)^{+} - \frac{7}{40} \left(1-\frac{\log(5000)}{\log(n)}\right)  (n-5000)^{+} \right\rfloor,
\]
where $x^{+}=\max(0,x)$. For sample sizes $n=\{ 1000, 2000, 5000\}$, we obtain $b = \{300, 475, 1000 \} $. 

For the spacing parameter $m$ and $\tau_{n,0}$ in the feasible normalizing factor, we use the formulas described in Section 5.5 of \cite{zhang2018extremal}:
\begin{align*}
	\tau_{n,0} = \min \left( \frac{10}{n}, \frac{0.1b}{n}  \right)
	\quad \text{and} \quad
	m = 1 + \frac{10}{n \tau_{n,0} }.
\end{align*}

\subsection{Q-Q plots} \label{appendix:qqplot}

To empirically verify the asymptotic normality result of our extremal QTE estimator in Theorem~\ref{thm:hill_extreme_qte}, 
we show in Figure~\ref{fig:qqplots} its related normal Q-Q plot.
As comparison, we also present the normal Q-Q plots for the extrapolation estimator with the causal Pickands estimator and the Firpo--Zhang estimator.
The Q-Q plots of all settings are similar, thus we only present those of model $\text{H}_2$ with $n=5000$ and $p_n=5/(n\log(n))$ as an example.
From the plot, our proposed extremal QTE estimator is approximately normal, which empirically verifies Theorem~\ref{thm:hill_extreme_qte}.
The other two estimators, however, appear not to be asymptotically normal. 
This is expected for the Firpo--Zhang estimator because \cite{zhang2018extremal} showed that this estimator is not asymptotically normal in the extreme case.
\begin{figure}
	\centering
	\includegraphics[width=\textwidth]{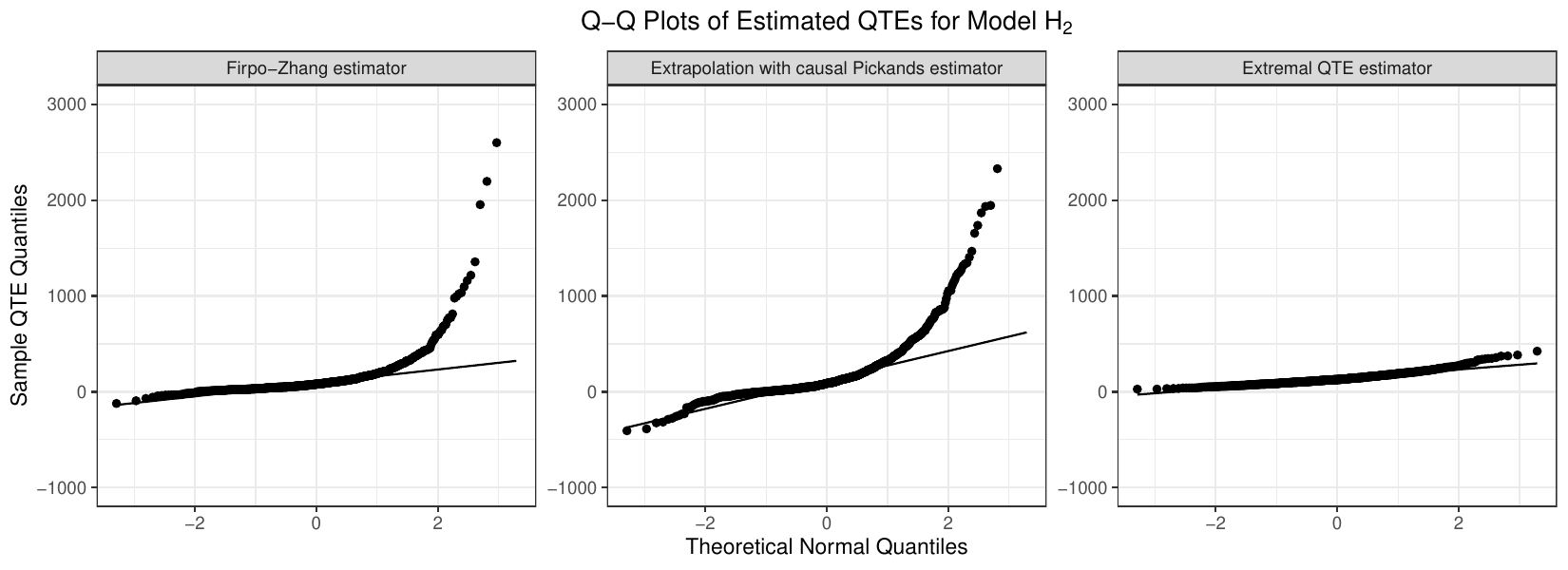}
	\caption[]
	{Normal Q-Q plots of QTE estimates from different methods for model $\text{H}_2$ with
		$n=5000$, \mbox{$p_n=5/(n\log(n))$}. The $x$-axis corresponds to the theoretical quantiles
		of the standard normal distribution, and the $y$-axis corresponds to the sample quantiles
		of the QTE estimators.
	}
	\label{fig:qqplots}
\end{figure}

\subsection{Dependency on $k$} \label{appendix:kdependency}
We implement simulations to investigate how the choice of the tuning parameter $k = n \tau_n$ affects the MSE and the coverage of the 
extrapolation based extremal QTE estimators. We also present the result of the Firpo-Zhang estimator for MSE and Zhang's $b$ out of $n$ bootstrap for coverage as comparison. The considered models $\text{H}_1$, $\text{H}_2$ and $\text{H}_3$ are the same as in Section~\ref{sec:simulations}.

\begin{figure}[h!]
	\centering
	\includegraphics[scale=0.46]{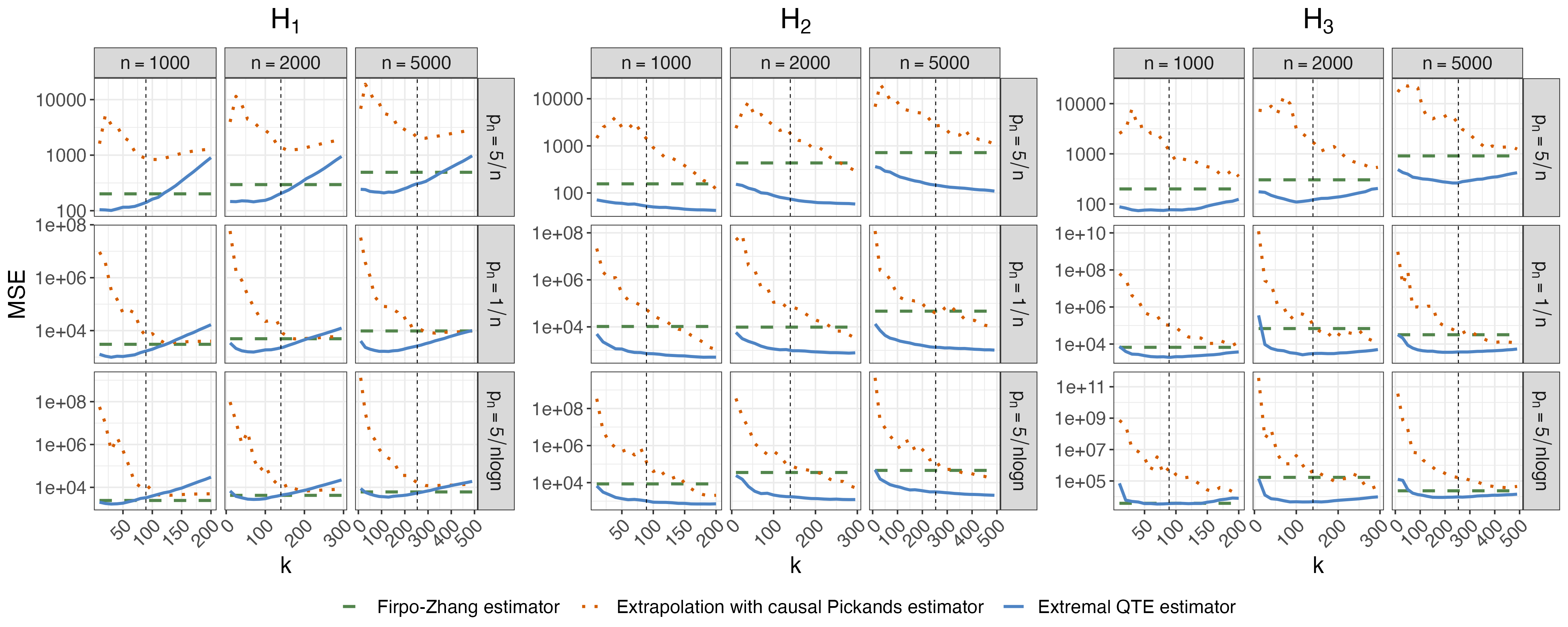}
	\caption[]
	{MSEs of the Firpo-Zhang estimator (horizontal dashed line), the quantile extrapolation method with causal Pickands estimator (dotted line), and our proposed extremal QTE estimator (solid line), with different values of the tuning parameter $k$. The vertical dashed line indicates our choice $k_n = n^{0.65}$ used in Section~\ref{sec:simulations}. }
	\label{fig:mse_kplot}
\end{figure}

Figure~\ref{fig:mse_kplot} shows the simulation results about MSE.
The MSE of the Firpo-Zhang estimator is a line because it does not depend on $k$.
From this figure, we can see that the value of $k$ has a big influence on the MSE of our extremal QTE estimator and the quantile extrapolation method with the causal Pickands estimator.
In particular, a clear bias-variance trade-off with respect to $k$ is shown in the plots related to $\text{H}_1$ and $\text{H}_3$.
We also note that for the more heavy-tailed models $\text{H}_2$ and $\text{H}_3$, our proposed method with the causal Hill estimator outperforms the Firpo-Zhang estimator for a wide range of values of $k$.

\begin{figure}[h!]
	\centering
	\includegraphics[scale=0.45]{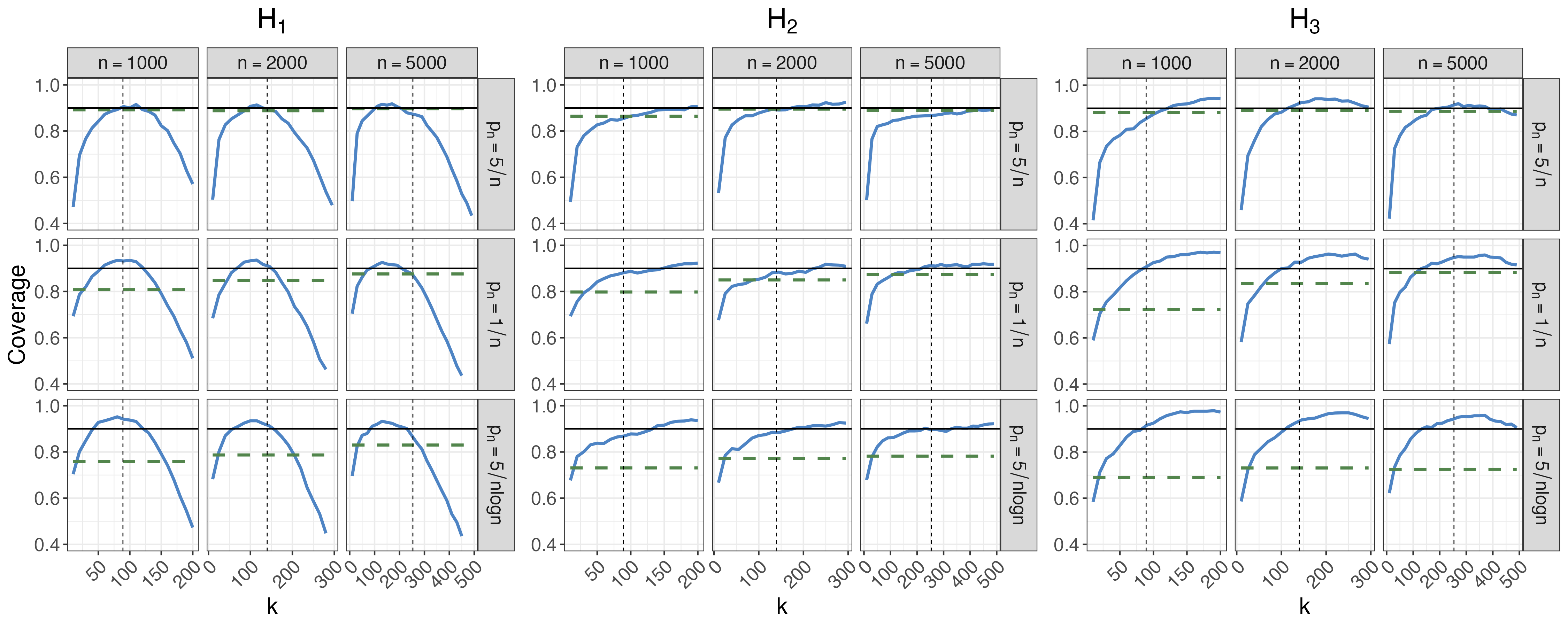}
	\caption[]
	{Coverage of Zhang's $b$ out of $n$ bootstrap method (denoted by the horizontal dashed green line) and our proposed confidence interval \eqref{ConfidenceInterval} (denoted by the solid blue line) for different values of the tuning parameter $k$. The target coverage level is $90\%$, indicated by the black horizontal lines. The vertical dashed line indicates our choice $k_n = n^{0.65}$ used in Section~\ref{sec:simulations}}
	\label{fig:coverage_kplot}
\end{figure}

Figure~\ref{fig:coverage_kplot} shows the simulation results about coverage.
The coverage of the Zhang's $b$ out of $n$ bootstrap method is a line because it does not depend on $k$.
We can see that there is always a range of $k$ where our proposed confidence interval \eqref{ConfidenceInterval} has good coverage.
The particular range, however, depends on the respective model.
We also note that our method works well in terms of coverage for a wide range of $k$ for models $\text{H}_2$.

The above observations generally agree with the observation from classical quantile extrapolation setting, see e.g. \cite{de2007extreme}.

\section{Supplementary material for real application} \label{appendix:application}

\subsection{The resulting model of PROP2} \label{appendix:applicationPROP2}
The resulting model of PROP2 is:
``college$\sim$1+race+region+age80+mhgc$\_$mi+fhgc$\_$mi+
\\ sasvab5+sasvab6+sgr9$\_$scosci$\_$gpa+sasvab2:mhgc$\_$mi+sasvab5:age80+sasvab5:sasvab2+
\\
sasvab6:sasvab2+sgr9$\_$lang$\_$gpa:age80+sgr9$\_$lang$\_$gpa:mhgc$\_$mi+sgr9$\_$scosci$\_$gpa:age80+
\\
race:age80+race:mhgc$\_$mi+region:age80+region:fhgc$\_$mi+region:sgr9$\_$scosci$\_$gpa''.

\subsection{Plots of the estimated EVIs and QTEs with different $k$} \label{appendix:applicationPlots}

Figure~\ref{fig:gamma_kplot_comb} and \ref{fig:college_extremal_qte_comb} shows the plots of the estimated EVIs and QTEs versus the tuning parameter $k$ for the real data analyzed in Section~\ref{sec:application}, respectively.
For Figure~\ref{fig:gamma_kplot_comb}, $\wh \gamma_0^H$ and $\wh \gamma_1^H$ are denoted by triangular and circle, respectively.
In both figures, the red and blue colors correspond to the results with the estimated propensity scores using PROP1 and PROP2 (see Section~\ref{sec:application} for the details of these two approaches), respectively.
\begin{figure}
	\centering
	\includegraphics[width=13cm]{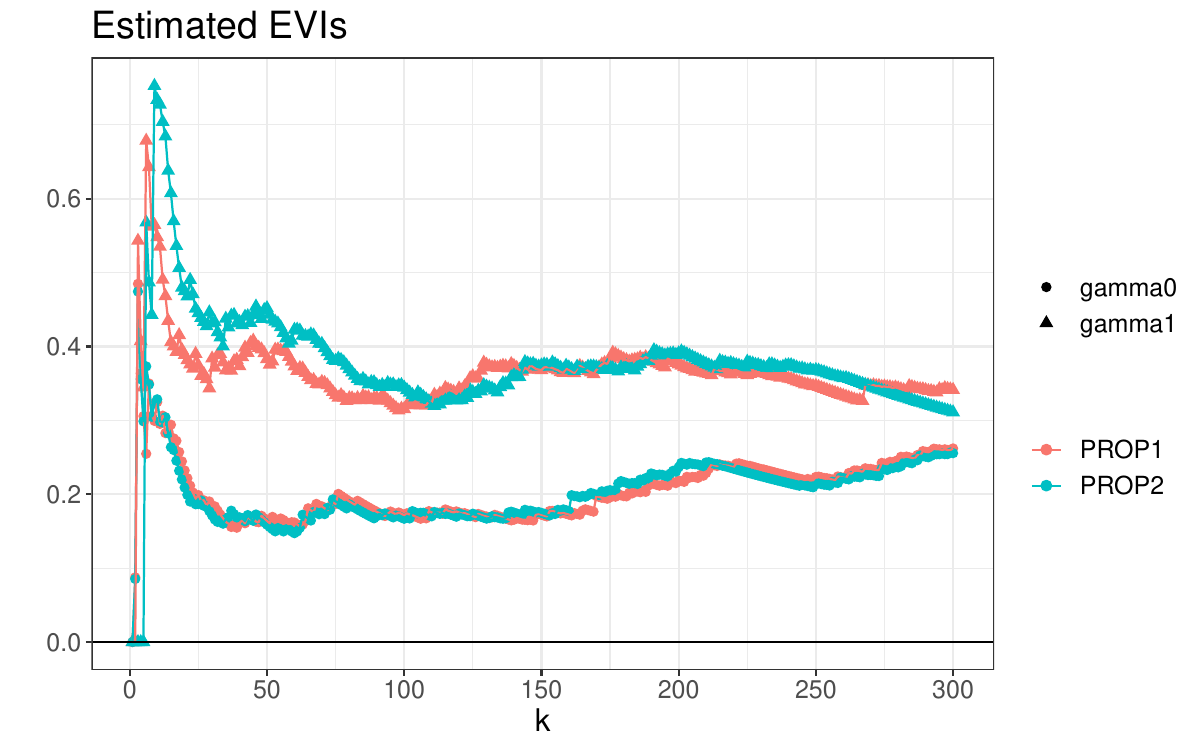}
	\caption[]
	{The EVI estimates $\wh \gamma_0^H$ and $\wh \gamma_1^H$ as a function of the tuning parameter $k$. The points corresponding to $\wh \gamma_0^H$ and $\wh \gamma_1^H$ are denoted by triangular and circle, respectively.
 The red and blue colors correspond to the results with the estimated propensity scores using PROP1 and PROP2, respectively.}
	\label{fig:gamma_kplot_comb}
\end{figure}

\begin{figure}
	\centering
	\includegraphics[width=14cm]{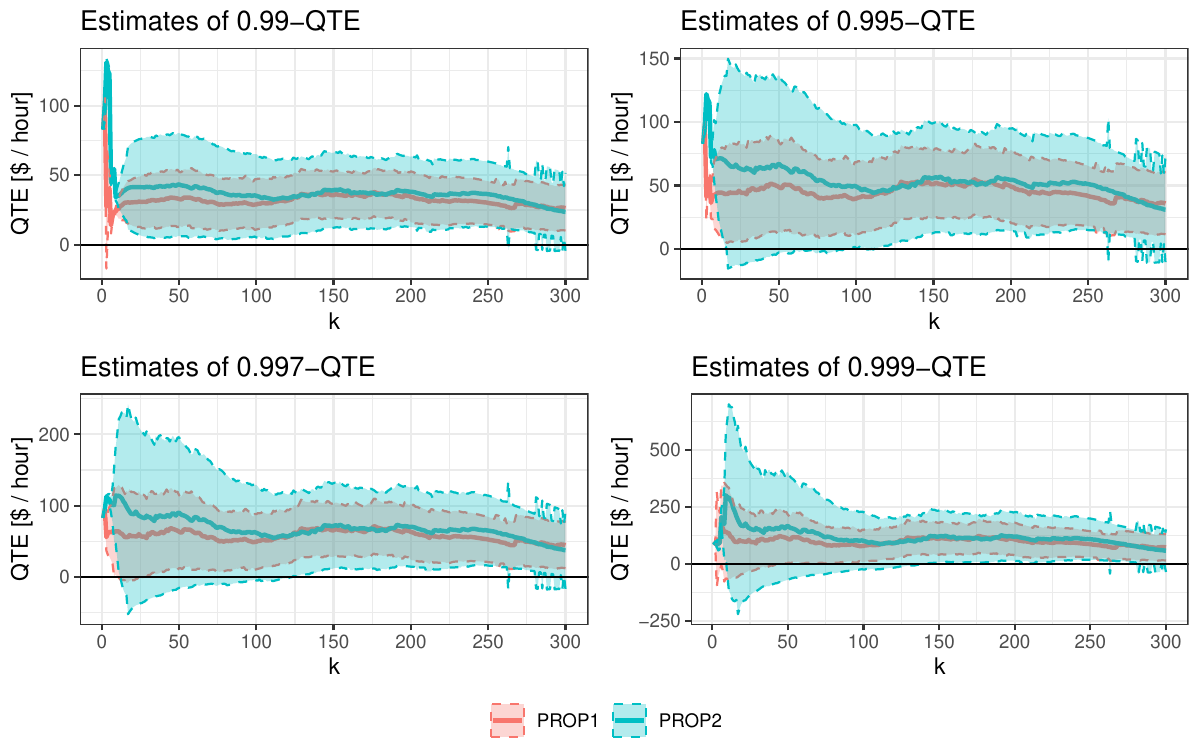}
	\caption[]
	{The extremal QTE estimates (solid lines) as a function of the tuning parameter $k$ for four quantile indices. The shadow indicates the $90\%$-confidence intervals. 
		The red and blue colors correspond to the results with the estimated propensity scores using PROP1 and 
		PROP2, respectively.}
	\label{fig:college_extremal_qte_comb}
\end{figure}

\section{Proofs} \label{appendix:proof}
We mention again that unless otherwise stated, $\tau_n$ denotes the intermediate quantile which satisfies $\tau_n \to 0$ and $k := n \tau_n \to \infty$. We will use notations $k$ and $\tau_n$ interchangeably for convenience.

\subsection{Proof of Lemma~\ref{lemma:hill_consistency}}

To prove Lemma~\ref{lemma:hill_consistency}, we first introduce 
Theorem~\ref{thm:interquantile}, Lemma~\ref{lemma:interquantile_normalizing},
Lemma~\ref{lemma:propensity_adjustment}, Lemma~\ref{lemma:propensity_consistency} and Lemma~\ref{lemma:weighted_cdf_diff}.
Theorem~\ref{thm:interquantile} is a special case of 
Theorem 3.1 in \cite{zhang2018extremal}, so we omit its proof.

\begin{theorem}
	Suppose that Assumptions \ref{ass:1}, \ref{ass:tail}, and \ref{ass:sieve} hold,
	and assume that $n \tau_n  \to \infty$ and $\tau_n \to 0$.
	Let
	\begin{equation*}
		\lambda_{j,n} := \sqrt{\frac{n}{\tau_n}} f_j(q_j(1-\tau_n))
	\end{equation*}
	and consider the random vector 
	\begin{equation*}
		\left(
		\begin{matrix}
			\wh{\Delta}_1^n(\tau_n) \\
			\wh{\Delta}_0^n(\tau_n)
		\end{matrix}
		\right)
		:= 
		\left(
		\begin{matrix}
			\lambda_{1,n}( \wh{q}_1(1-\tau_n) - q_1(1-\tau_n)) \\
			\lambda_{0,n}( \wh{q}_0(1-\tau_n) - q_0(1-\tau_n)) 
		\end{matrix}
		\right).
	\end{equation*}
	Then for $j=0,1$,
	\[
	\wh{\Delta}_j^n(\tau_n) =  \frac{1}{\sqrt{n}} \Sn \phi_{i,j,n} +  o_{p}(1),
	\]
	where 
	\begin{equation*}
		\begin{aligned}
			\phi_{i,1,n} &:= \frac{1}{\sqrt{ \tau_n}} \left( \frac{D_i}{\Pi(X_i)} T_{i,1,n} 
			- \frac{\ERW{T_{i,1,n}\mid X_i}}{\Pi(X_i)} (D_i - \Pi(X_i)) \right)
			\\
			\phi_{i,0,n} &:= \frac{1}{\sqrt{\tau_n}} \left( \frac{1-D_i}{1-\Pi(X_i)} T_{i,0,n} 
			+ \frac{\ERW{T_{i,0,n}\mid X_i}}{1-\Pi(X_i)} (D_i - \Pi(X_i)) \right)
		\end{aligned}
	\end{equation*}
	and
	\[
	T_{i,j,n} :=  \Ind{Y_i(j) > q_j(1-\tau_n)}- \tau_n.
	\]
	In particular, if Assumption \ref{ass:zhangTechnical} holds,     
	then
	\begin{equation*}
		(\wh{\Delta}_1^n(\tau_n), \wh{\Delta}_0^n(\tau_n)) 
		\tendsto{D} N,
	\end{equation*}
	where $N$ is bivariate Gaussian vector with mean zero and covariance matrix 
	\begin{equation*}
		\mathcal{H} =  \left(
		\begin{matrix}                                
			H_1  & H_{10}   \\                                               
			H_{10} & H_0   \\                                               
		\end{matrix}
		\right)
	\end{equation*}
	with $H_1, H_0$ and $H_{10}$ defined as in
	Assumption~\ref{ass:zhangTechnical}.
	\label{thm:interquantile}
\end{theorem}

Lemma~\ref{lemma:interquantile_normalizing} shows that for a CDF $F_j$ with positive extreme value index, the normalization sequence $\lambda_{j,n}$ can be 
replaced by a simpler expression.

\begin{lemma} \label{lemma:interquantile_normalizing}
	For $j=0,1$,
	suppose Assumption~\ref{ass:tail} is met
	and $F_j$ has an extreme value index $\gamma_j >0$. Then
	\[
	\lim_{n \to \infty} \frac{\gamma_j \lambda_{j,n}}{\sqrt{k} q_j(1-\tau_n)^{-1}} = 1.
	\]
\end{lemma}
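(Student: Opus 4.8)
The plan is to unwind the definition of $\lambda_{j,n}$ and show that $\gamma_j \lambda_{j,n} q_j(1-\tau_n)$ behaves asymptotically like $\sqrt{k}$. Writing things out, we have
\[
\frac{\gamma_j \lambda_{j,n}}{\sqrt{k}\, q_j(1-\tau_n)^{-1}}
= \gamma_j \, q_j(1-\tau_n) \, \sqrt{\frac{n}{\tau_n}} \, f_j(q_j(1-\tau_n)) \cdot \frac{1}{\sqrt{k}}
= \gamma_j \, q_j(1-\tau_n) \, f_j(q_j(1-\tau_n)),
\]
since $\sqrt{n/\tau_n}/\sqrt{k} = \sqrt{n/\tau_n}/\sqrt{n\tau_n} = 1/\tau_n$, wait — more carefully, $\sqrt{n/\tau_n} = \sqrt{n/\tau_n}$ and $\sqrt{k}=\sqrt{n\tau_n}$, so the ratio is $\sqrt{(n/\tau_n)/(n\tau_n)} = 1/\tau_n$. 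So in fact the expression reduces to $\gamma_j q_j(1-\tau_n) f_j(q_j(1-\tau_n)) / \tau_n$. Hence the claim is equivalent to showing
\[
\lim_{n\to\infty} \frac{\gamma_j \, q_j(1-\tau_n) \, f_j(q_j(1-\tau_n))}{\tau_n} = 1 .
\]

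So the real content is a statement purely about the heavy-tailed CDF $F_j$: if $t \mapsto q_j(1-t)$ is the upper quantile function and $f_j$ the density, then $t^{-1} \gamma_j q_j(1-t) f_j(q_j(1-t)) \to 1$ as $t \downarrow 0$. First I would recall the standard characterization of the domain of attraction with index $\gamma_j>0$: the tail function $U_j(s) := q_j(1-1/s) = (1/(1-F_j))^{\leftarrow}(s)$ is regularly varying with index $\gamma_j$ at infinity. Equivalently, $1-F_j$ is regularly varying at the right endpoint $+\infty$ with index $-1/\gamma_j$. I would then invoke the von Mises-type condition / Karamata's theorem for densities: under Assumption~\ref{ass:tail}, $f_j$ is (eventually) monotone in the upper tail (part ii), and together with $1-F_j \in RV_{-1/\gamma_j}$ this forces, by the monotone density theorem (see, e.g., \citet[Theorem 1.7.2b]{BGT1987} or \citet[Lemma 1.2.9 and Theorem B.1.9]{de2007extreme}), that $x f_j(x)/(1-F_j(x)) \to 1/\gamma_j$ as $x\to\infty$. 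Setting $x = q_j(1-\tau_n)$, which tends to $+\infty$ because $\gamma_j>0$ and $\tau_n\to0$, and using $1-F_j(q_j(1-\tau_n)) = \tau_n$ (valid by continuity of $F_j$ from Assumption~\ref{ass:tail} i), we get exactly $\gamma_j q_j(1-\tau_n) f_j(q_j(1-\tau_n))/\tau_n \to 1$, which is what we want.

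The main obstacle is the justification of the monotone density step: $1-F_j$ being regularly varying does not by itself imply $x f_j(x)/(1-F_j(x)) \to 1/\gamma_j$ — one genuinely needs the monotonicity of $f_j$ in the tail, which is precisely why Assumption~\ref{ass:tail} ii) is imposed. I would therefore be careful to state that we only use the behaviour of $f_j$ on a neighbourhood of $+\infty$, apply the monotone density theorem there, and note that $q_j(1-\tau_n)\to\infty$ ensures we are eventually in that neighbourhood. The remaining pieces — rewriting $\lambda_{j,n}$, cancelling the $\sqrt{n/\tau_n}$ against $\sqrt{k}$, and using $1-F_j(q_j(1-\tau_n))=\tau_n$ — are routine. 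A small point to check is that $q_j(1-\tau_n)$ is a genuine point of the support where $f_j$ is defined and the identity $1-F_j(q_j(1-\tau_n))=\tau_n$ holds exactly; this follows from continuity and strict monotonicity of $F_j$ in the upper tail, again guaranteed by Assumption~\ref{ass:tail}.
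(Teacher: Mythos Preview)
Your proposal is correct and essentially identical to the paper's proof: both reduce the claim to showing $q_j(1-\tau_n) f_j(q_j(1-\tau_n))/\tau_n \to 1/\gamma_j$, and both obtain this from the von Mises--type limit $t f_j(t)/(1-F_j(t)) \to 1/\gamma_j$, justified by the domain of attraction condition together with the eventual monotonicity of $f_j$. The only cosmetic difference is the reference: the paper cites Theorem~2.7.1 in \cite{dehaan1970regular} directly for the von Mises condition, whereas you appeal to the monotone density theorem; these are the same result.
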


\begin{proof}[Proof of Lemma~\ref{lemma:interquantile_normalizing}]
	By Assumption~\ref{ass:tail}, $F_j$ satisfies the max-domain of attraction
	condition with a positive extreme value index $\gamma_j$ and its 
	density $f_j$ is monotone in the upper tail.
	Therefore, the von Mises condition
	\[
	\lim_{t \to \infty} \frac{t f_j(t)}{1-F_j(t)} = \frac{1}{\gamma_j}
	\]
	holds by Theorem 2.7.1  in \cite{dehaan1970regular}. So we have 
	\begin{align*}
		\lim_{n \to \infty} \frac{\lambda_{j,n}}{\sqrt{k} q_j(1-\tau_n)^{-1}} 
		& = \lim_{n \to \infty} \frac{ q_j(1-\tau_n) f_j(q_j(1-\tau_n))}{\tau_n}  \\
		& = \lim_{n \to \infty} \frac{ q_j(1-\tau_n) f_j(q_j(1-\tau_n))}{1-F_j(q_j(1-\tau_n))}  \\
		& = \lim_{t \to \infty} \frac{t f_j(t)}{1-F_j(t)} \\
		& = \frac{1}{\gamma_j},
	\end{align*}
	where the second last equality is obtained by setting $t = q_j(1-\tau_n)$.
\end{proof}

Lemma~\ref{IdentifiabilityPropensityScore} is a classical result in causal inference literature, and we omit its proof.
\begin{lemma} \label{IdentifiabilityPropensityScore}
	Let $g$ be a measurable function such that $\ERW{|g(Y(1))|}$ and $\ERW{|g(Y(0))|}$ are finite. Suppose $(Y(1), Y(0)) \indep D \ | \ X$
	and there exists $c > 0$ such that $c < \Pi(X) < 1-c$ almost surely.
	Then we have  
	\begin{align*}
		\ERW{ g(Y) \frac{D}{\Pi(X)} } & = \ERW{ g(Y(1)) } \quad \text{and}
		\quad \ERW{ g(Y) \frac{1-D}{1-\Pi(X)} }  = \ERW{ g(Y(0)) }.
	\end{align*}
	\label{lemma:propensity_adjustment}
\end{lemma}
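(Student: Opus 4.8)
The plan is to reduce the two stated identities to a single conditioning argument resting on the consistency relation $Y = Y(1)D + Y(0)(1-D)$ and the unconfoundedness hypothesis. First I would observe that on the event $\{D=1\}$ the consistency relation gives $Y = Y(1)$, hence $g(Y) = g(Y(1))$, while on $\{D=0\}$ both $g(Y)\,D$ and $g(Y(1))\,D$ vanish; therefore $g(Y)\,D = g(Y(1))\,D$ identically, and symmetrically $g(Y)\,(1-D) = g(Y(0))\,(1-D)$. Consequently it suffices to prove
\[
\ERW{ g(Y(1)) \frac{D}{\Pi(X)} } = \ERW{ g(Y(1)) }
\qquad\text{and}\qquad
\ERW{ g(Y(0)) \frac{1-D}{1-\Pi(X)} } = \ERW{ g(Y(0)) }.
\]

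Next I would condition on the pair $(X, Y(1))$, i.e. apply the tower property with respect to $\sigma(X, Y(1))$. Because $c < \Pi(X) < 1-c$ almost surely, the weight $1/\Pi(X)$ is bounded by $1/c$, so together with $\ERW{|g(Y(1))|} < \infty$ the random variable $g(Y(1))\,D/\Pi(X)$ is integrable and every conditional expectation below is well defined. Since $\Pi(X)$ and $g(Y(1))$ are both $\sigma(X,Y(1))$-measurable, they factor out of the inner conditional expectation, leaving $\ERW{D \mid X, Y(1)}$; by the unconfoundedness assumption $(Y(1),Y(0)) \indep D \mid X$ this equals $\ERW{D \mid X} = \Pi(X)$. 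Hence the inner conditional expectation equals $g(Y(1))\,\Pi(X)/\Pi(X) = g(Y(1))$, and taking the outer expectation yields the first identity. The second identity follows by the symmetric argument: condition on $(X, Y(0))$, use $1/(1-\Pi(X)) \le 1/c$ for integrability, and use $\ERW{1-D \mid X, Y(0)} = 1-\Pi(X)$.

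There is essentially no genuine obstacle here — this is the standard Horvitz--Thompson / inverse-propensity identity — so the only points worth flagging are bookkeeping ones: one must condition on the pair $(X,Y(1))$ rather than on $X$ alone so that $g(Y(1))$ can be pulled out of the inner expectation, and one must invoke the boundedness of the inverse propensity weights (Assumption~\ref{ass:1}~$ii)$) together with the integrability hypothesis on $g$ to legitimize the interchange of conditioning and the division by $\Pi(X)$.
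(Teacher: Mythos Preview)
Your proof is correct and is the standard argument for this classical inverse-propensity-weighting identity. The paper itself omits the proof entirely, noting only that it is a well-known result in the causal inference literature, so there is nothing further to compare.
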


Lemma~\ref{ConvergenceRatePropensityScore} gives the convergence rate of the estimated propensity score by using the sieve method.
\begin{lemma} \label{ConvergenceRatePropensityScore}
	Suppose Assumptions \ref{ass:1} and \ref{ass:sieve} are met.
	Then we have
	\[
	\sup_{x \in \supp(X)} | \wh \Pi(x) - \Pi(x) | = o_p(k^{-1/4}).
	\]
	In particular, this implies 
	\[
	\sup_{x \in \supp(X)} \bigg| \frac{1}{\wh \Pi(x)} - \frac{1}{\Pi(x)} \bigg| = o_p(1).
	\]
	\label{lemma:propensity_consistency}
\end{lemma}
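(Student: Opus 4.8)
The plan is to reproduce the now-standard uniform convergence analysis for series (sieve) logit estimators, in the spirit of \cite{hirano2003efficient}, \cite{newey1994asymptotic} and \cite{zhang2018extremal}, and then to verify that the rate it delivers is in fact $o_p(k^{-1/4})$ under the bandwidth conditions collected in Assumption~\ref{ass:sieve}~$iv)$. First I would introduce a pseudo-true parameter: let $\Lambda(x) := \log\bigl(\Pi(x)/(1-\Pi(x))\bigr)$ be the log-odds. By Assumption~\ref{ass:1}~$ii)$ the propensity score is bounded away from $0$ and $1$, and by Assumption~\ref{ass:sieve}~$ii)$ it is $s$-times continuously differentiable with bounded derivatives ($s \ge 4r$), so $\Lambda$ inherits these properties. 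Standard polynomial approximation theory then produces $\pi_n^\ast \in \R^{h_n}$ with $\sup_{x \in \supp(X)}\bigl|\Lambda(x) - H_{h_n}(x)^T\pi_n^\ast\bigr| = O(h_n^{-s/r})$, and since the sigmoid $L$ is $1$-Lipschitz also $\sup_x\bigl|\Pi(x) - L(H_{h_n}(x)^T\pi_n^\ast)\bigr| = O(h_n^{-s/r})$. Throughout I use only that $\zeta(h_n) = \sup_{x\in\supp(X)}\|H_{h_n}(x)\| \ge 1$, which holds because $H_{h_n,1}\equiv 1$ by the construction in Appendix~\ref{appendix:sieve}.

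Next I would bound $\|\wh\pi_n - \pi_n^\ast\|$ by a standard $M$-estimation argument for the concave objective in~\eqref{eq:pihat}. Its population counterpart has a Hessian at $\pi_n^\ast$ whose smallest eigenvalue is bounded away from $0$: this uses that $\Pi$ (hence the logistic weight) is bounded away from $0$ and $1$, that $f_X$ is bounded away from $0$ (Assumption~\ref{ass:sieve}~$i)$), and the normalization of the basis. The empirical Hessian is uniformly close to the population one on a shrinking neighborhood of $\pi_n^\ast$ provided $\zeta(h_n)^2 h_n / \sqrt n \to 0$; and the score at $\pi_n^\ast$ has Euclidean norm $O_p(\sqrt{h_n/n})$ plus an $O(\sqrt{h_n}\,h_n^{-s/r})$ contribution from the Step~1 approximation error (via Cauchy--Schwarz and $\ERW{\|H_{h_n}(X)\|^2} = O(h_n)$). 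A Taylor expansion of the concave objective then yields $\|\wh\pi_n - \pi_n^\ast\| = O_p\bigl(\sqrt{h_n/n} + \sqrt{h_n}\,h_n^{-s/r}\bigr)$.

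Converting to the sup norm with $|L'|\le \tfrac14$, Cauchy--Schwarz and Step~1 gives
\[
\sup_{x\in\supp(X)}|\wh\Pi(x) - \Pi(x)| \le \tfrac14\,\zeta(h_n)\,\|\wh\pi_n - \pi_n^\ast\| + O(h_n^{-s/r}) = O_p\!\left(\zeta(h_n)\sqrt{h_n/n} + \zeta(h_n)\sqrt{h_n}\,h_n^{-s/r}\right).
\]
It remains to check this is $o_p(k^{-1/4})$ using Assumption~\ref{ass:sieve}~$iv)$. For the variance term, $\zeta(h_n)\sqrt{h_n/n}\cdot k^{1/4} = \bigl(\zeta(h_n)^2 h_n/\sqrt n\bigr)^{1/2}\,\tau_n^{1/4} \to 0$, since the first factor $\to 0$ and $\tau_n \to 0$. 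For the bias term, $\zeta(h_n)\sqrt{h_n}\,h_n^{-s/r}\cdot k^{1/4} \to 0$ with room to spare, because $n\tau_n\,\zeta(h_n)^6 h_n^{-s/r}\to 0$, $\zeta(h_n)\ge 1$ and $s\ge 4r$. This proves the first claim. For the second, since $k\to\infty$ we have $o_p(k^{-1/4}) = o_p(1)$, so by Assumption~\ref{ass:1}~$ii)$ the event $\{\wh\Pi(x) > c/2 \text{ for all } x\}$ has probability tending to $1$; on it $|1/\wh\Pi(x) - 1/\Pi(x)| = |\Pi(x)-\wh\Pi(x)|/(\wh\Pi(x)\Pi(x)) \le (2/c^2)\sup_x|\wh\Pi(x)-\Pi(x)| = o_p(1)$.

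The main obstacle is the $M$-estimation step: controlling the smallest eigenvalue of the random Hessian of the sieve log-likelihood uniformly over a neighborhood of $\pi_n^\ast$, and tracking precisely how the polynomial approximation error enters the score, is where the delicacy lies and where the four rate conditions in Assumption~\ref{ass:sieve}~$iv)$ each earn their keep. In practice, since the paper follows \cite{hirano2003efficient}, \cite{firpo2007efficient} and \cite{zhang2018extremal} for the sieve construction, one can instead simply invoke the uniform rate already established in those references and proceed directly to the verification in the third paragraph.
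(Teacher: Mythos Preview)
Your proposal is correct and follows essentially the same route as the paper: obtain the uniform rate $\sup_x|\wh\Pi(x)-\Pi(x)| = O_p\bigl(\zeta(h_n)\sqrt{h_n/n} + \zeta(h_n)h_n^{-s/(2r)}\bigr)$ and then verify it is $o_p(k^{-1/4})$ under Assumption~\ref{ass:sieve}~$iv)$, with the reciprocal bound following from $\Pi$ being bounded away from zero. The only difference is cosmetic---the paper invokes Lemmas~1 and~2 of \cite{hirano2003efficient} directly for the uniform rate (exactly the shortcut you note in your final sentence), whereas you sketch the underlying $M$-estimation argument; your approximation exponent $h_n^{-s/r}$ is sharper than the $h_n^{-s/(2r)}$ the paper quotes from Hirano et al., but either suffices for the conclusion.
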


\begin{proof}[Proof of Lemma~\ref{ConvergenceRatePropensityScore}]
	By Assumption \ref{ass:sieve} $iv)$
	$\frac{\zeta(h_n)^2 h_n}{\sqrt{n}} \to 0$, 
	we have $\frac{\zeta(h_n)^4}{n} \to 0$
	since $h_n \to \infty$.
	Therefore,  we can apply Lemma 1 and 2 in \cite{hirano2003efficient} to obtain
	\[
	\sup_{x \in \supp(X)} | \wh \Pi(x) - \Pi(x) | = O_p(\zeta(h_n) \sqrt{\frac{h_n}{n}} 
	+ \zeta(h_n) h_n^{-s/2r)}).
	\]
	The condition $ \frac{\zeta(h_n)^2 h_n}{\sqrt{n}} \to 0$
	implies  $\zeta(h_n) \sqrt{\frac{h_n}{n}} = o(n^{-1/4})$,
	and the Assumption \ref{ass:sieve} $iv)$
	$n \tau_n \zeta(h_n)^6 h_n^{-s/r} \to 0$ implies
	$ \zeta(h_n) h_n^{-s/2r} = o((n \tau_n)^{-1/2}) = o(k^{-1/2}) $
	as $\zeta(h_n) \ge 1$.
	In addition, $k=n \tau_n \to \infty$ and $\tau_n \to 0$ implies $n^{-1/4} = o(k^{-1/4})$.
	Combining the above rates, we have 
	\[
	\sup_{x \in \supp(X)} | \wh \Pi(x) - \Pi(x) | 
	= O_p( o(k^{-1/4}) + o(k^{-1/2}) )
	= o_p(k^{-1/4}).
	\]
	In particular, we have 
	$ 
	\sup_{x \in \supp(X)} | \wh \Pi(x) - \Pi(x) | = o_p(1).
	$
	The second part of the lemma then follows from the assumption
	that $\Pi(x)$ is continuous and bounded away from zero, 
	which allows us to apply the continuous mapping theorem (see Theorem 7.25 in \cite{kosorok2007introduction}).

\end{proof}

Lemma~\ref{lemma:weighted_cdf_diff} shows that the following two terms converge to zero in probability. This lemma will be used many times in the remaining proofs, so we prove it here.

\begin{lemma}
	Suppose Assumptions \ref{ass:1}, \ref{ass:tail} \ref{ass:sieve} and \ref{ass:zhangTechnical} hold.
	Then
	\begin{align*}
		\frac{1}{n \tau_n}   \Sn \frac{D_i}{\wh{\Pi}(X_i)}  
		\left( \Ind{Y_i > \wh{q}_1(1- \tau_n)} - \Ind{Y_i > {q}_1(1- \tau_n)} \right)
		& \tendsto{P} 0
		\\ \text{and} \quad 
		\frac{1}{n \tau_n }  \Sn \frac{1-D_i}{1-\wh{\Pi}(X_i)}  
		\left( \Ind{Y_i > \wh{q}_0(1- \tau_n)} - \Ind{Y_i > {q}_0(1- \tau_n)} \right)
		& \tendsto{P} 0.
	\end{align*}
	\label{lemma:weighted_cdf_diff}
\end{lemma}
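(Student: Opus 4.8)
The plan is to prove the two statements symmetrically, so I focus on the first one involving $Y(1)$; the case of $Y(0)$ follows by the same argument with $D_i$ replaced by $1-D_i$ and $\wh\Pi$ by $1-\wh\Pi$. First I would decompose the target quantity into a piece that replaces $\wh\Pi$ by the true propensity $\Pi$ and a piece that replaces $\wh q_1(1-\tau_n)$ by the true quantile $q_1(1-\tau_n)$. Concretely, write
\[
\frac{1}{n\tau_n}\Sn \frac{D_i}{\wh\Pi(X_i)}\bigl(\Ind{Y_i > \wh q_1(1-\tau_n)} - \Ind{Y_i > q_1(1-\tau_n)}\bigr)
= A_n + B_n,
\]
where
\[
A_n := \frac{1}{n\tau_n}\Sn \left(\frac{1}{\wh\Pi(X_i)} - \frac{1}{\Pi(X_i)}\right) D_i\bigl(\Ind{Y_i > \wh q_1(1-\tau_n)} - \Ind{Y_i > q_1(1-\tau_n)}\bigr),
\]
and $B_n := \frac{1}{n\tau_n}\Sn \frac{D_i}{\Pi(X_i)}\bigl(\Ind{Y_i > \wh q_1(1-\tau_n)} - \Ind{Y_i > q_1(1-\tau_n)}\bigr)$.

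For $A_n$, the term $\sup_{x}|1/\wh\Pi(x)-1/\Pi(x)|=o_p(1)$ by Lemma~\ref{lemma:propensity_consistency}, so $|A_n|$ is bounded by $o_p(1)$ times $\frac{1}{n\tau_n}\Sn D_i\,|\Ind{Y_i > \wh q_1(1-\tau_n)} - \Ind{Y_i > q_1(1-\tau_n)}|$; hence it suffices to show this last average is $O_p(1)$, which in turn follows once $B_n=o_p(1)$ is established (since the absolute-value average is then at most $2/(n\tau_n)\Sn \frac{D_i}{\Pi(X_i)}\Ind{Y_i>q_1(1-\tau_n)} + o_p(1)$, and by Lemma~\ref{lemma:propensity_adjustment} the expectation of the latter is of order $1$, so it is $O_p(1)$ by Markov). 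So the crux is to show $B_n\tendsto{P}0$. For this I would use the consistency of the intermediate quantile estimator: from Theorem~\ref{thm:interquantile} and Lemma~\ref{lemma:interquantile_normalizing}, $\wh q_1(1-\tau_n)/q_1(1-\tau_n) = 1 + o_p(k^{-1/2}\log k)$, in particular $\wh q_1(1-\tau_n)/q_1(1-\tau_n)\tendsto{P}1$, so for any fixed $\epsilon>0$, with probability tending to one, $\wh q_1(1-\tau_n)\in[(1-\epsilon)q_1(1-\tau_n),(1+\epsilon)q_1(1-\tau_n)]$. On this event, the summand's absolute value in $B_n$ is bounded by $\frac{D_i}{\Pi(X_i)}\bigl(\Ind{Y_i>(1-\epsilon)q_1(1-\tau_n)} - \Ind{Y_i>(1+\epsilon)q_1(1-\tau_n)}\bigr)$, whose average — call it $C_n(\epsilon)$ — has expectation, by Lemma~\ref{lemma:propensity_adjustment}, equal to $\frac{1}{\tau_n}\bigl(P(Y(1)>(1-\epsilon)q_1(1-\tau_n)) - P(Y(1)>(1+\epsilon)q_1(1-\tau_n))\bigr)$. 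By regular variation of $U_1$ (Assumption~\ref{ass:tail}~iii), equivalently \eqref{eq:first_order_u}), this ratio converges to $(1-\epsilon)^{-1/\gamma_1} - (1+\epsilon)^{-1/\gamma_1}$, which can be made arbitrarily small by choosing $\epsilon$ small. A Markov inequality then gives $P(C_n(\epsilon)>\eta)$ small uniformly in large $n$, and combining with the probability of the bad event (which vanishes), a standard $\epsilon$-$\eta$ argument yields $B_n\tendsto{P}0$.

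The main obstacle is the interchange of the random threshold $\wh q_1(1-\tau_n)$ inside the indicator with a deterministic sandwiching argument: one cannot simply take expectations because $\wh q_1$ is data-dependent and correlated with the $Y_i$'s. The clean way around this, as sketched above, is to condition on the high-probability event that $\wh q_1(1-\tau_n)$ lies in a deterministic shrinking neighborhood of $q_1(1-\tau_n)$, bound the indicator-difference by the deterministic version, and then apply the propensity-adjustment identity (Lemma~\ref{lemma:propensity_adjustment}) together with first-order regular variation to control the resulting deterministic quantity; the fact that the neighborhood width $\epsilon$ can be taken arbitrarily small (independently of $n$) is what makes the limit zero. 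One should also double-check that the bound in $A_n$ is handled before $B_n$ only in the write-up order — logically $B_n=o_p(1)$ is proven first and then reused to bound the absolute-value average controlling $A_n$.
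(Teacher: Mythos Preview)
Your argument is correct in spirit and can be completed, but it takes a genuinely different route from the paper. The paper does not sandwich $\wh q_1(1-\tau_n)$ between deterministic multiples of $q_1(1-\tau_n)$; instead it adds and subtracts $1-\tau_n$ and bounds the two resulting pieces separately. The key observation there is the first-order (subgradient) condition for the weighted quantile estimator: since $\wh q_1(1-\tau_n)$ minimizes the convex objective $\Sn \frac{D_i}{\wh\Pi(X_i)}(Y_i-q)(t_n-\Ind{Y_i\le q})$, one has
\[
\left|\Sn \frac{D_i}{\wh\Pi(X_i)}\bigl(\Ind{Y_i\le \wh q_1(1-\tau_n)} - (1-\tau_n)\bigr)\right|\le \sup_x \frac{1}{\wh\Pi(x)},
\]
so the piece with the estimated threshold is $O_p(1/(n\tau_n))=o_p(1)$ directly, without any appeal to the consistency of $\wh q_1$ or to regular variation. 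The remaining piece with the true threshold is handled by the CLT from Zhang's Theorem~3.1.

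Your approach, by contrast, uses the ratio consistency $\wh q_1(1-\tau_n)/q_1(1-\tau_n)\tendsto{P}1$ together with regular variation of $1-F_1$. This is more portable---it would work for any estimator of the intermediate quantile with this consistency property, not just the argmin of the weighted check loss---but it relies on $\gamma_1>0$ (needed both for Lemma~\ref{lemma:interquantile_normalizing} and for the limit $(1\pm\epsilon)^{-1/\gamma_1}$), whereas the lemma as stated does not assume this. In the paper's applications that extra assumption is always in force, so this is harmless in context. A minor clean-up: your bound for the absolute-value average controlling $A_n$ is stated awkwardly; the cleanest route is to note that all summands $\Ind{Y_i>\wh q_1}-\Ind{Y_i>q_1}$ share the same sign, so $\frac{1}{n\tau_n}\Sn D_i\,|\Ind{Y_i>\wh q_1}-\Ind{Y_i>q_1}|\le \frac{1}{n\tau_n}\Sn \frac{D_i}{\Pi(X_i)}|\Ind{Y_i>\wh q_1}-\Ind{Y_i>q_1}|=|B_n|=o_p(1)$, which already suffices.
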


\begin{proof}[Proof of Lemma~\ref{lemma:weighted_cdf_diff}]
	We show the first part of the lemma, and the second part follows analogously.
	For simplicity of notation, we denote $t_n = 1- \tau_n$. So 
	\begin{align}
		&\left| \frac{1}{n \tau_n}   \Sn \frac{D_i}{\wh{\Pi}(X_i)}  
		\left( \Ind{Y_i > \wh{q}_1(1- \tau_n)} - \Ind{Y_i > {q}_1(1- \tau_n)} \right) \right| \\
		= &   
		\left| \frac{1}{n \tau_n}   \Sn \frac{D_i}{\wh{\Pi}(X_i)}  
		\left( \Ind{Y_i > \wh{q}_1{(t_n})} - \Ind{Y_i > {q}_1{(t_n})} \right) \right|  \nonumber\\
		\le & 
		\left| \frac{1}{n \tau_n}   \Sn \frac{D_i}{\wh{\Pi}(X_i)}  
		\left( 1- t_n - \Ind{Y_i > \wh{q}_1{(t_n})} \right) \right| 
		\label{eq:var_est_3_term} \\
		& + \left| \frac{1}{n \tau_n}   \Sn \frac{D_i}{\wh{\Pi}(X_i)}  
		\left( 1 -t_n - \Ind{Y_i > q_1{(t_n})} \right) \right|
		\label{eq:var_est_4_term}
	\end{align}
	by the triangle inequality.
	Now we show that both terms \eqref{eq:var_est_3_term} and \eqref{eq:var_est_4_term} converge to zero in probability, which then proves the original claim. 
	
	For the first term \eqref{eq:var_est_3_term}, we need the subgradient condition for $\wh{q}_1$ (defined by \eqref{eq:firpo_estimator}).
	Specifically, since 
	\[
	\mathcal{L}_n(q) =  \Sn \frac{D_i}{\wh{\Pi}(X_i) }
	(Y_i - q) (t_n - \Ind{Y_i \le q})  \\
	\]
	is convex, a necessary condition for  $\wh{q}_1(t_n) = \argmin_{q \in \R} \mathcal{L}_n(q)$
	is that $0 \in \partial\mathcal{L}_n(\wh{q}_1(t_n))$ where $ \partial\mathcal{L}_n(q) $ denotes the set of subgradients of $\mathcal{L}_n$ at $q$
	(for details see e.g.  \cite{bubeck2015convex}).
	
	Because $Y_i$ is a continuous random variable for $i \in \{1, \ldots, n\}$, there exists at most one $Y_i = \wh{q}_1(t_n)$ almost surely.
	In the first case where $Y_i \neq \wh{q}_1(t_n)$ for all $ i \in \{1, \ldots, n\}$, the subgradient condition implies
	\begin{align*}
		0 = \partial\mathcal{L}_n(\wh{q}_1(t_n)) =
		\sum_{ i =1}^n \frac{D_i}{\wh{\Pi}(X_i) } \{\Ind{Y_i \le \wh{q}_1(t_n)}  - t_n\}.
	\end{align*}
	In the second case where there exists some $i_0 \in \{1, \ldots, n\}$ such that $Y_{i_0}=\wh{q}_1(t_n)$, we have 
	\begin{align*}
		\partial\mathcal{L}_n(\wh{q}_1(t_n)) =
		\sum_{ i \in \{1, \ldots, n \} \setminus \{i_0\}} \frac{D_i}{\wh{\Pi}(X_i) } \{\Ind{Y_i \le \wh{q}_1(t_n)}  - t_n\}  
		+ \sum_{i =i_0}   \frac{D_i}{\wh{\Pi}(X_i) } [-t_n, 1-t_n],
	\end{align*}
	where $[-t_n, 1-t_n]$ is an interval and the set addition is understood elementwise. 
	The subgradient condition then implies that there exists some $t \in [-t_n, 1-t_n]$ such that 
	\begin{align*}
		0 & = \sum_{ i \in \{1, \ldots, n \} \setminus \{i_0\}} \frac{D_i}{\wh{\Pi}(X_i) } (\Ind{Y_i \le \wh{q}_1(t_n)}  - t_n)  
		+   \frac{D_{i_0}}{\wh{\Pi}(X_{i_0}) } t   \\
		& = 
		\Sn \frac{D_i}{\wh{\Pi}(X_i) } (\Ind{Y_i \le \wh{q}_1(t_n)}  - t_n)  
		+   \frac{D_{i_0}}{\wh{\Pi}(X_{i_0}) } (t - 1 + t_n).
	\end{align*}
	Hence 
	\begin{align*}
		\left|
		\Sn \frac{D_i}{\wh{\Pi}(X_i) } (\Ind{Y_i \le \wh{q}_1(t_n)}  - t_n)  
		\right|
		=
		\left|
		\frac{D_{i_0}}{\wh{\Pi}(X_{i_0}) } (t - 1 + t_n).
		\right|
		\leq
		\sup_x \left| \frac{1}{\wh{\Pi}(x)} \right|
	\end{align*}
	since $| D_{i_0} (t - 1 + t_n) | \le 1$.
	
	Combining the above two cases, we have that almost surely
	\begin{align*}
		\left| \frac{1}{n \tau_n}   \Sn \frac{D_i}{\wh{\Pi}(X_i)}  
		\left( 1- t_n - \Ind{Y_i > \wh{q}_1{(t_n})} \right) \right| 
		& = 
		\left| \frac{1}{n \tau_n}   \Sn \frac{D_i}{\wh{\Pi}(X_i)}  
		\left( t_n - \Ind{Y_i \le \wh{q}_1{(t_n})} \right) \right| 
		\\ & \le 
		\frac{1}{n \tau_n} \sup_x \left| \frac{1}{\wh{\Pi}(x)} \right|.
	\end{align*}
	By Assumption \ref{ass:1}, we have that $\sup_x \frac{1}{\Pi(x)} < \frac{1}{c} $. So by Lemma
	\ref{lemma:propensity_consistency} we have 
	\begin{align}
		\sup_x \left| \frac{1}{\wh{\Pi}(x)} \right| \le 
		\sup_x \left| \frac{1}{\wh{\Pi}(x)} - \frac{1}{\Pi(x)} \right| 
		+ \sup_x \left| \frac{1}{\Pi(x)} \right| = o_p(1) + \frac{1}{c}  = O_p(1).
		\label{eq:prob_bounded_prop}
	\end{align}
	Because $n\tau_n \to \infty$, we have 
	\[
	\left| \frac{1}{n \tau_n}   \Sn \frac{D_i}{\wh{\Pi}(X_i)}  
	\left( 1- t_n - \Ind{Y_i > \wh{q}_1{(t_n})} \right) \right|  \tendsto{P} 0.
	\]
	
	For the second term (\ref{eq:var_est_4_term}), the proof
	of Theorem 3.1 in \cite{zhang2018extremal} showed that the term
	\[
	\frac{1}{\sqrt{n \tau_n}}   \Sn \frac{D_i}{\wh{\Pi}(X_i)}  
	\left( 1-t_n  - \Ind{Y_i > q_1{(t_n})} \right)
	\]
	converges in distribution to a normal random variable
	(in particular, \cite{zhang2018extremal} proved the corresponding result for lower quantiles, but the same holds for upper quantiles).
	Hence, we have 
	\begin{align*}
		&\left| \frac{1}{n \tau_n}   \Sn \frac{D_i}{\wh{\Pi}(X_i)}  
		\left( 1-t_n  - \Ind{Y_i > q_1{(t_n})} \right) \right| \\
		=& 
		\frac{1}{\sqrt{n \tau_n}} \left| \frac{1}{\sqrt{n \tau_n}}   \Sn \frac{D_i}{\wh{\Pi}(X_i)}  
		\left( 1-t_n  - \Ind{Y_i > q_1{(t_n})} \right) \right| \\
		=& \frac{1}{\sqrt{n \tau_n}} O_{p}(1)
		\tendsto{P} 0.
	\end{align*}

\end{proof}

Now we give the proof of Lemma~\ref{lemma:hill_consistency}.
\begin{proof}[Proof of Lemma~\ref{lemma:hill_consistency}]
	We show the claim for $j=1$, and the case of $j=0$ can be proved analogously.
	First, we expand $\wh \gamma_1^H$ (defined by \eqref{hill_est}) as 
	\[
	\wh \gamma_1^H = 
	G_n^1 + G_n^2 + G_n^3 + G_n^4
	\]
	where 
	\begin{align*}
		G_n^1  &= \frac{1}{k} \Sn (\log(Y_i) - \log( q_1(1-\tau_n))) 
		\frac{D_i}{ \Pi(X_i)} \Ind{Y_i >  q_1(1-\tau_n)}, \\
		G_n^2 &=  (\log(q_1(1-\tau_n)) - \log( \wh q_1(1-\tau_n)) )
		\frac{1}{k} \Sn 
		\frac{D_i}{ \Pi(X_i)} \Ind{Y_i >  q_1(1-\tau_n)}, 
		\\
		G_n^3 &=  \frac{1}{k} \Sn (\log(Y_i) - \log( \wh q_1(1-\tau_n))) 
		\frac{D_i}{  \wh \Pi(X_i)} \left( \Ind{Y_i > \wh q_1(1-\tau_n)} -
		\Ind{Y_i >  q_1(1-\tau_n)} 
		\right), \\
		G_n^4 &= \frac{1}{k} \Sn (\log(Y_i) - \log( \wh q_1(1-\tau_n))) 
		D_i  \left( \frac{1}{  \wh \Pi(X_i)}- \frac{1}{ \Pi(X_i)}  \right)
		\Ind{Y_i >  q_1(1-\tau_n)} .
	\end{align*}
	In the following, we will show that $G_n^1 \tendsto{P} \gamma_1$ and that
	$G_n^2, G_n^3, G_n^4$ converge to zero in probability, which then proves the original claim.
	
	Now we prove that $G_n^1 \tendsto{P} \gamma_1$.
	This part of the proof is similar to the proof of Theorem 3.2.2 in \cite{de2007extreme}, which shows the consistency of the classical Hill estimator.
	First, we have
	\[
	G_n^1 = \frac{1}{k} \Sn (\log(Y_i(1)) - \log( q_1(1-\tau_n)) ) \frac{D_i}{ \Pi(X_i)} \Ind{Y_i(1) >  q_1(1-\tau_n)}
	\]
	because $Y_i=Y_i(1)D_i + Y_i(0)(1-D_i)$.
	
	Since $F_1$ is the CDF of $Y_i(1)$, 
	$F_1(Y_i(1))$ is uniformly distributed on $[0,1]$.
	Let $ Z_i(1) = 1/( 1-F_1(Y_i(1)) )$, 
	so its CDF is $1-1/z$ for $z\ge 1$, which then implies that 
	$\log(Z_i(1))$ has a standard exponential distribution.
	Let $U_1 = (1/(1-F_1))^\leftarrow$ be the tail function of $Y_i(1)$.
	Then we have that $Y_i(1) = U_1(Z_i(1))$ and $\Ind{Y_i(1) >  q_1(1-\tau_n)}=\Ind{Z_i(1) > \tau_n^{-1}}$ almost surely.
	Since $q_1(1-\tau_n) = U_1(\tau_n^{-1})$, we have that almost surely
	\begin{align*}
		G_n^1  = \frac{1}{k} \Sn\left( \log(U_1(Z_i(1))) - \log\left( U_1 (\tau_n^{-1})\right) \right) 
		\frac{D_i}{ \Pi(X_i)} \Ind{Z_i(1) > \tau_n^{-1}}.
	\end{align*}
	
	By Assumption \ref{ass:tail} $iii)$, $F_1$ satisfies the max-domain of attraction condition with extreme value index $\gamma_1 > 0$, so by Theorem 1.1.6 and Corollary 1.2.10 in \cite{de2007extreme}, we have that for all $x>0$,
	\[
	\lim_{t \to \infty} \frac{U_1(tx)}{U_1(t)} = x^{\gamma_1}.
	\]
	Then, by the statement 5 of the Proposition B.1.9 in \cite{de2007extreme}, we have that for any 
	$\varepsilon, \varepsilon' >0$ such that $\varepsilon < 1,\ \varepsilon' < \gamma_1$, 
	there exists some $t_0$ such that for $x \ge 1$, $t\ge t_0$,
	\[
	(1-\varepsilon) x^{\gamma_1- \varepsilon'}
	< 
	\frac{U_1(tx)}{U_1(t)}
	<
	(1+\varepsilon) x^{\gamma_1+ \varepsilon'},
	\]
	which is equivalent to
	\[
	\log(1-\varepsilon) + (\gamma_1 - \varepsilon') \log(x)
	< \log(U_1(tx)) - \log(U_1(t)) < 
	\log(1+\varepsilon) + (\gamma_1 + \varepsilon') \log(x).
	\]
	For large enough $n$ and for $i \in \{1,\dots,n\}$ such that $Z_i(1) > \tau_n^{-1}$, we can set $t = \tau_n^{-1} $ and $x = Z_i(1)\tau_n$ to obtain
	\[
	\log(1-\varepsilon) + (\gamma_1 - \varepsilon') \log(Z_i(1)\tau_n)
	< \log(U_1(Z_i(1))) - \log(U_1(\tau_n^{-1})) < 
	\log(1+\varepsilon) + (\gamma_1 + \varepsilon') \log(Z_i(1)\tau_n).
	\]
	Multiplying by $\frac{1}{k}\frac{D_i}{ \Pi(X_i)}$ on both sides of the above inequality and summing up all $i \in \{1,\dots,n\}$ with $Z_i(1) > \tau_n^{-1}$ gives us that almost surely, 
	$G_n^1$ lies in the interval $[a, b]$ with
	\begin{align*}
		&a= \log(1 - \varepsilon) \frac{1}{k} \Sn  
		\frac{D_i}{ \Pi(X_i)} \Ind{Z_i(1) > \tau_n^{-1}} 
		+ (\gamma_1 - \varepsilon') \frac{1}{k} \Sn  
		\log(Z_i(1)\tau_n)
		\frac{D_i}{ \Pi(X_i)} \Ind{Z_i(1) > \tau_n^{-1}} 
		\bigg] \quad \text{and} \\
		&b= \log(1 + \varepsilon) \frac{1}{k} \Sn  
		\frac{D_i}{ \Pi(X_i)} \Ind{Z_i(1) > \tau_n^{-1}} 
		+ (\gamma_1 + \varepsilon') \frac{1}{k} \Sn  
		\log(Z_i(1)\tau_n)
		\frac{D_i}{ \Pi(X_i)} \Ind{Z_i(1) > \tau_n^{-1}} 
		\bigg].
	\end{align*}
	Since $\varepsilon$ and $\varepsilon'$ can be arbitrarily small, to prove $G_n^1 \tendsto{P} \gamma_1$, it is enough to show
	\begin{align*}
		(i) \quad & \frac{1}{k} \Sn  \frac{D_i}{ \Pi(X_i)} \Ind{Z_i(1) > \tau_n^{-1}} \tendsto{P} 1 \quad \text{and} \\
		(ii) \quad & \frac{1}{k} \Sn
		\log(Z_i(1) \tau_n )
		\frac{D_i}{ \Pi(X_i)} \Ind{Z_i(1) > \tau_n^{-1}} 
		\tendsto{P} 1.
	\end{align*} 
	
	For (i), 
	let $b_n = k$ and $S_n = \Sn  \frac{D_i}{ \Pi(X_i)} \Ind{Z_i(1) > \tau_n^{-1}} $. We have
	\[
	\ERW{ S_n } = n \ERW{ \frac{D_i}{ \Pi(X_i)} \Ind{Z_i(1) > \tau_n^{-1}} } 
	=n \ERW{ \frac{D_i}{ \Pi(X_i)} \Ind{Y_i >  q_1(1-\tau_n)} }
	= n P(Y_i(1) > q_1(1-\tau_n))
	= k,
	\]
	where at the second last equality we used Lemma~\ref{lemma:propensity_adjustment}, and
	\begin{align*}
		\frac{\Var(S_n)}{b_n^2} &= \frac{n}{k^2} \Var \left( \frac{D_i}{ \Pi(X_i)} \Ind{Z_i(1) > \tau_n^{-1}} \right) \\
		&= \frac{n}{k^2} \left[   \ERW{ \frac{D_i}{ \Pi(X_i)^2} \Ind{Z_i(1) > \tau_n^{-1}} }  -  \ERW{ \frac{D_i}{ \Pi(X_i)} \Ind{Z_i(1) > \tau_n^{-1}} }^2 \right] \\
		&< \frac{n}{k^2} \left[  \frac{1}{c} \tau_n -  \tau_n^2 \right] \to 0.
	\end{align*}
	Thus, by the weak law for triangular array (see Theorem 2.2.4 in \cite{durrett2013probability}), we have
	\[
	\frac{ S_n - k }{b_n} \tendsto{P} 0,
	\]
	or equivalently,
	\begin{equation}
		\frac{1}{k} \Sn  \frac{D_i}{ \Pi(X_i)} \Ind{Z_i(1) > \tau_n^{-1}} \tendsto{P} 1.
		\label{eq:hill_Gn2_lln}
	\end{equation}

	For (ii), let $b_n=k$ and $S_n=\Sn \log(Z_i(1) \tau_n ) \frac{D_i}{ \Pi(X_i)} \Ind{Z_i(1) > \tau_n^{-1}} $. By the fact that $\log(Z_i(1))$ has a standard exponential distribution, we have
	\begin{align*}
		\ERW{ S_n } &= n \ERW{ \log(Z_i(1) \tau_n ) \frac{D_i}{ \Pi(X_i)} \Ind{Z_i(1) > \tau_n^{-1}} } \\
		&= n \ERW{ \log(Z_i(1) \tau_n ) \Ind{Z_i(1) > \tau_n^{-1}} } \\
		&= n \int_{ \log(\tau_n^{-1}) }^\infty ( z + \log(\tau_n) ) e^{-z} \ dz \\
		&= k.
	\end{align*}
	Similarly, we have 
	\begin{align*}
		\frac{\Var(S_n)}{b_n^2} = \frac{n}{k^2} \Var \left( \log(Z_i(1) \tau_n ) \frac{D_i}{ \Pi(X_i)} \Ind{Z_i(1) > \tau_n^{-1}} \right)
		< \frac{n}{k^2} \left[ \frac{2}{c} \tau_n - \tau_n^2 \right] \to 0.
	\end{align*}
	Thus, by the weak law for triangular array, we have
	\begin{align}
		\frac{1}{k} \Sn
		\log(Z_i(1) \tau_n )
		\frac{D_i}{ \Pi(X_i)} \Ind{Z_i(1) > \tau_n^{-1}} 
		\tendsto{P} 1.
		\label{eq:hill_Gn1_term2}
	\end{align}
	This concludes that $G_n^1 \tendsto{P} \gamma_1$.

	Now we prove that $G_n^2, G_n^3, G_n^4$ converge to zero in probability.
	For $G_n^2$, let 
	\begin{align*}
		\Delta_n := \log(\wh q_1(1-\tau_n))  - \log(q_1(1-\tau_n))
		= 
		\log\left(\frac{\wh q_1(1-\tau_n)}{q_1(1-\tau_n)}\right),
	\end{align*}
	so
	\[
	|G_n^2|  = |\Delta_n|  
	\frac{1}{k} \Sn 
	\frac{D_i}{ \Pi(X_i)} \Ind{Y_i >  q_1(1-\tau_n)}.
	\] 
	Given the previous results \eqref{eq:hill_Gn2_lln} and the fact that $\Ind{Y_i(1) >  q_1(1-\tau_n)}=\Ind{Z_i(1) > \tau_n^{-1}}$ almost surely, it is sufficient to show that $\Delta_n = o_p(1)$.
	Consider
	\[
	\sqrt{k} \left( \frac{\wh q_1(1-\tau_n)}{q_1(1-\tau_n)} - 1 \right)
	= 
	\frac{\sqrt{k}q_1(1-\tau_n)^{-1}}{ \gamma_1 \lambda_{1,n} } \gamma_1 \lambda_{1,n} \left( \wh q_1(1-\tau_n) - q_1(1-\tau_n) \right),
	\]
	where $\lambda_{1,n} = \sqrt{\frac{n}{\tau_n}} f_1(q_1(1-\tau_n))$ is defined in Theorem~\ref{thm:interquantile}.
	By Lemma~\ref{lemma:interquantile_normalizing}, we have 
	\[
	\frac{\sqrt{k}q_1(1-\tau_n)^{-1}}{ \gamma_1 \lambda_{1,n} }  \to 1,
	\]
	and
	Theorem~\ref{thm:interquantile} implies that
	\begin{align*}
		\lambda_{1,n} (\wh{q}_1(1-\tau_n) - q_1(1-\tau_n))
		= O_p(1).
	\end{align*}
	Therefore,
	\[
	\sqrt{k} \left( \frac{\wh q_1(1-\tau_n)}{q_1(1-\tau_n)} - 1 \right) = O_p(1),
	\] 
	and consequently 
	\[
	\left( \frac{\wh q_1(1-\tau_n)}{q_1(1-\tau_n)} - 1 \right) \tendsto{P} 0.
	\]
	By the continuous mapping theorem, 
	\begin{equation}\label{equa: delta_n=op1}
		\Delta_n = o_p(1).
	\end{equation}
	Thus $G_n^2 \tendsto{P} 0$.
	
	For $G_n^3$, note that
	\begin{align*}
		0 \le & \left ( 
		\log(Y_i) - \log( \wh q_1(1-\tau_n))
		\right )
		\left (
		\Ind{Y_i > \wh q_1(1-\tau_n)} - \Ind{Y_i >  q_1(1-\tau_n)} 
		\right ) \\
		\le &
		\left ( 
		\log( q_1(1-\tau_n)) - \log( \wh q_1(1-\tau_n))
		\right )
		\left (
		\Ind{Y_i > \wh q_1(1-\tau_n)} - \Ind{Y_i >  q_1(1-\tau_n)} 
		\right ) \\
		= & \Delta_n
		\left (
		\Ind{Y_i > \wh q_1(1-\tau_n)} - \Ind{Y_i >  q_1(1-\tau_n)} 
		\right ),
	\end{align*}
	thus
	\begin{align*}
		G_n^3 
		&=  \frac{1}{k} \Sn 
		(\log(Y_i) - \log( \wh q_1(1-\tau_n))) 
		\frac{D_i}{  \wh \Pi(X_i)} 
		\left( 
		\Ind{Y_i > \wh q_1(1-\tau_n)} -\Ind{Y_i >  q_1(1-\tau_n)}
		\right) \\
		&\le  \Delta_n \frac{1}{k} \Sn 
		\frac{D_i}{  \wh \Pi(X_i)} 
		\left( 
		\Ind{Y_i > \wh q_1(1-\tau_n)} -\Ind{Y_i >  q_1(1-\tau_n)}
		\right) \\
		& = o_p(1).
	\end{align*}
	The last equality follows from Lemma~\ref{lemma:weighted_cdf_diff} and the result \eqref{equa: delta_n=op1} that $\Delta_n = o_p(1)$.
	Since $G_n^3 \ge 0$, we have $G_n^3 \tendsto{P} 0$.
	
	For $G_n^4$, we have that 
	\begin{align*}
		|G_n^4| 
		& \leq
		\left|
		\frac{1}{k} \Sn (\log(Y_i) - \log(  q_1(1-\tau_n))) 
		D_i \Ind{Y_i >  q_1(1-\tau_n)} 
		\left(
		\frac{1}{  \wh \Pi(X_i)}- \frac{1}{ \Pi(X_i)} 
		\right) 
		\right|\\
		& \quad +
		\left|
		\frac{1}{k} \Sn ( \log( q_1(1-\tau_n)) - \log( \wh q_1(1-\tau_n))) 
		D_i \Ind{Y_i >  q_1(1-\tau_n)} 
		\left(
		\frac{1}{  \wh \Pi(X_i)}- \frac{1}{ \Pi(X_i)} 
		\right)
		\right| \\
		& \leq 
		\sup_x \left| \frac{1}{\wh \Pi(x)}- \frac{1}{\Pi(x)} \right|
		\Bigg(
		\frac{1}{k} \Sn
		(\log(Y_i) - \log(  q_1(1-\tau_n))) 
		D_i \Ind{Y_i >  q_1(1-\tau_n)} \\
		& \quad + 
		|\Delta_n|
		\frac{1}{k} \Sn 
		D_i \Ind{Y_i >  q_1(1-\tau_n)} 
		\Bigg) \\
		& \le \sup_x \left| \frac{1}{\wh \Pi(x)}- \frac{1}{\Pi(x)} \right|
		\left( G_n^{1} + | G_n^{2} | \right).
	\end{align*}
	We have shown that $G_n^{1} \tendsto{P} \gamma_1$ 
	and $G_n^{2} = o_p(1)$, so $G_n^{4} = o_p(1)$ by Lemma~\ref{lemma:propensity_consistency}.
	
\end{proof}

\subsection{Proof of Theorem~\ref{thm:hill}} \label{appendix:ProofThm2}
To prove Theorem~\ref{thm:hill}, we first introduce Lemma~\ref{lemma:s_order}, which shows that the term $\ERW{ S_{i,j,n}^p }$ is of order $O(\tau_n)$.
\begin{lemma} 
	For
	\[
	S_{i,j,n} := \gamma_j \log \left( \frac{\tau_n}{1-F_j(Y_i(j))} \right)  \Ind{ Y_i(j) > q_j(1-\tau_n)}
	\]
	defined in Theorem~\ref{thm:hill} and for all $p \in \mathbb{N}$,  we have 
	\[
	\ERW{ S_{i,j,n}^p } = O(\tau_n)
	\]
	and
	\[
	\ERW{ S_{i,j,n}^p | X_i } = O_p(\tau_n).
	\]
	\label{lemma:s_order}
\end{lemma}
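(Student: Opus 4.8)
The plan is to collapse the whole lemma to one exact moment computation for the standard exponential distribution, and then to obtain the conditional statement for free from the unconditional one via Markov's inequality, so that the conditional density $f_{j\mid X}$ never has to be controlled directly.

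First I would perform the same change of variables used in the proof of Lemma~\ref{lemma:hill_consistency}. Since $F_j$ is continuous (Assumption~\ref{ass:tail}~$i)$), the variable $Z_i(j):=1/(1-F_j(Y_i(j)))$ has distribution function $1-1/z$ on $z\ge 1$, so $E_i:=\log Z_i(j)$ is standard exponential, and $\Ind{Y_i(j)>q_j(1-\tau_n)}=\Ind{Z_i(j)>\tau_n^{-1}}=\Ind{E_i>m_n}$ almost surely with $m_n:=\log(1/\tau_n)\to\infty$ (the event $\{F_j(Y_i(j))=1-\tau_n\}$ being null). On this event $\log(\tau_n/(1-F_j(Y_i(j))))=\log(\tau_nZ_i(j))=E_i-m_n>0$, so $S_{i,j,n}=\gamma_j(E_i-m_n)\Ind{E_i>m_n}$ almost surely. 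Taking $p$-th powers and expectations, the substitution $u=x-m_n$ gives $\E[S_{i,j,n}^p]=\gamma_j^p\int_{m_n}^\infty (x-m_n)^p e^{-x}\,dx=\gamma_j^p e^{-m_n}\int_0^\infty u^p e^{-u}\,du=\gamma_j^p\,p!\,\tau_n$, which is the first claim.

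For the conditional statement I would use only nonnegativity of $S_{i,j,n}^p$ together with the tower property: $\E[\ERW{S_{i,j,n}^p\mid X_i}]=\E[S_{i,j,n}^p]=\gamma_j^p\,p!\,\tau_n$, so Markov's inequality yields $P(\ERW{S_{i,j,n}^p\mid X_i}>M\tau_n)\le \gamma_j^p\,p!/M$ for every $M>0$, a bound uniform in $n$. Given $\varepsilon>0$, choosing $M=\gamma_j^p\,p!/\varepsilon$ makes this probability at most $\varepsilon$ for all $n$, which is exactly $\ERW{S_{i,j,n}^p\mid X_i}=O_p(\tau_n)$.

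The computation is routine; the only point needing a word of justification is the almost-sure identification of $\{Y_i(j)>q_j(1-\tau_n)\}$ with $\{Z_i(j)>\tau_n^{-1}\}$ and the nonnegativity of $\log(\tau_nZ_i(j))$ on that event, both of which follow from continuity of $F_j$ and are already invoked in the excerpt. I do not foresee a substantive obstacle here.
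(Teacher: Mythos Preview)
Your proof is correct and follows essentially the same route as the paper: the paper also reduces to the standard exponential via $\log(1/(1-F_j(Y_i(j))))$, computes $\ERW{S_{i,j,n}^p}=\gamma_j^p\,p!\,\tau_n$ by the same substitution, and then deduces the $O_p(\tau_n)$ conditional bound from nonnegativity and Markov's inequality. Your write-up is slightly more explicit about the almost-sure identification of the indicator events, but the argument is otherwise identical.
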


\begin{proof}[Proof of Lemma~\ref{lemma:s_order}]
	
	We have already seen in the proof of Lemma~\ref{lemma:hill_consistency} that 
	$\log \left( \frac{1}{1-F_j(Y_i(j))} \right)$ follows a standard exponential distribution.
	Therefore, for $p \in \mathbb{N}$, we have
	\begin{align*}
		\ERW{S_{i,j,n}^p}
		& = \gamma_j^p \int_{-\log(\tau_n)}^\infty (z + \log(\tau_n))^p e^{-z} \ dz
		\\
		& = \tau_n \gamma_j^p \int_{0}^\infty z^p e^{-z} \ dz
		\\
		& = \tau_n \gamma_j^p p! < \infty.
	\end{align*}
	Thus the first claim follows.
	Note that $S_{i,j,n} \ge 0$ for positive $\gamma_1$, thus the second claim follows from the Markov inequality.
\end{proof}

Now we prove Theorem~\ref{thm:hill}.
\begin{proof}[Proof of Theorem~\ref{thm:hill}] 
	We show the claim for $j=1$, and the case $j=0$ can be proved analogously.
	As in the proof of Lemma~\ref{lemma:hill_consistency}, we expend
	\[
	\wh \gamma_1^H = 
	G_n^1 + G_n^2 + G_n^3 + G_n^4
	\]
	where 
	\begin{align*}
		G_n^1  &:= \frac{1}{k} \Sn (\log(Y_i) - \log( q_1(1-\tau_n))) 
		\frac{D_i}{ \Pi(X_i)} \Ind{Y_i >  q_1(1-\tau_n)} \\
		G_n^2 &:=  (\log(q_1(1-\tau_n)) - \log( \wh q_1(1-\tau_n)) )
		\frac{1}{k} \Sn 
		\frac{D_i}{ \Pi(X_i)} \Ind{Y_i >  q_1(1-\tau_n)} 
		\\
		G_n^3 &:=  \frac{1}{k} \Sn (\log(Y_i) - \log( \wh q_1(1-\tau_n))) 
		\frac{D_i}{  \wh \Pi(X_i)} \left( \Ind{Y_i > \wh q_1(1-\tau_n)} -
		\Ind{Y_i >  q_1(1-\tau_n)} 
		\right) \\
		G_n^4 &:= \frac{1}{k} \Sn (\log(Y_i) - \log( \wh q_1(1-\tau_n))) 
		D_i \Ind{Y_i >  q_1(1-\tau_n)} 
		\left(
		\frac{1}{  \wh \Pi(X_i)}- \frac{1}{ \Pi(X_i)} 
		\right). 
	\end{align*}
	In the following, we will show that
	\begin{align*}
		\sqrt{k} G_n^1 &= \frac{\lambda_1}{1-\rho_1} + \frac{1}{\sqrt{k}} \Sn  \frac{D_i}{ \Pi(X_i)} S_{i,1,n} + o_p(1) \\
		\sqrt{k} G_n^2 &=  -\frac{1}{\sqrt{n}} \Sn \gamma_1 \phi_{i,1,n}  +  o_{p}(1) \\  
		\sqrt{k} G_n^3 &= o_p(1) \\
		\sqrt{k} G_n^4 &= -  \frac{1}{\sqrt{k}} \Sn \frac{\ERW{S_{i,1,n} \mid X_i}}{\Pi(X_i)} (D_i - \Pi(X_i)) + o_p(1),
	\end{align*}
	which then implies the original claim that
	\[
	\sqrt{k} (\wh \gamma^H_1 - \gamma_1) = \frac{\lambda_1}{1-\rho_1} + \frac{1}{\sqrt{n}} \Sn \left( \psi_{i,1,n} -\gamma_j\phi_{i,1,n} \right) + o_p(1).
	\]

	For $G_n^1$, we proceed similarly as in the proof of Theorem 3.2.5 in \cite{de2007extreme}.
	As shown in the proof of \mbox{Lemma~\ref{lemma:hill_consistency}}, we have that almost surely,
	\[
	G_n^1 = \frac{1}{k} \Sn\left( \log(U_1(Z_i(1))) - \log\left( U_1 (n/k)\right) \right) 
	\frac{D_i}{ \Pi(X_i)} \Ind{Z_i(1) >  n/k},
	\]
	where $Z_i(1) = 1/(1-F_1(Y_i(1)))$ and $U_1 = (1/(1-F_1))^\leftarrow$.
	By Assumption \ref{ass:second_order_regular_var}, we have for all $x >0$,
	\[
	\lim_{t \to \infty} \frac{x^{-\gamma_1} \frac{U_1(tx)}{U_1(t)} - 1}{A_1(t)}
	= \frac{x^{\rho_1} - 1}{\rho_1},
	\]
	with $\gamma_1 > 0$, $\rho_1 < 0$ and $\lim_{t \to \infty} A_1(t)=0$.
	Equivalently, we have
	\[
	\lim_{t \to \infty}
	\frac{ \log U_1(tx) - \log U_1(t) - \gamma_1  \log(x)}{A_1(t)}
	= \frac{x^{\rho_1}-1}{\rho_1}.
	\]
	By the proof of Theorem 3.2.5 in \cite{de2007extreme},
	there exists a function $A$ such that $\lim_{t \to \infty} A(t) / A_1(t) = 1$
	and for any $\epsilon > 0$,
	there exists $t_0 > 0$ 
	such that for all
	$t \ge t_0$, $x \ge 1$,
	\begin{equation*}
		\left| \frac{ \log U_1(tx) - \log U_1(t) - \gamma_1  \log(x)}{A(t)} 
		- \frac{x^{\rho_1}-1}{\rho_1}
		\right| \le \epsilon x^{\rho_1 + \epsilon}.
		\label{eq:log_uniform_second}
	\end{equation*}
	
	For large enough $n$ 
	and for $i \in \{1,\dots,n\}$ such that $Z_i(1) > n/k$, we can set $t = n/k $ and $x = Z_i(1) \cdot k/n$. Multiplying by $\sqrt{k} \frac{D_i}{k \Pi(X_i)}$ on both sides of the above inequality and summing up all $i \in \{1,\dots,n\}$ with $Z_i(1) > n/k$ gives us that almost surely, 
	\begin{equation}
		-\epsilon \sqrt{k}G_n^{1,3} \le \sqrt{k}G_n^1 - \sqrt{k}G_n^{1,1} - \sqrt{k}G_n^{1,2}  \le \epsilon \sqrt{k}G_n^{1,3},
		\label{Gn1_inequality}
	\end{equation}
	where 
	\begin{align*}
		G_n^{1,1} &:= \frac{\gamma_1}{k} \Sn \log\left(Z_i(1) \frac{k}{n}\right)
		\frac{D_i}{ \Pi(X_i)} \Ind{Z_i(1) >  n/k}  \\
		G_n^{1,2} &:=  A \nk \frac{1}{k} \Sn \frac{\left(Z_i(1) \frac{k}{n}\right)^{\rho_1} - 1}{\rho_1}
		\frac{D_i}{ \Pi(X_i)} \Ind{Z_i(1) >  n/k} \\
		G_n^{1,3} &:=   A \nk \frac{1}{k} \Sn \left(Z_i(1) \frac{k}{n}\right)^{\rho_1+\epsilon}
		\frac{D_i}{ \Pi(X_i)} \Ind{Z_i(1) >  n/k}. 
	\end{align*}
	
	Let $0 < \epsilon < -\rho_1$. We first show that $\sqrt{k} G_n^{1,3}$ converge in probability to some constant. Since $\sqrt{k} A_1 \nk \to \lambda_1$  by assumption and $A \nk / A_1 \nk \to 1$, it is enough to show that
	\[
	\frac{1}{k} \Sn \left(Z_i(1) \frac{k}{n}\right)^{\rho_1+\epsilon} \frac{D_i}{ \Pi(X_i)} \Ind{Z_i(1) >  n/k}
	\]
	converge in probability to some constant.
	
	Let $b_n=k$ and $S_n = \Sn \left(Z_i(1) \frac{k}{n}\right)^{\rho_1+\epsilon} \frac{D_i}{ \Pi(X_i)} \Ind{Z_i(1) >  n/k}$. By the fact that
	$Z_i(1)$ has probability density function $1/z^2$ on $z\ge 1$, we can calculate
	\[
	\ERW{S_n} 
	= \frac{k}{1- \rho_1 - \epsilon}
	\]
	and
	\[
	\frac{\Var(S_n)}{b_n^2} < \frac{n}{k^2} [ \frac{1}{c(1-2(\rho_1+\epsilon))} \cdot \frac{k}{n} - \frac{k^2}{n^2(1- \rho_1 - \epsilon)^2}] \to 0.
	\]
	Thus, by the weak law for triangular array, we have
	\[
	\frac{1}{k} \Sn \left(Z_i(1) \frac{k}{n}\right)^{\rho_1+\epsilon} \frac{D_i}{ \Pi(X_i)} \Ind{Z_i(1) >  n/k} \tendsto{P}  \frac{1}{1-\rho_1-\epsilon},
	\]
	which implies
	\[
	\sqrt{k} G_n^{1,3} \tendsto{P} \frac{\lambda_1}{1-\rho_1-\epsilon}.
	\]

	Because $\epsilon$ can be arbitrarily close to zero, by inequality \eqref{Gn1_inequality}, we have that almost surely,
	\[
	\sqrt{k}G_n^1 = \sqrt{k}G_n^{1,1} + \sqrt{k}G_n^{1,2} + o_p(1).
	\]
	
	Similarly, by using the weak law for triangular array, one can obtain that 
	\[
	\sqrt{k}G_n^{1,2} \tendsto{P} \frac{\lambda_1}{1-\rho_1}.
	\]
	Hence, we conclude that almost surely,
	\begin{equation}
		\sqrt{k} G_n^1 
		=
		\frac{\lambda_1}{1-\rho_1} + \sqrt{k} G_n^{1,1} + o_p(1)
		= \frac{\lambda_1}{1-\rho_1} + \frac{1}{\sqrt{k}} \Sn  \frac{D_i}{ \Pi(X_i)} S_{i,1,n} + o_p(1).
		\label{eq:hill_Gn1_influence}
	\end{equation}
	
	For $G_n^2$, similarly as in the proof of Lemma~\ref{lemma:hill_consistency}, we have
	
	\begin{align*}
		\sqrt{k} \left( \frac{\wh q_1(1-\tau_n)}{q_1(1-\tau_n)} - 1 \right)
		= 
		\gamma_1 \lambda_{1,n} (\wh{q}_1(1-\tau_n) - q_1(1-\tau_n))
		= 
		\frac{1}{\sqrt{n}} \Sn \gamma_1 \phi_{i,1,n}  +  o_{p}(1),
	\end{align*}
	where $\lambda_{1,n}$ and $\phi_{i,1,n}$  are defined as in Theorem~\ref{thm:interquantile}, and we applied Theorem~\ref{thm:interquantile} to obtain the last equality.
	By applying the delta method, we have
	\begin{equation}
		\sqrt{k} (\log (\wh q_1(1-\tau_n)) - \log (q_1(1-\tau_n)) )
		= 
		\frac{1}{\sqrt{n}} \Sn \gamma_1 \phi_{i,1,n}  +  o_{p}(1),
		\label{eq:hill_log_distribution}
	\end{equation}
	Combining with result \eqref{eq:hill_Gn2_lln}, we obtain
	\begin{equation}
		\sqrt{k} G_n^2 =  -
		\frac{1}{\sqrt{n}} \Sn \gamma_1 \phi_{i,1,n}  +  o_{p}(1).
		\label{eq:hill_Gn2_influence}
	\end{equation}
	Note that by Theorem~\ref{thm:interquantile}, we have $\sqrt{k} G_n^2 = O_{p}(1)$.
	
	For $G_n^3$, similarly as in the proof of Lemma~\ref{lemma:hill_consistency}, we have
	\[
	0 \le
	\sqrt{k} G_n^3 \le  
	\sqrt{k} ( \log(q_1(1-\tau_n)) - \log(\wh q_1(1-\tau_n)) )
	\frac{1}{k} \Sn \frac{D_i}{  \wh \Pi(X_i)} 
	\left( 
	\Ind{Y_i > \wh q_1(1-\tau_n)} -\Ind{Y_i >  q_1(1-\tau_n)}
	\right).
	\]
	Theorem~\ref{thm:interquantile} and the result \eqref{eq:hill_log_distribution} then imply that
	\[
	\sqrt{k}( \log(q_1(1-\tau_n)) - \log(\wh q_1(1-\tau_n))) = O_p(1).
	\]
	Thus, by Lemma~\ref{lemma:weighted_cdf_diff} we have 
	\begin{equation}
		\sqrt{k} G_n^3 = o_p(1). 
		\label{eq:gn3_to_0}
	\end{equation}
	
	For $G_n^4$, we expand 
	\begin{align*}
		\sqrt{k} G_n^4 =& 
		\sqrt{k} \frac{1}{k} \Sn (\log(Y_i) - \log( q_1(1-\tau_n))) D_i \Ind{Y_i >  q_1(1-\tau_n)} 
		\left( \frac{1}{ \wh \Pi(X_i)} - \frac{ 1}{\Pi(X_i)} \right) \\
		& + \sqrt{k} (\log( q_1(1-\tau_n)) - \log( \wh q_1(1-\tau_n)) )
		\frac{1}{k} \Sn 
		D_i \Ind{Y_i >  q_1(1-\tau_n)} 
		\left( \frac{1}{ \wh \Pi(X_i)} - \frac{ 1}{\Pi(X_i)} \right).
	\end{align*}
	For the second term, we have
	\begin{align*}
		& \left|\sqrt{k} (\log(q_1(1-\tau_n)) - \log( \wh q_1(1-\tau_n)) ) 
		\frac{1}{k} \Sn 
		D_i \Ind{Y_i >  q_1(1-\tau_n)} 
		\left( \frac{1}{ \wh \Pi(X_i)} - \frac{ 1}{\Pi(X_i)} \right) \right| \\
		\le &  \sup_{x} \left| 
		\frac{1}{ \wh \Pi(x)} - \frac{ 1}{\Pi(x)}  \right|
		|\sqrt{k} \log(\wh q_1(1-\tau_n)) - \log( q_1(1-\tau_n))| 
		\frac{1}{k} \Sn 
		D_i \Ind{Y_i >  q_1(1-\tau_n)} 
		\\
		= &  \sup_{x} \left| 
		\frac{1}{ \wh \Pi(x)} - \frac{ 1}{\Pi(x)}  \right|
		|\sqrt{k} G_n^2 | \\
		= & o_{p}(1)
	\end{align*}
	by that fact that $\sqrt{k} G_n^2 = O_{p}(1)$ (see the comment below \eqref{eq:hill_Gn2_influence}) and Lemma~\ref{lemma:propensity_consistency}.
	Hence, we obtain 
	\[
	\sqrt{k} G_n^4  = 
	\frac{1}{\sqrt{k}} \Sn (\log(Y_i) - \log( q_1(1-\tau_n))) D_i \Ind{Y_i >  q_1(1-\tau_n)} 
	\left( \frac{1}{ \wh \Pi(X_i)} - \frac{ 1}{\Pi(X_i)} \right) 
	+ o_p(1).
	\]
	
	Denote 
	\[
	\wt G_n^4  := 
	\frac{\gamma_1}{k} \Sn \log(Z_i(1) \tau_n) D_i \Ind{Y_i >  q_1(1-\tau_n)} 
	\left( \frac{1}{ \wh \Pi(X_i)} - \frac{ 1}{\Pi(X_i)} \right),
	\]
	then we have
	\begin{align*}
		& \left|
		\frac{1}{\sqrt{k}} \Sn (\log(Y_i) - \log( q_1(1-\tau_n))) D_i \Ind{Y_i >  q_1(1-\tau_n)} 
		\left( \frac{1}{ \wh \Pi(X_i)} - \frac{ 1}{\Pi(X_i)} \right) 
		- \sqrt{k} \wt G_n^4  \right|
		\\ 
		\le & 
		\sup_x \left| 
		\frac{1}{ \wh \Pi(x)} - \frac{ 1}{\Pi(x)}  \right|
		\frac{1}{\sqrt{k}} \Sn \left| \log(Y_i) - \log( q_1(1-\tau_n)) - \gamma_1 \log(Z_i(1) \tau_n) \right| D_i \Ind{Y_i >  q_1(1-\tau_n)} 
		\\ 
		\le & 
		\sup_x \left| 
		\frac{1}{ \wh \Pi(x)} - \frac{ 1}{\Pi(x)}  \right|
		\frac{1}{\sqrt{k}} \Sn  \Bigg( \left| \log(Y_i) - \log( q_1(1-\tau_n)) - \gamma_1 \log(Z_i(1) \tau_n ) - A \nk\frac{\left(Z_i(1) \frac{k}{n}\right)^{\rho_1} - 1}{\rho_1} \right|
		\\ 
		& \qquad + \left| A \nk \right| \frac{\left(Z_i(1) \frac{k}{n}\right)^{\rho_1} - 1}{\rho_1} \Bigg)
		\frac{D_i}{\Pi(X_i)} \Ind{Y_i >  q_1(1-\tau_n)} 
		\\
		= & 
		\sup_x \left| 
		\frac{1}{ \wh \Pi(x)} - \frac{ 1}{\Pi(x)}  \right|
		\left( \sqrt{k} |G_n^{1,2}| + o_p(1) \right)
		= o_p(1),
	\end{align*}
	where for the second last equality we used that
	\begin{align*}
		&\frac{1}{\sqrt{k}} \Sn 
		\left| \log(Y_i) - \log( q_1(1-\tau_n)) - \gamma_1 \log(Z_i(1) \tau_n ) - A \nk\frac{\left(Z_i(1) \frac{k}{n}\right)^{\rho_1} - 1}{\rho_1} \right|
		\frac{D_i}{\Pi(X_i)} \Ind{Y_i >  q_1(1-\tau_n)} \\ &= o_p(1) 
	\end{align*}
	which can be shown by using a similar argument as on page 43.
	Thus, we have
	\[
	\sqrt{k} G_n^4 = \sqrt{k} \wt G_n^4 + o_p(1) = 
	\frac{1}{\sqrt{k}} \Sn D_i S_{i,1,n} 
	\left( \frac{1}{ \wh \Pi(X_i)} - \frac{ 1}{\Pi(X_i)} \right) + o_p(1)
	\]
	where $S_{i,1,n} = 
	\gamma_1 \log(Z_i(1) \tau_n)\Ind{Y_i >  q_1(1-\tau_n)} $.
	
	In order to derive the influence function 
	which arises from using the estimated propensity score,
	we follow similar steps as in the proof of Theorem 3.1 of \cite{zhang2018extremal}.
	First, we rewrite $\sqrt{k} G_n^{4} = G_n^{4,1} - G_n^{4,2} + o_p(1)$ with 
	\begin{align*}
		G_n^{4,1} &:=  \frac{1}{\sqrt{k}} \Sn D_i S_{i,1,n} 
		\frac{( \wh \Pi(X_i) - \Pi(X_i))^2}{\wh \Pi(X_i) \Pi(X_i)^2},
		\\
		G_n^{4,2} &:= 
		\frac{1}{\sqrt{k}} \Sn D_i S_{i,1,n}
		\frac{ \wh \Pi(X_i) - \Pi(X_i)}{\Pi(X_i)^2}.
	\end{align*}
	
	
	For $G_n^{4,1}$, note that $S_{i,1,n} \ge 0$, so we have
	\begin{align*}
		0 \le G_n^{4,1} &\le \frac{1}{c^2} \sup_x | \wh \Pi(x) - \Pi(x)|^2 \sup_x \left| \frac{1}{ \wh \Pi(x)} \right| \frac{1}{\sqrt{k}} \Sn S_{i,1,n} \\
		& = o_p(k^{-1/2}) O_p(1) \frac{1}{\sqrt{k}} O_p(k)  = o_p(1),
	\end{align*}
	where in the second inequality we used Assumption \ref{ass:1} $iii.)$ and $D_i \le 1$, 
	and in the second last equality we used Lemma~\ref{lemma:propensity_consistency},
	result \eqref{eq:prob_bounded_prop} and
	$\Sn S_{i,1,n} = O_p(k)$ which can be obtained by the Markov inequality and Lemma~\ref{lemma:s_order}.
	Thus $\sqrt{k} G_n^{4} = - G_n^{4,2} + o_p(1)$.
	
	For $G_n^{4,2}$, we expand \mbox{$G_n^{4,2} = G_n^{4,3} + G_n^{4,4}$} where
	\begin{align*}
		G_n^{4,3} & := \frac{n}{\sqrt{k}} 
		\int_{\supp(X)} \frac{1}{\Pi(x)} ( \wh \Pi(x) - \Pi(x) ) 
		\ERW{S_{1,1,n} | x} dF_X(x)
		\\ G_n^{4,4} & := 
		\frac{1}{\sqrt{k}} \Sn \left( D_i S_{i,1,n} 
		\frac{ \wh \Pi(X_i) - \Pi(X_i)}{\Pi(X_i)^2} - 
		\int_{\supp(X)} \frac{1}{\Pi(x)} ( \wh \Pi(x) - \Pi(x) ) 
		\ERW{S_{i,1,n} | x} dF_X(x) \right)
	\end{align*}
	and $F_X$ denotes the CDF of $X$. 
	
	First, we show $G_n^{4,4} = o_p(1)$.
	For this we consider
	\[
	\pi_n := \argmin_{\pi \in \R^{h_n}} 
	\ERW{ \Pi(X) \log(L(H_{h_n}(X)^T \pi)) + 
		(1-\Pi(X)) \log(1-L(H_{h_n}(X)^T \pi))}
	\]
	and the pseudo true propensity score
	$\Pi_n(x) = L(H_{h_n}(x)^T \pi _n)$, where
	$H_{h_n}$ is the vector consisting of $h_n$ sieve basis
	functions and $L$ is the sigmoid function (see Section \ref{appendix:sieve}
	for more details).
	We rewrite $G_n^{4,4} = G_n^{4,5} + G_n^{4,6}$ with
	\begin{align*}
		G_n^{4,5} & := 
		\frac{1}{\sqrt{k}} \Sn \left( D_i S_{i,1,n} 
		\frac{ \wh \Pi(X_i) - \Pi_n(X_i)}{\Pi(X_i)^2} - 
		\int_{\supp(X)} \frac{1}{\Pi(x)} ( \wh \Pi(x) - \Pi_n(x) ) 
		\ERW{S_{i,1,n} | x} dF_X(x) \right),
		\\
		G_n^{4,6} & := 
		\frac{1}{\sqrt{k}} \Sn \left( D_i S_{i,1,n} 
		\frac{ \Pi_n(X_i) - \Pi(X_i)}{\Pi(X_i)^2} - 
		\int_{\supp(X)} \frac{1}{\Pi(x)} (  \Pi_n(x) - \Pi(x) ) 
		\ERW{S_{i,1,n} | x} dF_X(x) \right),
	\end{align*}
	and we show that both terms converge to $0$ in probability.
	For $G_n^{4,6}$, note that
	\[
	\ERW{ D_i S_{i,1,n} 
		\frac{ \Pi_n(X_i) - \Pi(X_i)}{\Pi(X_i)^2}} = 
	\int_{\supp(X)} \frac{1}{\Pi(x)} (  \Pi_n(x) - \Pi(x) ) 
	\ERW{S_{i,1,n} | x} dF_X(x), 
	\]
	and thus $\ERW{G_n^{4,6}}=0$. We also have 
	\begin{align*}
		\Var (G_n^{4,6})
		& =
		\frac{1}{k} \Sn \Var \bigg( D_i S_{i,1,n} 
		\frac{ \Pi_n(X_i) - \Pi(X_i)}{\Pi(X_i)^2} \bigg)
		\\ & \le
		\frac{n}{k} \ERW{ D_i S_{i,1,n}^2 \left( \frac{\Pi_n(X_i) - \Pi(X_i)}{ \Pi(X_i)^2 } \right)^2
		}
		\\ & \le \frac{n}{k} \frac{\sup_x | \Pi_n(x) - \Pi(x) |^2}{c^4} \ERW{ S_{i,1,n}^2}
		\\ & = O(\zeta(h_n)^2 h_n^{-s/r} ) \to 0.
	\end{align*}
	where for the last equality we applied Lemma~\ref{lemma:s_order} which gives $\ERW{ S_{i,1,n}^2} = O( \tau_n )$ and the Lemma 1 of \cite{hirano2003efficient} which gives $\sup_x | \Pi_n(x) - \Pi(x) | = O(\zeta(h_n) h_n^{-s/2r})$ with $\zeta(h_n) = \sup_{x} \Vert H_{h_n}(x) \Vert$ under Assumption \ref{ass:sieve}. The convergence to $0$ can be obtained because
	Assumption \ref{ass:sieve} $iv)$ $n \tau_n \zeta(h_n)^6 h_n^{-s/r} \to 0$ implies
	$ \zeta(h_n)^2 h_n^{-s/r} = o((n \tau_n)^{-1}) = o(k^{-1}) $ as $\zeta(h_n) \ge 1$.
	Therefore we have $G_n^{4,6} = o_p(1)$.
	
	For $G_n^{4,5}$, we use the Taylor expansion to get 
	\[G_n^{4,5} = G_n^{4,5,1}( \wh \pi_n - \pi_n)+ \frac{1}{2}( \wh \pi_n - \pi_n)^T(G_n^{4,5,2} - G_n^{4,5,3})( \wh \pi_n - \pi_n)\]
	with
	\begin{align*} 
		G_n^{4,5,1} & := \frac{1}{\sqrt{k}} \Sn 
		\bigg( \frac{D_i S_{i,1,n} }{\Pi(X_i)^2} L'(H_{h_n}(X_i)^T \pi_n) H_{h_n}(X_i)^T \\
		& \qquad \qquad
		- \int_{\supp(X)} \frac{\ERW{S_{i,1,n}|x} }{\Pi(x)} L'(H_{h_n}(x) \pi_n) H_{h_n}(x)^T dF_X(x) \bigg)
		\\
		G_n^{4,5,2} & := \frac{1}{\sqrt{k}} \Sn
		\frac{D_i S_{i,1,n} }{\Pi(X_i)^2}L''(H_{h_n}(X_i)^T \wt{\pi}_n) H_{h_n}(X_i) H_{h_n}(X_i)^T
		\\
		G_n^{4,5,3} & := \frac{1}{\sqrt{k}} \Sn \int_{\supp(X)} 
		\frac{\ERW{S_{i,1,n}|x} }{\Pi(x)}L''(H_{h_n}(x)^T \wt {\pi}_n) H_{h_n}(x) H_{h_n}(x)^T \ dF_X(x).
	\end{align*}
	where $\wt \pi_n$ is random, lies on the line
	between $\pi_n$ and $\wh \pi_n$, and depends on $X_i$ resp. $x$.
	For $G_n^{4,5,1}$, we have
	\begin{align*}
		\ERW{ \Vert G_n^{4,5,1}\Vert^2 }
		&\le \frac{n}{k} \ERW{ \left\Vert \frac{D_i S_{i,1,n} }{\Pi(X_i)^2} L'(H_{h_n}(X_i)^T \pi_n) H_{h_n}(X_i)^T \right\Vert^2 } \\
		&< \frac{n}{k} \frac{1}{c^4} \zeta(h_n)^2 \ERW{ S_{i,1,n}^2}
		= O(\zeta(h_n)^2),
	\end{align*}
	where the first inequality holds because the summands of $G_n^{4,5,1}$ 
	are i.i.d. and with mean $0$, the second inequality holds because 
	$|L'| < 1$, $1/\Pi(X_i)^4 < 1/c^4$ and $\zeta(h_n) = \sup_x \Vert H_{h_n}(x) \Vert$, and 
	for the last equality we used Lemma~\ref{lemma:s_order}. Thus $\ERW{ \Vert G_n^{4,5,1}\Vert } \le \left(\ERW{ \Vert G_n^{4,5,1} \Vert^2} \right)^{1/2}
	= O(\zeta(h_n))$.
	
	Note that the summands of $G_n^{4,5,2}$ and $G_n^{4,5,3}$ are no longer independent because of $\wt \pi_n$. Therefore, for $G_n^{4,5,2}$ and $G_n^{4,5,3}$, we apply the triangle inequality to obtain
	\begin{align*}
		\ERW{ \Vert G_n^{4,5,2} \Vert}
		&\le \frac{n}{\sqrt{k}} \frac{1}{c^2} \zeta(h_n)^2  
		\ERW{ S_{i,1,n}} = O(\sqrt{k} \zeta(h_n)^2), \\
		\ERW{ \Vert G_n^{4,5,3} \Vert}
		&\le \frac{n}{\sqrt{k}} \frac{1}{c^2} \zeta(h_n)^2  
		\ERW{ S_{i,1,n}} = O(\sqrt{k} \zeta(h_n)^2),
	\end{align*}
	where we used similar arguments as for $G_n^{4,5,1}$ and the fact that $| L'' | < 1$.
	By the Markov inequality, we have
	\begin{align*}
		\Vert G_n^{4,5,1}\Vert = O_p(\zeta(h_n))
		\quad \text{and} \quad
		\Vert G_n^{4,5,2}\Vert = \Vert G_n^{4,5,3}\Vert =O_p(\sqrt{k} \zeta(h_n)^2).
	\end{align*}
	
	Under the Assumption \ref{ass:sieve},
	we have by the Lemma 2 in \cite{hirano2003efficient} that
	$\Vert \wh \pi_n - \pi_n \Vert = O_p(\sqrt{h_n/n})$. 
	Hence, by the Cauchy–Schwarz inequality, we have
	\begin{align*} 
		|G_n^{4,5}| 
		&\le \Vert G_n^{4,5,1} \Vert \Vert  \wh \pi_n - \pi_n\Vert +
		\frac{1}{2}\Vert \wh \pi_n - \pi_n\Vert^2 (\Vert G_n^{4,5,2}\Vert  +\Vert  G_n^{4,5,3} \Vert) \\
		&= O_p( \zeta(h_n) \sqrt{\frac{h_n}{n}}) + O_p( \sqrt{k} \zeta(h_n)^2 \frac{h_n}{n})
	\end{align*}
	By Assumption \ref{ass:sieve} iv) $\frac{1}{\sqrt{n}} \zeta(h_n)^2 h_n \to 0$, we have $\zeta(h_n) \sqrt{\frac{h_n}{n}} \to 0$ and $ \sqrt{k} \zeta(h_n)^2  \frac{h_n}{n} \to 0$, thus $G_n^{4,5} = o_p(1)$, which concludes that $G_n^{4,4} = o_p(1)$.
	
	At this point, we have $\sqrt{k} G_n^{4} = - G_n^{4,3} + o_p(1)$. 
	For $G_n^{4,3}$, we proceed in a similar manner as for $G_n^{4,4}$.
	First, we decompose $G_n^{4,3} = G_n^{4,7} + G_n^{4,8}$
	with
	\begin{align*}
		G_n^{4,7} & := \frac{n}{\sqrt{k}} \int_{\supp(X)}
		\frac{1}{\Pi(x)} ( \wh \Pi(x) - \Pi_n(x) ) 
		\ERW{S_{1,1,n} | x} dF_X(x),
		\\
		G_n^{4,8} & := \frac{n}{\sqrt{k}} \int_{\supp(X)}
		\frac{1}{\Pi(x)} (  \Pi_n(x) - \Pi(x) ) 
		\ERW{S_{1,1,n} | x} dF_X(x).
	\end{align*}
	For $G_n^{4,8}$, we have
	\[
	|G_n^{4,8}| 
	\le \frac{n}{\sqrt{k}} \frac{1}{c} \sup_x | \Pi_n(x) - \Pi(x) | \ERW{S_{i,1,n}} 
	=O(\sqrt{k} \zeta(h_n) h_n^{-s/2r}) = o(1),
	\]
	where the second last equality holds because $\sup_x | \Pi_n(x) - \Pi(x) | = O(\zeta(h_n) h_n^{-s/2r})$ and $\ERW{S_{i,1,n}} = O(\tau_n)$, 
	and the last equality holds because Assumption \ref{ass:sieve} $iv)$ $n \tau_n \zeta(h_n)^6 h_n^{-s/r} \to 0$ implies
	$ \zeta(h_n)^2 h_n^{-s/r} = o(k^{-1}) $.
	
	Hence, we have $\sqrt{k} G_n^{4} = - G_n^{4,7} + o_p(1)$.
	For $G_n^{4,7}$, we use the mean value theorem for $\wh \Pi(x) - \Pi_n(x)$ to get
	\[
	G_n^{4,7} = \frac{n}{\sqrt{k}} \int_{\supp(X)}
	\frac{\ERW{S_{1,1,n} | x}}{\Pi(x)} 
	L'(H_{h_n}(x)^T \wt \pi_n) H_{h_n}(x)^T 
	dF_X(x) \ (\wh \pi_n - \pi_n)
	\]
	where $\wt \pi_n$ lies on the line between
	$\pi_n$ and $\wh \pi_n$. Since
	$\wh \pi_n$ is the solution of the optimization problem
	\eqref{eq:pihat}, the first order condition 
	\[
	0 = \frac{1}{n} \Sn (D_i - \wh \Pi(X_i)) H_{h_n}(X_i)
	\] 
	can be derived by differentiation of the objective function. Extending
	\[ 
	0=\frac{1}{n} \Sn (D_i - \wh \Pi(X_i)) H_{h_n}(X_i) = 
	\frac{1}{n} \Sn (D_i -  \Pi_n(X_i)) H_{h_n}(X_i) 
	- 
	\frac{1}{n} \Sn (\wh \Pi(X_i) - \Pi_n(X_i)) H_{h_n}(X_i)  
	\]
	and applying the mean value theorem for $\wh \Pi(X_i) - \Pi_n(X_i)$ leads to  
	\[ \wh \pi_n - \pi_n = \frac{1}{n} \Sn \wt \Sigma_n^{-1} (D_i - \Pi_n(X_i)) H_{h_n}(X_i) \]
	where
	\[
	\wt \Sigma_n = \frac{1}{n} \Sn L'(H_{h_n}(X_i)^T \wt \pi_n) H_{h_n}(X_i) H_{h_n}(X_i)^T.
	\]
	We define
	\begin{align*}
		\wt \Psi_{h_n} & := \sqrt{\frac{n}{k}} \int _{\supp(X)}
		\frac{\ERW{S_{i,1,n} | x}}{\Pi(x)} 
		L'(H_{h_n}(x)^T \wt \pi_n) H_{h_n}(x) dF_X(x) 
		\\
		\Psi_{h_n} & := \sqrt{\frac{n}{k}} \int _{\supp(X)}
		\frac{\ERW{S_{i,1,n} | x}}{\Pi(x)} 
		L'(H_{h_n}(x)^T \pi_n) H_{h_n}(x) dF_X(x) 
		\\
		\Sigma_n &:= \ERW{ H_{h_n}(X) H_{h_n}(X)^T L'(H_{h_n}(X)^T \pi_n)}
		\\
		V_n & := \frac{1}{\sqrt{n}} \Sn H_{h_n}(X_i) (D_i - \Pi_n(X_i)),
	\end{align*}
	which allows us to write
	\begin{align*}
		G_n^{4,7} &= 
		\wt \Psi_{h_n}^T \wt \Sigma_n^{-1} V_n
		= 
		\Psi_{h_n}^T \Sigma_n^{-1} V_n
		+ \underbrace{(\wt \Psi_{h_n}^T- \Psi_{h_n}^T)   \wt \Sigma_n^{-1} V_n}_{=: G_n^{4,7,1}}
		+ 
		\underbrace{\Psi_{h_n}^T (\wt \Sigma_n^{-1}- \Sigma_n^{-1}  )V_n}_{=: G_n^{4,7,2}}.
	\end{align*}
	Now we show that both terms $G_n^{4,7,1}$ and $G_n^{4,7,2}$ are $o_p(1)$.
	
	For $G_n^{4,7,1}$,
	using the mean value theorem for $L'(H_{h_n}(x)^T \wt \pi_n)-L'(H_{h_n}(x)^T \pi_n) $ and the fact that $| L'' | < 1$, we have
	\[
	\Vert \wt \Psi_{h_n}- \Psi_{h_n} \Vert
	\le 
	\frac{1}{c} \sqrt{\frac{n}{k}} \zeta(h_n)^2 \ERW{S_{i,1,n}} \Vert \wh \pi_n - \pi_n \Vert
	= O(\sqrt{\tau_n} \zeta(h_n)^2 \sqrt{\frac{h_n}{n}}),
	\]
	where in the last equality we used $\Vert \wh \pi_n - \pi_n \Vert = O_p(\sqrt{h_n/n}) $
	and $\ERW{S_{i,1,n}} = O(\tau_n)$. 
	In addition, \cite{hirano2003efficient} showed in the proof 
	of their Theorem 1 that $\Vert \wt \Sigma_n^{-1} \Vert = O_p(1)$
	and $\ERW{ \Vert V_n \Vert ^2 } = O(\zeta(h_n)^2)$, and the last result implies that 
	$\Vert V_n \Vert = O_p(\zeta(h_n))$ by the Markov inequality. Therefore, by the submultiplicativity of Frobenius norm we have 
	\[ |G_n^{4,7,1}| \le  
	\Vert \wt \Psi_{h_n}- \Psi_{h_n} \Vert
	\Vert \wt \Sigma_n^{-1} \Vert
	\Vert V_n \Vert 
	= O_p(\sqrt{\tau_n} \zeta(h_n)^3 \sqrt{\frac{h_n}{n}}) = o_p(1), \]
	where the last equality is implied by Assumption \ref{ass:sieve} $iv)$ $\frac{ \tau_n \zeta(h_n)^{10} h_n}{n} \to 0$.
	
	For $G_n^{4,7,2}$, we rewrite
	$G_n^{4,7,2} = \Psi_{h_n}^T \wt \Sigma_n^{-1}( \Sigma_n- \wt \Sigma_n  )\Sigma_n^{-1} V_n$.
	From the proof of Theorem 3.1 in \cite{zhang2018extremal}, we know that
	$ \Vert ( \wt \Sigma_n - \Sigma_n) \Sigma_n^{-1}V_n \Vert
	= O_p( \zeta(h_n)^4 \sqrt{\frac{h_n}{n}} + \frac{1}{\sqrt{n}} \zeta(h_n)^3).$
	In addition, we have $\Vert \Psi_{h_n} \Vert \le \frac{1}{c} \sqrt{\frac{n}{k}} \zeta(h_n) \ERW{ S_{i,1,n}}
	= O_p(\sqrt{\tau_n} \zeta(h_n))$.
	Therefore, 
	\[  |G_n^{4,7,1}| \le
	\Vert \Psi_{h_n}^T \Vert
	\Vert \wt \Sigma_n^{-1} \Vert
	\Vert (\Sigma_n- \wt \Sigma_n)\Sigma_n^{-1} V_n \Vert 
	= O_p(\sqrt{\frac{\tau_n}{n}}( \zeta(h_n)^5 \sqrt{h_n} +  \zeta(h_n)^4)) 
	=  o_p(1),
	\]
	where the last equality is implied by Assumption \ref{ass:sieve} $iv)$ $\frac{ \tau_n \zeta(h_n)^{10} h_n}{n} \to 0$.
	
	Now we have $\sqrt{k} G_n^{4} = -G_n^{4,7} + o_p(1) = - \Psi_{h_n}^T \Sigma_n^{-1} V_n + o_p(1)$.
	Let 
	\begin{align*}
		\delta_0(x) &:= 
		\sqrt{ \Pi(x)( 1- \Pi(x) )}
		\frac{ \ERW{ S_{i,1,n} | x}}{\sqrt{\tau_n} \Pi(x)}
		\\
		\delta_{h_n}(x) 
		& := \sqrt{\Pi_n(x)(1-\Pi_n(x))} \Psi_{h_n}^T \Sigma^{-1}_n H_{h_n}(x),
	\end{align*}
	we rewrite 
	\begin{align*}
		\Psi_{h_n}^T \Sigma_n^{-1} V_n 
		&= \frac{1}{\sqrt{n}} \Sn \delta_{0}(X_i) \frac{D_i - \Pi(X_i)}{\sqrt{\Pi(X_i)(1- \Pi(X_i))}} + G_n^{4,7,3} + G_n^{4,7,4},
	\end{align*}
	where
	\begin{align*}
		G_n^{4,7,3} & := 
		\frac{1}{\sqrt{n}} \Sn (\delta_{h_n}(X_i)-\delta_0(X_i)) \frac{D_i - \Pi(X_i)}{\sqrt{\Pi(X_i)(1- \Pi(X_i))}}
		\\
		G_n^{4,7,4} & := 
		\frac{1}{\sqrt{n}} \Sn \delta_{h_n}(X_i) \left(  \frac{D_i - \Pi_n(X_i)}{\sqrt{\Pi_n(X_i)(1- \Pi_n(X_i))}}- \frac{D_i - \Pi(X_i)}{\sqrt{\Pi(X_i)(1- \Pi(X_i))}} \right).
	\end{align*}
	Now we show that both terms $G_n^{4,7,3}$ and $G_n^{4,7,4}$ are $o_p(1)$.
	
	For $G_n^{4,7,4}$, from previous proofs we know that $\Vert \Psi_{h_n} \Vert = O_p(\sqrt{\tau_n} \zeta(h_n))$. In addition,
	\cite{hirano2003efficient} showed in the proof of their Theorem 1 that $\Vert  \Sigma_n^{-1} \Vert = O(1)$. Together with the fact that 
	$|\Pi_n| < 1$, we have $\sup_x \Vert \delta_{h_n}(x) \Vert = O_p( \sqrt{\tau_n} \zeta(h_n)^2)$.
	Since $\Pi$ is bounded away from 0 and 1 by Assumption \ref{ass:1}, using the mean value theorem and the triangular inequality give us
	\[
	|G_n^{4,7,4} |  = O(1)
	\sqrt{n} (\sup_x \Vert \delta_{h_n}(x) \Vert )
	(\sup_x | \Pi_n(x) - \Pi(x) | )
	= O_p(\sqrt{n \tau_n} \zeta(h_n)^3 h_n^{-s/2r})=o_p(1),
	\]
	where the last equality follows from Assumption \ref{ass:sieve} $iv)$ $n \tau_n \zeta(h_n)^6 h_n^{-s/r} \to 0 $.
	
	For $G_n^{4,7,3}$, 
	first note that we can view $\sqrt{\tau_n} \delta_{h_n}(x)$ as the least squares approximation of 
	$ \sqrt{\tau_n} \delta_0(x)$ using the approximation functions
	$\sqrt{L'(H_{h_n}(x)^T \pi_n)} H_{h_n}(x)$. 
	By Assumption \ref{ass:hill_sieve} $i)$ that
	$\ERW{ S_{i,1,n} | x}$ is $t$-times
	continuously differentiable
	with all derivatives bounded by $N_n$
	on $\supp(X)$. Thus, similarly as
	the Lemma 1 in \cite{hirano2003efficient} which follows
	from the Theorem 8 on page 90 in \cite{lorentz1966approximation}, it holds that
	\[
	\sup_{x \in \supp(X)} | \delta_0(x) - \delta_{h_n}(x) | 
	= O( N_n h_n^{-t /2r} / \sqrt{\tau_n}).
	\]
	Hence, by the triangular inequality we have
	\[
	|G_n^{4,7,3}| = O_p( \sqrt{\frac{n}{\tau_n}} N_n h_n^{- t /2r}) = o_p(1),
	\]
	where the last equality follows from Assumption \ref{ass:hill_sieve} $ii)$.
	
	Therefore, we have
	\begin{align*}
		G_n^{4,7} = \Psi_{h_n}^T \Sigma_n^{-1} V_n &=
		\frac{1}{\sqrt{n}} \Sn \delta_{0}(X_i) \frac{D_i - \Pi(X_i)}{\sqrt{\Pi(X_i)(1- \Pi(X_i))}} 
		+ o_p(1)
		\\ 
		& =  \frac{1}{\sqrt{n \tau_n}} \Sn \frac{\ERW{S_{i,1,n} \mid X_i}}{\Pi(X_i)} (D_i - \Pi(X_i))
		+ o_p(1),
	\end{align*}
	and consequently
	\begin{equation}
		\sqrt{k} G_n^4 = -  \frac{1}{\sqrt{n \tau_n}} \Sn \frac{\ERW{S_{i,1,n} \mid X_i}}{\Pi(X_i)} (D_i - \Pi(X_i))
		+ o_p(1).
		\label{eq:hill_prop_asymp_lin}
	\end{equation}
	
	Combining equations \eqref{eq:hill_Gn1_influence},
	\eqref{eq:hill_Gn2_influence}, \eqref{eq:gn3_to_0} and \eqref{eq:hill_prop_asymp_lin},
	we conclude that
	\[
	\sqrt{k} (\wh \gamma^H_1 - \gamma_1) =  
	\frac{\lambda_1}{1-\rho_1} + 
	\frac{1}{\sqrt{n}} \Sn \left( \psi_{i,1,n} + \gamma_1\phi_{i,1,n}  \right) + o_p(1).
	\]
\end{proof}

\subsection{Proof of Theorem~\ref{thm:hill_normality}}

\begin{proof}[Proof of Theorem~\ref{thm:hill_normality}]
	
	Let $V_{i,n} := (\psi_{i,1,n}, \psi_{i,0,n}, \phi_{i,1,n}, \phi_{i,0,n})^T$
	and let $\Sigma_n$ be its covariance matrix. 
	For $j=0,1$, we write $\psi_{i,j,n} = \nu_{i,j,n} + \eta_{i,j,n}$ with
	\begin{align*}
		\nu_{i,1,n} & = 
		\frac{1}{\sqrt{\tau_n}}  \left(
		\frac{ D_i}{\Pi(X_i)} S_{i,1,n}  
		- \gamma_1 \tau_n  \right),
		\qquad \quad
		\nu_{i,0,n}  = 
		\frac{1}{\sqrt{\tau_n}}  \left(
		\frac{ 1-D_i}{1-\Pi(X_i)} S_{i,0,n}  
		- \gamma_0 \tau_n  \right),
		\\
		\eta_{i,1,n} & = - \frac{1}{\sqrt{\tau_n}}
		\frac{\ERW{S_{i,1,n} \mid X_i}}{\Pi(X_i)} (D_i - \Pi(X_i)),
		\quad
		\eta_{i,0,n}  =  \frac{1}{\sqrt{\tau_n}}
		\frac{\ERW{S_{i,0,n} \mid X_i}}{1-\Pi(X_i)} (D_i - \Pi(X_i)).
	\end{align*}
	
	We will apply the multidimensional Lindeberg CLT to prove the claim. 
	We first show that the expectation of $V_{i,n}$ is a zero vector. For $\psi_{i,1,n}$, we have 
	\[
	\ERW{\eta_{i,1,n}} 
	= - \frac{1}{\sqrt{\tau_n}} \ERW{\ERW{\frac{\ERW{S_{i,1,n} \mid X_i}}{\Pi(X_i)} (D_i - \Pi(X_i)) \bigg| X_i}} 
	= 0
	\]
	and
	\[
	\ERW{ \nu_{i,1,n} }
	=\frac{1}{\sqrt{\tau_n}} \ERW{\frac{ D_i}{\Pi(X_i)} S_{i,1,n} } - \gamma_1\sqrt{\tau_n}
	=\frac{1}{\sqrt{\tau_n}} \ERW{S_{i,1,n} } - \gamma_1\sqrt{\tau_n}
	=0,
	\]
	where the second equality follows from Lemma~\ref{IdentifiabilityPropensityScore} and the last equality holds 
	as $\ERW{S_{i,1,n} } = \gamma_1\tau_n$ which can be seen from the proof of Lemma~\ref{lemma:s_order}.
	Thus we have $\ERW{\psi_{i,1,n}}=0$. Analogously, one can show that $\ERW{\psi_{i,0,n}}=\ERW{\phi_{i,1,n}}=\ERW{\phi_{i,0,n}}=0$.
	
	Now we show that the covariance matrix $\Sigma_n$ converge to $\Sigma$ via showing entry-wise convergence.
	We first compute
	\begin{align*}
		&  \quad \ERW{ \nu_{i,1,n} \phi_{i,1,n} }
		= \frac{1}{\tau_n} \E \bigg[ \left( 
		\frac{D_i}{\Pi(X_i)}S_{i,1,n}
		- \gamma_1 \tau_n \right)
		\cdot  
		\bigg( \frac{D_i}{\Pi(X_i)} 
		T_{i,1,n} - \frac{\ERW{T_{i,1,n} \mid X_i}}{\Pi(X_i)} (D_i - \Pi(X_i))\bigg)
		\bigg] \\
		&=
		\frac{1}{\tau_n} \ERW{  \frac{D_i}{\Pi(X_i)^2} S_{i,1,n}}
		-  \frac{1}{\tau_n} \ERW{ \frac{D_i}{\Pi(X_i)^2}S_{i,1,n} (1 - \Pi(X_i)) P(Y_i(1) > q_1(1-\tau_n)\mid X_i)  }  
		-  \ERW{  \frac{D_i}{\Pi(X_i)} S_{i,1,n} } \\
		&= 
		\frac{1}{\tau_n} \ERW{  \frac{ \ERW{S_{i,1,n}|X_i} } {\Pi(X_i)} }
		-  \frac{1}{\tau_n} \ERW{ \frac{ \ERW{S_{i,1,n}|X_i} }{\Pi(X_i)} (1 - \Pi(X_i)) P(Y_i(1) > q_1(1-\tau_n)\mid X_i) }  
		-  \ERW{ S_{i,1,n} } \\
		&=
		\frac{1}{\tau_n} \ERW{ \frac{ \ERW{S_{i,1,n}|X_i} }{\Pi(X_i)} (1- (1 - \Pi(X_i)) P(Y_i(1) > q_1(1-\tau_n)\mid X_i)) }  
		+ o(1),
	\end{align*}
	where the second last equality is obtained by applying the law of iterated expectation and the last equality follows from Lemma~\ref{lemma:s_order}.
	Similarly, we have
	\begin{align*}
		&\ERW{ \eta_{i,1,n} \phi_{i,1,n} } = 0, \\ 
		&\ERW{  \nu_{i,1,n} \eta_{i,1,n}} = - \frac{1}{\tau_n} \E \bigg[ \frac{1-\Pi(X_i)}{\Pi(X_i)} \ERW{ S_{i,1,n} \mid X_i}^2 \bigg], \\ 
		&\ERW{  \nu_{i,1,n}^2} =  \frac{\gamma_1^2}{\tau_n} \E 
		\bigg[ \frac{1}{\Pi(X_i)} \ERW{ \log(Z_i(1) \tau_n)^2  \Ind{ Y_i(1) > q_1(1-\tau_n)} \mid X_i}  \bigg] + o(1), \\ 
		&\ERW{  \eta_{i,1,n}^2} =  \frac{1}{\tau_n} \E \bigg[ \frac{1-\Pi(X_i)}{\Pi(X_i)}  \ERW{S_{i,1,n}\mid X_i}^2 \bigg]. \\ 
	\end{align*}
	Thus
	\begin{align*}
		&\ERW{ \psi_{i,1,n}^2 } = \frac{1}{\tau_n} \E \bigg[ \frac{\gamma_1^2}{\Pi(X_i)} \ERW{ \log(Z_i(1) \tau_n)^2 
			\Ind{ Y_i(1) > q_1(1-\tau_n)} \mid X_i}  -   \frac{1-\Pi(X_i)}{\Pi(X_i)} \ERW{S_{i,1,n}\mid X_i}^2 \bigg] + o(1), \\
		& \ERW{ \psi_{i,1,n} \phi_{i,1,n} } = \frac{1}{\tau_n} \E \bigg[ \frac{1}{\Pi(X_i)} \ERW{ 
			S_{i,1,n} \mid X_i}  \cdot \bigg(1- (1-\Pi(X_i))  P(Y_i(1) > q_1(1-\tau_n)\mid X_i) \bigg) 
		\bigg] + o(1).
	\end{align*}
	Analogously, we have 
	\begin{align*}
		&\ERW{ \psi_{i,0,n}^2 } = \frac{1}{\tau_n} \E \bigg[ \frac{\gamma_0^2}{1-\Pi(X_i)} \ERW{ \log(Z_i(0) \tau_n)^2 
			\Ind{ Y_i(0) > q_0(1-\tau_n)} \mid X_i}  -   \frac{\Pi(X_i)}{1-\Pi(X_i)} \ERW{S_{i,0,n}\mid X_i}^2 \bigg] + o(1), \\
		& \ERW{ \psi_{i,0,n} \phi_{i,0,n} } = \frac{1}{\tau_n} \E \bigg[ \frac{1}{1-\Pi(X_i)} \ERW{ 
			S_{i,0,n} \mid X_i}  \cdot \bigg(1- \Pi(X_i)  P(Y_i(0) > q_0(1-\tau_n)\mid X_i) \bigg) 
		\bigg] + o(1).
	\end{align*}
	
	For the intersection terms between two potential outcome distributions, we have
	\begin{align*}
		\ERW{ \nu_{i,0,n}\eta_{i,1,n}} 
		& =\ERW{ \nu_{i,1,n}\eta_{i,0,n}}
		= 
		\frac{1}{\tau_n} \E \bigg[ \ERW{ 
			S_{i,1,n} \mid X_i}  
		\ERW{S_{i,0,n}\mid X_i}
		\bigg],
		\\
		\ERW{ \eta_{i,1,n} \eta_{i,0,n}}
		& =  -
		\frac{1}{\tau_n} \E \bigg[ 
		\ERW{S_{i,1,n}\mid X_i}
		\ERW{S_{i,0,n}\mid X_i}
		\bigg],
		\\
		\ERW{ \nu_{i,1,n} \nu_{i,0,n} } 
		&= o(1),
		\\
		\ERW{ \eta_{i,1,n} \phi_{i,0,n}} 
		& = \ERW{ \eta_{i,0,n} \phi_{i,1,n}}  = 0,
		\\
		\ERW{ \nu_{i,1,n} \phi_{i,0,n}}
		& =
		\frac{1}{\tau_n} \E \bigg[\ERW{  S_{i,1,n} \mid X_i} P(Y_i(0) > q_0(1-\tau_n)\mid X_i) \bigg] + o(1),
		\\
		\ERW{ \nu_{i,0,n} \phi_{i,1,n}}
		& =
		\frac{1}{\tau_n} \E \bigg[\ERW{  S_{i,0,n} \mid X_i} P(Y_i(1) > q_1(1-\tau_n)\mid X_i) \bigg] + o(1).
	\end{align*}
	Thus
	\begin{align*}
		&\ERW{ \psi_{i,1,n} \psi_{i,0,n} } = \frac{1}{\tau_n} \E \bigg[ \ERW{ S_{i,1,n} \mid X_i} \ERW{S_{i,0,n}\mid X_i} \bigg]+ o(1), \\
		&\ERW{ \psi_{i,1,n} \phi_{i,0,n} } = \frac{1}{\tau_n} \E \bigg[\ERW{  S_{i,1,n} \mid X_i} P(Y_i(0) > q_0(1-\tau_n)\mid X_i) \bigg] + o(1), \\
		&\ERW{ \psi_{i,0,n} \phi_{i,1,n} } = \frac{1}{\tau_n} \E \bigg[\ERW{  S_{i,0,n} \mid X_i} P(Y_i(1) > q_1(1-\tau_n)\mid X_i) \bigg] + o(1).\\
	\end{align*}
	At last, we have
	\begin{align*}
		&\ERW{ \phi_{i,1,n}^2 } = \frac{1}{\tau_n} \E \biggl[ \frac{P(Y_i(1) > q_1(1- \tau_n) \mid X_i)}{\Pi(X_i)}
		- \frac{1-\Pi(X_i)}{\Pi(X_i)} P(Y_i(1) > q_1(1- \tau_n)\mid X_i)^2 \biggr] + o(1), \\
		&\ERW{ \phi_{i,0,n}^2} = \frac{1}{\tau_n} \E \biggl[ \frac{P(Y_i(0) > q_0(1- \tau_n) \mid X_i)}{1-\Pi(X_i)}
		- \frac{\Pi(X_i)}{1-\Pi(X_i)} P(Y_i(0) > q_0(1- \tau_n)\mid X_i)^2 \biggr] + o(1), \\
		&\ERW{ \phi_{i,1,n} \phi_{i,0,n} } = \frac{1}{\tau_n} \ERW{ P(Y_i(1) > q_1(1-\tau_n)\mid X_i) P(Y_i(0) > q_0(1-\tau_n)\mid X_i) } + o(1).\\
	\end{align*}
	Therefore, under Assumption \ref{ass:hill_Technical} and \ref{ass:zhangTechnical}, we have 
	\begin{align*}
		&\ERW{ \psi_{i,1,n}^2 } \to G_1, \quad
		\ERW{ \psi_{i,0,n}^2 } \to G_0, \quad
		\ERW{ \psi_{i,1,n} \psi_{i,0,n} } \to G_{10}, \\
		&\ERW{ \psi_{i,1,n} \phi_{i,1,n} } \to J_1, \quad
		\ERW{ \psi_{i,0,n} \phi_{i,0,n} } \to J_0, \quad
		\ERW{ \psi_{i,1,n} \phi_{i,0,n} } \to J_{10}, \quad
		\ERW{ \psi_{i,0,n} \phi_{i,1,n} } \to J_{01}, \\
		&\ERW{ \phi_{i,1,n}^2 } \to H_1, \quad
		\ERW{ \phi_{i,0,n}^2 } \to H_0, \quad
		\ERW{ \phi_{i,1,n} \phi_{i,0,n} } \to H_{10}.
	\end{align*}
	
	The above results implies that
	\[
	\ERW {\frac{V_{i,n}}{\sqrt{n}}} = 0 
	\quad \text{and} \quad
	\lim_{n \to \infty} \Sn \Cov \left( \frac{V_{i,n}}{\sqrt{n}}  \right) = \lim_{n \to \infty} \Sigma_n = \Sigma.
	\]
	Now we verify the Lindeberg condition, that is, for any $\epsilon > 0$,
	\[
	\lim_{n \to \infty} \Sn \ERW{\left\Vert \frac{V_{i,n}}{\sqrt{n}} \right\Vert ^2  \Ind{\left\Vert \frac{V_{i,n}}{\sqrt{n}} \right\Vert > \epsilon} } 
	= \lim_{n \to \infty} \ERW{\Vert V_{i,n} \Vert ^2  \Ind{\Vert V_{i,n}\Vert > \sqrt{n} \epsilon} } 
	\to 0.
	\]
	
	Since on $\Vert V_{i,n} \Vert > \sqrt{n} \epsilon$ we have $ (\Vert V_{i,n} \Vert / \sqrt{n} \epsilon)^2 > 1 $, we can bound
	\[
	\ERW{\Vert V_{i,n} \Vert ^2  \Ind{\Vert V_{i,n}\Vert > \sqrt{n} \epsilon} } 
	\le \frac{1}{n \epsilon^2} \ERW{\Vert V_{i,n} \Vert^4}.
	\]
	Thus, it is enough to show the Lyapunov type condition $\frac{1}{n \epsilon^2} \ERW{\Vert V_{i,n} \Vert^4} \to 0$.
	
	Recall that $T_{i,j,n} =  \Ind{Y_i(j) > q_j(1-\tau_n)}- \tau_n$
	and 
	$S_{i,j,n} = \gamma_j \log \left( \frac{\tau_n}{1-F_j(Y_i(j))} \right)  \Ind{ Y_i(j) > q_j(1-\tau_n)}$.
	For any $p \in \mathbb{N}$, we know from Lemma~\ref{lemma:s_order}  that
	\[
	\ERW{S_{i,1,n}^p } = O(\tau_n)
	\quad \text{and} \quad
	\ERW{S_{i,1,n}^p \mid  X_i} = O_p(\tau_n).
	\]
	In addition, it is easy to see that 
	\[
	\ERW{T_{i,1,n}^p } = O(\tau_n) 
	\quad \text{and} \quad
	\ERW{T_{i,1,n}^p \mid  X_i} = O_p(\tau_n).
	\]
	Hence, for any $p,q \in \mathbb{N}$, by the Cauchy-Schwarz inequality we have 
	\begin{align*}
		\ERW{ S_{i,1,n}^p \ERW{S_{i,1,n} \mid  X_i}^q } 
		&\le \sqrt{ \ERW{ S_{i,1,n}^{2p}} \ERW{\ERW{S_{i,1,n} \mid  X_i}^{2q} }  } \\
		&\le \sqrt{ \ERW{ S_{i,1,n}^{2p}} \ERW{\ERW{S_{i,1,n}^{2q} \mid  X_i} }  }
		= O(\tau_n).
	\end{align*}
	Similar inequalities hold for all combinations of $S_{i,1,n}$, $\ERW{S_{i,1,n} \mid  X_i}$, $T_{i,1,n}$ and $\ERW{T_{i,1,n} \mid  X_i}$.
	
	Since $\Pi(X_i)$ is bounded away from $0$ and $1$ by Assumption~\ref{ass:1}, for
	\begin{align*}
		\frac{1}{n \epsilon^2} \ERW{\Vert V_{i,n} \Vert^4} 
		&= \frac{1}{n \epsilon^2} \E \big[ \psi_{i,1,n}^4 + \psi_{i,0,n}^4 + \phi_{i,1,n}^4 + \phi_{i,0,n}^4 + 2\psi_{i,1,n}^2\psi_{i,0,n}^2 + 2\psi_{i,1,n}^2\phi_{i,1,n}^2 \\
		&\qquad+ 2\psi_{i,1,n}^2\phi_{i,0,n}^2 + 2\psi_{i,0,n}^2\phi_{i,1,n}^2 + 2\psi_{i,0,n}^2\phi_{i,0,n}^2 + 2\phi_{i,1,n}^2\phi_{i,0,n}^2 \big],
	\end{align*}
	we can see by expanding all above terms that its rate is of order
	\[
	\frac{1}{n} \frac{1}{\tau_n^2} O(\tau_n) = O(k^{-1}) = o(1),
	\]
	which shows that the Lyapunov type condition hold.

	Therefore, we can apply the multidimensional Lindeberg CLT to obtain
	\[
	\frac{1}{\sqrt{n}} \Sn V_{i,n} \tendsto{D} \wt N
	\]
	where $ \wt N $ is a $4$-dimensional Gaussian vector with mean zero and covariance matrix $\Sigma$. The claim then follows from applying the continuous mapping theorem.
	
\end{proof}

\subsection{Proof of Lemma~\ref{lemma:hill_extreme_quantiles}}

\begin{proof}[Proof of Lemma~\ref{lemma:hill_extreme_quantiles}]
	The proof is similar to the proof of Theorem 4.3.8 in \cite{de2007extreme}.
	Denote $d_n := \tau_n/ p_n  = k/ (n p_n)$, we have
	$\log(d_n) / \sqrt{k} \to 0$
	by the assumption that $ \log(n p_n) = o(\sqrt{k})$.
	Using $q_j(1-\tau) = U_j(1/\tau)$, we can expand
	\begin{align*}
		\frac{\sqrt{k}}{\log(d_n)} \left( \frac{\wh{Q}_j(1-p_n) }{ q_j(1-p_n)} -1  \right)
		&= \frac{d_n^{\gamma_j} U_j\nk}{U_j(\frac1{p_n})}
		\Biggl( \frac{d_n^{\wh \gamma_j^H - \gamma_j}}{\log(d_n)} \sqrt{k} \left( \frac{\wh q_j(1-\tau_n)}{U_j \nk} -1 \right) \\
		& \quad + \frac{\sqrt{k}}{\log(d_n)} \left( d_n^{\wh \gamma_j^H - \gamma_j} - 1 \right) \\
		& \quad - \frac{\sqrt{k} A_j \nk}{\log(d_n)} 
		\frac{ \frac{ U_j(1/p_n) d_n^{-\gamma_j}}{U_j(n/k)} - 1}{A_j \nk}
		\Biggr).
	\end{align*}
	
	By using the same arguments as in the proof of Lemma~\ref{lemma:hill_consistency}, we have
	\[
	\sqrt{k} \left( \frac{\wh q_j(1-\tau_n)}{U_j \nk} - 1 \right) = O_p(1)
	\]
	By applying the Theorem 2.3.9 in \cite{de2007extreme}, we have
	\begin{equation*}
		\lim_{n \to \infty} \frac{ \frac{ U_j(1/p_n) d_n^{-\gamma_j}}{U_j(n/k)} - 1}{A_j \nk}
		= - \frac1{\rho_j},
	\end{equation*}
	which then implies
	\[
	\frac{ U_j(1/p_n) d_n^{-\gamma_j}}{U_j(n/k)} \to 1
	\quad \text{or} \quad
	\frac{d_n^{\gamma_j} U_j\nk}{U_j(\frac1{p_n})} \to 1
	\]
	as $A_j \nk \to 0$.
	By assumption that $\sqrt{k} A_j \nk \to \lambda_j$, we have
	\[
	\frac{\sqrt{k} A_j \nk}{\log(d_n)}  \to 0.
	\]
	Note that 
	\[
	\frac{\sqrt{k}}{\log(d_n)} \left( d_n^{\wh \gamma_j^H - \gamma_j} - 1 \right) 
	= \frac{\sqrt{k} (\wh \gamma_j^H - \gamma_j)}{\log(d_n)} \int_1^{d_n} \frac{ e^{(\wh \gamma_j^H - \gamma_j) \log s}}{s} \ ds,
	\]
	so
	\begin{align}
		\left| 
		\frac{\sqrt{k}}{\log(d_n)} \left( d_n^{\wh \gamma_j^H - \gamma_j} - 1 \right) 
		- \sqrt{k} (\wh \gamma_j^H - \gamma_j) 
		\right|
		=
		| \sqrt{k} (\wh \gamma_j^H - \gamma_j) |
		\left| 
		\frac{1}{\log(d_n)} \int_1^{d_n} \frac{ e^{(\wh \gamma_j^H - \gamma_j) \log s}}{s} \ ds - 1
		\right|.
		\label{ProofOfLemma2InterRestult}
	\end{align}
	Since for all $s \in [1, d_n]$, we have 
	\[
	-|\wh \gamma_j^H - \gamma_j| \log d_n \le
	(\wh \gamma_j^H - \gamma_j) \log s \le
	|\wh \gamma_j^H - \gamma_j| \log d_n,
	\]
	thus
	\begin{align*}
		e^{-|\wh \gamma_j^H - \gamma_j| \log d_n} \le
		\frac{1}{\log(d_n)} \int_1^{d_n} \frac{ e^{(\wh \gamma_j^H - \gamma_j) \log s}}{s} \ ds \le
		e^{|\wh \gamma_j^H - \gamma_j| \log d_n}.
	\end{align*}
	By Theorem~\ref{thm:hill_normality} and the fact that $\log(d_n) / \sqrt{k} \to 0$, we have 
	\[
	\pm |\wh \gamma_j^H - \gamma_j| \log d_n
	= \pm \sqrt{k} |\wh \gamma_j^H - \gamma_j| \frac{\log(d_n)}{\sqrt{k}}
	\tendsto{P} 0,
	\]
	thus
	\[
	\frac{1}{\log(d_n)} \int_1^{d_n} \frac{ e^{(\wh \gamma_j^H - \gamma_j) \log s}}{s} \ ds \tendsto{P} 1.
	\]
	Combing the equality \eqref{ProofOfLemma2InterRestult} and the fact that $\sqrt{k} (\wh \gamma_j^H - \gamma_j) = O_p(1)$, we have
	\begin{align*}
		\frac{\sqrt{k}}{\log(d_n)} \left( d_n^{\wh \gamma_j^H - \gamma_j} - 1 \right) 
		= \sqrt{k} (\wh \gamma_j^H - \gamma_j) + o_p(1),
	\end{align*}
	which implies
	\[
	\frac{d_n^{\wh \gamma_j^H - \gamma_j}}{\log(d_n)} \tendsto{P} 0.
	\]
	
	Combining the above results, we conclude that
	\[
	\frac{\sqrt{k}}{\log(d_n)} \left( \frac{\wh{Q}_j(1-p_n) }{ q_j(1-p_n)} -1  \right)
	= \sqrt{k} (\wh \gamma_j^H - \gamma_j) + o_p(1),
	\]
	which implies
	\[
	\frac{\wh{Q}_j(1-p_n) }{ q_j(1-p_n)} \tendsto{P} 1
	\]
	as ${\sqrt{k}}/{\log(d_n)} \to \infty$.
	
\end{proof}

\subsection{Proof of Theorem~\ref{thm:hill_extreme_qte}}
\begin{proof}[Proof of Theorem~\ref{thm:hill_extreme_qte}]
	Let $d_n := \tau_n / p_n$.
	First, we expand 
	\begin{align*}
		& \wh \beta_n \left( \wh{\delta}(1-p_n) - \delta(1-p_n) \right) \\
		= & \wh{\beta}_n \left( \wh{Q}_1(1-p_n)-q_1(1-p_n) - ( \wh{Q}_0(1-p_n) - q_0(1-p_n)) \right)\\ 
		= &
		\frac{\wh Q_1(1-p_n)}{\max\{ \wh Q_1(1-p_n), \wh Q_0(1-p_n) \}}
		\frac{\sqrt{k}}{\log(d_n)} \frac{\wh{Q}_1(1-p_n) - q_1(1-p_n)}{ \wh Q_1(1-p_n)} \\
		& -
		\frac{\wh Q_0(1-p_n)}{\max\{ \wh Q_1(1-p_n), \wh Q_0(1-p_n) \}}
		\frac{\sqrt{k}}{\log(d_n)} \frac{\wh{Q}_0(1-p_n) - q_0(1-p_n)}{ \wh Q_0(1-p_n)}.
	\end{align*}
	By Lemma~\ref{lemma:hill_extreme_quantiles} and Theorem~\ref{thm:hill_normality}, we have that for $j=0,1$,
	\begin{align*}
		&\frac{\sqrt{k}}{\log(d_n)} \frac{\wh{Q}_j(1-p_n) - q_j(1-p_n) }{ \wh Q_j(1-p_n) } 
		= \sqrt{k} (\wh \gamma_j^H - \gamma_j) + o_p(1) = O_p(1) \\
		\text{and} \quad
		&\wh{Q}_j(1-p_n) \tendsto{P} q_j(1-p_n),
	\end{align*}
	thus
	\[
	\wh{\beta}_n \left( \wh{\delta}(1-p_n) - \delta(1-p_n) \right)
	= 
	\min \{ 1, \kappa \} \sqrt{k}
	(\wh \gamma_1^H - \gamma_1)
	- 
	\min \left\{ 1, \frac{1}{\kappa} \right\} 
	\sqrt{k} (\wh \gamma_0^H - \gamma_0)
	+ o_p(1).
	\]
	By applying the continuous mapping theorem and Theorem~\ref{thm:hill_normality}, we have
	\[
	\wh{\beta}_n \left( \wh{\delta}(1-p_n) - \delta(1-p_n) \right)
	\tendsto{D} \N( \mu, \sigma^2),
	\]
	with $\mu = v_{\kappa}^T w_{\lambda, \rho}$ and $\sigma^2 = v_{\kappa}^T B \Sigma B^T v_{\kappa}$, where the notations are defined as in the description of the theorem.

\end{proof}

\subsection{Proof of Theorem~\ref{thm:hill_variance_estimation}}
We first introduce Lemma~\ref{lemma:easy_sigma_hill} which shows that under Assumption~\ref{ass:squared_hill_tozero},
the covariance matrix $\Sigma$ simplifies to $\wt  \Sigma$.
In particular, the components $G_{10}$, $J_{10}, J_{01}$ and
$H_{10}$ become zero, which implies that the extrapolated quantiles $\wh Q_1^H(1-p_n)$ and $\wh Q_0^H(1-p_n)$ are asymptotically independent.
\begin{lemma}
	\label{lemma:easy_sigma_hill}
	Suppose that Assumptions \ref{ass:1} and \ref{ass:squared_hill_tozero} hold, then $\Sigma =\wt  \Sigma $.
\end{lemma}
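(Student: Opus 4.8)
The plan is to compare $\Sigma$ and $\wt\Sigma$ entry by entry. Both matrices have the same zero pattern except on the antidiagonal $2\times 2$ blocks, where I must show that the entries $G_{10},J_{10},J_{01},H_{10}$ of $\Sigma$ vanish, while the six remaining (diagonal-block) entries $H_1,H_0,G_1,G_0,J_1,J_0$ must be shown to coincide with their tilded counterparts. In each case the mechanism is identical: the defining limit of a $\Sigma$-entry is a $\tau_n^{-1}$-normalized expectation that I split into a \emph{leading} term — which is exactly the defining expression of the corresponding $\wt\Sigma$-entry, or identically $0$ for the antidiagonal ones — plus a remainder that is controlled by Assumption~\ref{ass:squared_hill_tozero} together with the bound $c<\Pi(x)<1-c$ from Assumption~\ref{ass:1}~$ii)$.

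For the diagonal-block entries, take $H_1$ as the prototype. By Assumption~\ref{ass:zhangTechnical}, $\tau_n^{-1}\E\big[\Pi(X)^{-1}P(Y(1)>q_1(1-\tau_n)\mid X)-\tfrac{1-\Pi(X)}{\Pi(X)}P(Y(1)>q_1(1-\tau_n)\mid X)^2\big]\to H_1$; the first term converges to $\wt H_1$ by definition, and the second is bounded in absolute value by $\tfrac{1-c}{c}\,\tau_n^{-1}\E\big[P(Y(1)>q_1(1-\tau_n)\mid X)^2\big]$, which tends to $0$ by the first display of Assumption~\ref{ass:squared_hill_tozero}. Hence $\wt H_1$ exists and equals $H_1$. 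The entries $H_0$, $G_1$, $G_0$, $J_1$, $J_0$ are treated identically: for the $G$-terms the second display of Assumption~\ref{ass:squared_hill_tozero} kills the $\E[S_{j,n}\mid X]^2$ piece, and for the $J$-terms one first applies Cauchy--Schwarz to the cross piece $\E\big[|\E[S_{j,n}\mid X]|\,P(Y(j)>q_j(1-\tau_n)\mid X)\big]$ and then uses both displays of Assumption~\ref{ass:squared_hill_tozero}.

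For the antidiagonal entries, Cauchy--Schwarz alone suffices. For instance $\tau_n^{-1}\E\big[P(Y(1)>q_1(1-\tau_n)\mid X)\,P(Y(0)>q_0(1-\tau_n)\mid X)\big]$ is at most the geometric mean of $\tau_n^{-1}\E[P(Y(1)>q_1(1-\tau_n)\mid X)^2]$ and $\tau_n^{-1}\E[P(Y(0)>q_0(1-\tau_n)\mid X)^2]$, both of which vanish by Assumption~\ref{ass:squared_hill_tozero}, so $H_{10}=0$; the same split gives $G_{10}=0$ (using the $\E[S_{j,n}\mid X]^2$ displays) and $J_{10}=J_{01}=0$ (using one display of each type). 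Collecting the six equalities together with the four vanishing statements yields $\Sigma=\wt\Sigma$.

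I do not expect a genuine obstacle: the proof is essentially bookkeeping over the ten entries, with Cauchy--Schwarz and the propensity-score bound as the only analytic inputs. The one structural remark worth spelling out is that $P(Y(j)>q_j(1-\tau_n)\mid X)$ and $\E[S_{j,n}\mid X]$ are $O_p(\tau_n)$ (Lemma~\ref{lemma:s_order}), which is exactly why the $\tau_n^{-1}$-normalization in Assumption~\ref{ass:squared_hill_tozero} is the natural one; and the mild care required is to note that the tilded limits are not postulated in advance — they emerge, together with their values, from the decompositions above once the remainders are shown to be $o(1)$.
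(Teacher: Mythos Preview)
Your proposal is correct and follows essentially the same approach as the paper's own proof: the paper also proceeds entry by entry, invoking Assumption~\ref{ass:squared_hill_tozero} together with the propensity-score bound from Assumption~\ref{ass:1}~$ii)$ for the diagonal-block entries $H_j,G_j,J_j$, and using Cauchy--Schwarz for the cross terms $H_{10},G_{10},J_{10},J_{01}$. Your write-up is in fact more detailed than the paper's (which is a two-sentence sketch), and your remark that the tilded limits are not postulated but emerge from the decomposition is a nice point the paper leaves implicit.
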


\begin{proof}[Proof of Lemma~\ref{lemma:easy_sigma_hill}] 
	It is sufficient to show that the entries of $\Sigma$ and $\wt  \Sigma $ are equal.
	For $H_1, H_0, G_1, G_0, J_1$ and $J_0$, the equalities are direct consequences of Assumption \ref{ass:squared_hill_tozero} and Assumption \ref{ass:1} that $\Pi(x)$ is bounded away from $0$ and $1$.
	For other terms the equalities can be proved by using the Cauchy--Schwarz inequality.
\end{proof}

Now we give the proof of Theorem~\ref{thm:hill_variance_estimation}.
\begin{proof}[Proof of Theorem~\ref{thm:hill_variance_estimation}]
	Recall that $\wh \sigma^2 = \wh v_{\kappa}^T \wh B \wh \Sigma \wh B^T \wh v_{\kappa}$, where
	\begin{equation*}
		\wh \Sigma =  \left(
		\begin{matrix}                                
			\wh G_{1} &  0 & \wh J_{1} & 0 \\
			0 &  \wh G_{0} & 0  & \wh J_{0} \\
			\wh J_{1} & 0 & \wh H_{1} & 0 \\
			0 & \wh J_{0} & 0 & \wh H_{0} \\
		\end{matrix}
		\right) , \
		\wh B =  \left(
		\begin{matrix}                                
			1 & 0 & -\wh \gamma_1^H & 0 \\
			0 & 1 & 0 & -\wh \gamma_0^H \\
		\end{matrix}
		\right), \ 
		\wh v_{\kappa} = \left(
		\begin{matrix}
			\min\{1, \wh \kappa \} \\
			-\min\{1, \frac{1}{\wh \kappa} \}
		\end{matrix}
		\right)
	\end{equation*}
	with $\wh H_1 , \wh H_0, \wh G_1, \wh G_0, \wh J_{1}, \wh J_{0}$ are defined in equation \eqref{covariance_est_terms}. 
	By Lemma~\ref{lemma:hill_consistency}, we have $\wh \gamma_j^H \tendsto{P} \gamma_j$ for $j=0,1$.
	By Lemma~\ref{lemma:hill_extreme_quantiles} and Assumption~\ref{ass:normal_conv}, we have $\wh \kappa \tendsto{P} \kappa$.
	By Lemma~\ref{lemma:easy_sigma_hill}, only the covariance terms $H_1, H_0$, $G_1, G_0$, $J_1$ and $J_0$ in $\Sigma$
	are nonzero.
	Therefore, to prove the consistency of the estimated variance $\wh \sigma^2$, it is suffice to show that for $j=0,1$,
	\begin{align*}
		(i)\ \wh H_j \tendsto{P} H_j, \qquad  
		(ii)\ \wh G_j \tendsto{P} G_j, \qquad 
		(iii)\ \wh J_j \tendsto{P} J_j,
	\end{align*}
	where $G_1, G_0, J_{1}, J_{0}$ are defined
	as in Assumption \ref{ass:hill_Technical} 
	and $H_1, H_0$ as in Assumption \ref{ass:zhangTechnical}. 
	We only show the result for $j=1$, the case of $j=0$ can be shown analogously. 
	We proceed in a similar manner as in the proof of Lemma~\ref{lemma:hill_consistency}.
	
	For (i), we expand
	\begin{align*}
		\wh H_1 = H_{1}^{n,1 } + H_{1}^{n, 2} + H_{1}^{n, 3} 
	\end{align*}
	with
	\begin{align*}
		& H_{1}^{n,1 } = \frac{1}{n \tau_n}   \Sn \frac{D_i}{\Pi(X_i)^2} \Ind{Y_i > q_1(1-\tau_n)},\\
		& H_{1}^{n,2 } =  \frac{1}{n \tau_n}   \Sn D_i \left( \frac{1}{\wh{\Pi}(X_i)^2} -  \frac{1}{\Pi(X_i)^2} \right) \Ind{Y_i > q_1(1-\tau_n)},  \\
		& H_{1}^{n,3 } = \frac{1}{n \tau_n}   \Sn \frac{D_i}{\wh{\Pi}(X_i)^2}  \left( \Ind{Y_i > \wh{q}_1{(1-\tau_n})} - \Ind{Y_i > {q}_1{(1-\tau_n})} \right).
	\end{align*}
	We will show that $H_1^{n,1} \tendsto{P} H_1$ and that $H_1^{n,2}, H_1^{n,3}$ converge to zero in probability.
	
	For $H_{1}^{n,1 } $, we have
	\begin{align*}
		\ERW{H_{1}^{n,1 }} = \frac{1}{n \tau_n}   \Sn \ERW{ \frac{D_i}{\Pi(X_i)^2} \Ind{Y_i > q_1(1-\tau_n)} } 
		=  \frac{1}{\tau_n}   \ERW{ \frac{P(Y_i(1) > q_1(1-\tau_n) \mid X_i)}{\Pi(X_i)} } \to H_1
	\end{align*}
	by Assumption \ref{ass:hill_Technical}, Assumption \ref{ass:squared_hill_tozero} and Lemma~\ref{lemma:easy_sigma_hill},
	and
	\begin{align*} 
		\VAR{H_{1}^{n,1 }} =& \frac{1}{n \tau_n^2}  \VAR{ \frac{D_i}{\Pi(X_i)^2} \Ind{Y_i(1) > q_1(1-\tau_n)} } \\
		\le &
		\frac{1}{n \tau_n^2}  \ERW{ \left( \frac{D_i}{\Pi(X_i)^2} \Ind{Y_i(1) > q_1(1-\tau_n)} \right)^2 } \\
		\le &
		\frac{1}{c^2} \frac{1}{n \tau_n}  \frac{1}{\tau_n}\ERW{ \frac{P(Y_i(1) > q_1(1-\tau_n) \mid X_i )}{\Pi(X_i)} } \to 0
	\end{align*}
	since $\frac{1}{n \tau_n} \to 0$ and $\frac{1}{\tau_n}\ERW{ \frac{P(Y_i(1) > q_1(1-\tau_n) \mid X_i )}{\Pi(X_i)} } \to H_1 < \infty$, 
	thus $H_{1}^{n,1 } \tendsto{P} H_1$.
	
	For $H_{1}^{n,2 } $, we have
	\begin{align*}
		|H_{1}^{n,2 }| \le \sup_x \left| \frac{1}{\wh{\Pi}(x)^2} - \frac{1}{\Pi(x)^2} \right|  \frac{1}{n \tau_n} \Sn D_i \Ind{Y_i > q_1(1-\tau_n)}.
	\end{align*}
	Note that
	\begin{equation*}
		\sup_x \left| \frac{1}{\wh{\Pi}(x)^2} - \frac{1}{\Pi(x)^2} \right| = o_p(1)
	\end{equation*}
	by a similar proof as the proof of Lemma~\ref{lemma:propensity_consistency} and 
	\begin{align*}
		\frac{1}{n \tau_n} \Sn D_i \Ind{Y_i > q_1(1-\tau_n)} = O_{p}(1)
	\end{align*}
	by the fact that $\ERW{ \frac{1}{n \tau_n} \Sn D_i  \Ind{Y_i > q_1(1-\tau_n)}} = 1$ and the Markov inequality, we have $H_{1}^{n,2 } = o_{p}(1)$.
	
	For $H_{1}^{n,3 } $, note that the terms
	$ \Ind{Y_i > \wh{q}_1{(1-\tau_n})} - \Ind{Y_i > {q}_1{(1-\tau_n})} $
	have the same signs for all $i=1,\dots,n$. Thus
	\begin{align*}
		|H_{1}^{n,3 }| 
		& =  \left| \frac{1}{n \tau_n}   \Sn \frac{D_i}{\wh{\Pi}(X_i)^2}  
		\left( \Ind{Y_i > \wh{q}_1{(1-\tau_n})} - \Ind{Y_i > {q}_1{(1-\tau_n})} \right) \right|  \nonumber \\ 
		& \le 
		\sup_x \left| \frac{1}{\wh{\Pi}(x)} \right| \left| \frac{1}{n \tau_n}   \Sn \frac{D_i}{\wh{\Pi}(X_i)}  
		\left( \Ind{Y_i > \wh{q}_1{(1-\tau_n})} - \Ind{Y_i > {q}_1{(1-\tau_n})} \right) \right|.
	\end{align*}
	We have seen in \eqref{eq:prob_bounded_prop} that $\sup_x \left| \frac{1}{\wh{\Pi}(x)} \right| = O_p(1)$, together with 
	Lemma~\ref{lemma:weighted_cdf_diff} we conclude that $H_{1}^{n,3} = o_{p}(1)$.
	Combining the above results we have $\wh H_1 \tendsto{P} H_1$.
	
	For (ii), we expand
	\begin{align*}
		\wh G_1 = G_{1}^{n,1 } + G_{1}^{n, 2} + G_{1}^{n, 3} + G_1^{n,4}
	\end{align*}
	with
	\begin{align*}
		G_1^{n,1} & = \frac{1}{k} \Sn (\log(Y_i) - \log( q_1(1-\tau_n)))^2 
		\frac{D_i}{ \Pi(X_i)^2} \Ind{Y_i >  q_1(1-\tau_n)}, \\
		G_1^{n,2} & = 
		\frac{\Delta_n}{k}  \Sn
		\big( 2(\log(Y_i) - \log( q_1(1-\tau_n))) 
		+ 
		\Delta_n \big) 
		\frac{D_i}{ \Pi(X_i)^2} \Ind{Y_i >  q_1(1-\tau_n)}, \\
		G_1^{n,3} &=  \frac{1}{k} \Sn (\log(Y_i) - \log( \wh q_1(1-\tau_n)))^2 
		\frac{D_i}{  \wh \Pi(X_i)^2} \left( \Ind{Y_i > \wh q_1(1-\tau_n)} -
		\Ind{Y_i >  q_1(1-\tau_n)} 
		\right), \\
		G_1^{n,4} &= \frac{1}{k} \Sn (\log(Y_i) - \log( \wh q_1(1-\tau_n))) ^2
		D_i \Ind{Y_i >  q_1(1-\tau_n)} 
		\left(
		\frac{1}{  \wh \Pi(X_i)^2}- \frac{1}{ \Pi(X_i)^2} 
		\right),
	\end{align*}
	where $\Delta_n = \log( q_1(1-\tau_n)) -\log( \wh q_1(1-\tau_n))$.
	We will show that $G_1^{n,1} \tendsto{P} G_1$ and 
	that $G_1^{n,2}, G_1^{n,3}, G_1^{n,4}$ converge to zero in probability.
	
	For $G_1^{n,1}$, let $Z_i(1) = 1/(1-F_1(Y_i(1)))$. As in the proof
	of Lemma~\ref{lemma:hill_consistency}, we have $Y_i(1) = U_1(Z_i(1))$
	almost surely. Because $q_1(1-\tau_n) = U_1(\tau_n^{-1})$, we have that almost surely
	\[
	G_1^{n,1}  = \frac{1}{k} \Sn (\log(U_1(Z_i(1))) - \log( U_1(n/k) ))^2 
	\frac{D_i}{ \Pi(X_i)^2} \Ind{Y_i >  q_1(1-\tau_n)}.
	\]
	
	Since $F_1$ satisfies the max-domain of attraction condition with a positive extreme value index $\gamma_1$, we can apply the Corollary 1.2.10 and the statement 5 of the Proposition B.1.9 in \cite{de2007extreme} to obtain that 
	for any  $\varepsilon, \varepsilon' >0$ such that $\varepsilon < 1$ and $\varepsilon' < \gamma_1$,
	there exists $t_0$ such that for any $x > 1$ and $t\ge t_0$, we have
	\[
	(1-\varepsilon) x^{\gamma_1- \varepsilon'}
	< 
	\frac{U_1(tx)}{U_1(t)}
	<
	(1+\varepsilon) x^{\gamma_1+ \varepsilon'}.
	\]
	Since $\varepsilon$ and $ \varepsilon'$ can be arbitrary small, we can take them small enough such that $(1-\varepsilon) x^{\gamma_1- \varepsilon'} > 1$. 
	Hence, we can first take logarithm and then take square on the above inequality to obtain
	\begin{align*}
		& \log(1-\varepsilon)^2 +2\log(1-\varepsilon) (\gamma_1 - \varepsilon') \log(x)
		+ (\gamma_1 - \varepsilon')^2 \log(x)^2 \\
		<& \left(\log(U_1(tx)) - \log(U_1(t))\right)^2 \\ 
		<& \log(1+\varepsilon)^2 +2\log(1+\varepsilon) (\gamma_1 + \varepsilon') \log(x)
		+(\gamma_1 + \varepsilon')^2 \log(x)^2.
	\end{align*}
	
	For large enough $n$ and for $i \in \{1,\dots,n\}$ such that $Z_i(1) > n/k$, we can set $t = n/k $ and $x = Z_i(1) \cdot k/n$. 
	Multiplying by $\frac{D_i}{k \Pi(X_i)^2}$ on both sides of the above inequality and summing up all $i \in \{1,\dots,n\}$ with $Z_i(1) > n/k$ gives us that almost surely, 
	$G_1^{n,1}$ lies in the interval $[a, b]$ with
	\begin{align*}
		a =& \log(1 - \varepsilon)^2 \frac{1}{k} \Sn  
		\frac{D_i}{ \Pi(X_i)^2} \Ind{Y_i >  q_1(1-\tau_n)} 
		+  2(\gamma_1 - \varepsilon')\log(1- \varepsilon) \frac{1}{k} \Sn  
		\log\left(Z_i(1) \frac{k}{n} \right)
		\frac{D_i}{ \Pi(X_i)^2} \Ind{Y_i >  q_1(1-\tau_n)} \\
		&+
		(\gamma_1 - \varepsilon')^2 \frac{1}{k} \Sn  
		\log^2\left(Z_i(1) \frac{k}{n} \right)
		\frac{D_i}{ \Pi(X_i)^2} \Ind{Y_i >  q_1(1-\tau_n)}, \\
		b =& \log(1+ \varepsilon)^2 \frac{1}{k} \Sn  
		\frac{D_i}{ \Pi(X_i)^2} \Ind{Y_i >  q_1(1-\tau_n)} 
		+  2(\gamma_1 + \varepsilon')\log(1+ \varepsilon) \frac{1}{k} \Sn  
		\log\left(Z_i(1) \frac{k}{n} \right)
		\frac{D_i}{ \Pi(X_i)^2} \Ind{Y_i >  q_1(1-\tau_n)} \\
		&+
		(\gamma_1 + \varepsilon')^2 \frac{1}{k} \Sn  
		\log^2\left(Z_i(1) \frac{k}{n} \right)
		\frac{D_i}{ \Pi(X_i)^2} \Ind{Y_i >  q_1(1-\tau_n)}.
	\end{align*}
	We have that
	\[
	\frac{1}{k} \Sn \frac{D_i}{ \Pi(X_i)^2} \Ind{Y_i >  q_1(1-\tau_n)}
	\le
	\frac{1}{c k} \Sn \frac{D_i}{ \Pi(X_i)} \Ind{Y_i >  q_1(1-\tau_n)}
	=O_{p}(1)
	\]
	by Assumption \ref{ass:1} ii) and the result \eqref{eq:hill_Gn2_lln}, and
	\[ 
	\frac{1}{k} \Sn  \log\left(Z_i(1) \frac{k}{n} \right)
	\frac{D_i}{ \Pi(X_i)^2} \Ind{Y_i >  q_1(1-\tau_n)} 
	\le
	\frac{1}{ck} \Sn  \log\left(Z_i(1) \frac{k}{n} \right)
	\frac{D_i}{ \Pi(X_i)} \Ind{Y_i >  q_1(1-\tau_n)} 
	= O_p(1).
	\]
	by Assumption \ref{ass:1} ii) and the result \eqref{eq:hill_Gn1_term2}. 
	Hence, since $\varepsilon$ and $\varepsilon'$ can
	be arbitrarily small, to prove $G_1^{n,1} \tendsto{P} G_1$, it is enough to show that
	\[
	\frac{\gamma_1^2}{k} \Sn  
	\log^2\left(Z_i(1) \frac{k}{n} \right)
	\frac{D_i}{ \Pi(X_i)^2} \Ind{Y_i >  q_1(1-\tau_n)} 
	\tendsto{P} G_1.
	\]
	We have
	\begin{align*}
		& \ERW{ \frac{\gamma_1^2}{k} \Sn  
			\log^2\left(Z_i(1) \frac{k}{n} \right)
			\frac{D_i}{ \Pi(X_i)^2} \Ind{Y_i >  q_1(1-\tau_n)}} \\
		=&\frac{\gamma_1^2}{\tau_n} 
		\ERW{ \frac{1}{ \Pi(X_i)} \ERW{ \log^2\left(Z_i(1) \tau_n \right)
				\Ind{Y_i(1) >  q_1(1-\tau_n)} \mid X_i }  } \to G_1
	\end{align*}
	by Lemma~\ref{lemma:easy_sigma_hill}, and
	\begin{align*}
		& \Var  \left( 
		\frac{\gamma_1^2}{k} \Sn  
		\log^2\left(Z_i(1) \frac{k}{n} \right)
		\frac{D_i}{ \Pi(X_i)^2} \Ind{Y_i >  q_1(1-\tau_n)} 
		\right)
		\\ = & 
		\frac{n}{k^2}
		\Var  \left( 
		\gamma_1^2 \log^2\left(Z_1(1) \frac{k}{n} \right)
		\frac{D_1}{ \Pi(X_1)^2} \Ind{Y_1 >  q_1(1-\tau_n)} 
		\right)
		\\ \le &  
		\frac{n}{c^3 k^2}
		\ERW{ \gamma_1^4 \log^4\left(Z_1(1) \frac{k}{n} \right)
			\frac{D_1}{ \Pi(X_1)} \Ind{Y_1 >  q_1(1-\tau_n)} } \\
		= & 
		\frac{n}{c^3 k^2}
		\ERW{
			\gamma_1^4 \log^4\left(Z_1(1) \frac{k}{n} \right)
			\Ind{Y_1(1) >  q_1(1-\tau_n)} 
		} \\
		= & 
		\frac{n}{c^3 k^2} O(\tau_n) \to 0,
	\end{align*}
	where we apply Lemma  \ref{lemma:propensity_adjustment} in the second last equality and apply Lemma~\ref{lemma:s_order} in the last equality.
	This shows that 
	\[
	\frac{\gamma_1^2}{k} \Sn  
	\log^2\left(Z_i(1) \frac{k}{n} \right)
	\frac{D_i}{ \Pi(X_i)^2} \Ind{Y_i >  q_1(1-\tau_n)} 
	\tendsto{P} G_1.
	\]
	and therefore we can conclude that $G_1^{n,1} \tendsto{P} G_1$.
	
	For $G_1^{n,2}$, we have
	\begin{align*}
		G_1^{n,2} =& 
		2\Delta_n
		\frac{1}{k} \Sn 
		\left(  \log(Y_i) - \log( q_1(1-\tau_n))
		\right)
		\frac{D_i}{ \Pi(X_i)^2} \Ind{Y_i >  q_1(1-\tau_n)}
		+ 
		\Delta_n^2
		\frac{1}{k} \Sn 
		\frac{D_i}{ \Pi(X_i)^2} \Ind{Y_i >  q_1(1-\tau_n)} \\
		\le &
		\frac{2\Delta_n}{c}
		\frac{1}{k} \Sn 
		\left(  \log(Y_i) - \log( q_1(1-\tau_n))
		\right)
		\frac{D_i}{ \Pi(X_i)} \Ind{Y_i >  q_1(1-\tau_n)}
		+ 
		\frac{\Delta_n^2}{c}
		\frac{1}{k} \Sn 
		\frac{D_i}{ \Pi(X_i)} \Ind{Y_i >  q_1(1-\tau_n)}\\
		= & o_p(1),
	\end{align*}
	where the inequality follows from the Assumption \ref{ass:1} ii), and the last result follows from results 
	\eqref{eq:hill_Gn2_lln}, \eqref{equa: delta_n=op1}, and the result that 
	\[
	\frac{1}{k} \Sn \left(  \log(Y_i) - \log( q_1(1-\tau_n)) \right) \frac{D_i}{ \Pi(X_i)} \Ind{Y_i >  q_1(1-\tau_n)} \tendsto{P} \gamma_1,
	\]
	as we proved that $G_n^1 \tendsto{P} \gamma_1$ in the proof of Lemma~\ref{lemma:hill_consistency}.

	For $G_1^{n,3}$, since the terms
	$ \Ind{Y_i > \wh{q}_1{(1-\tau_n})} - \Ind{Y_i > {q}_1{(1-\tau_n})} $
	have the same signs for all $i=1,\dots,n$.
	Thus
	\[
	|G_1^{n,3}|
	\le | \Delta_n |^2 
	\sup_x \left| \frac{1}{\wh{\Pi}(x)} \right|
	\left| \frac{1}{k} \Sn 
	\frac{D_i}{  \wh \Pi(X_i)} \left( \Ind{Y_i > \wh q_1(1-\tau_n)} -
	\Ind{Y_i >  q_1(1-\tau_n)} 
	\right)  \right| = o_{p}(1)
	\]
	by the result \eqref{equa: delta_n=op1}, \eqref{eq:prob_bounded_prop} and Lemma~\ref{lemma:weighted_cdf_diff}.
	This proves that $G_1^{n,3} = o_p(1)$.
	
	For $G_1^{n,4}$, similarly as in the proof of Lemma~\ref{lemma:hill_consistency},
	we have
	\[
	|G_1^{n,4}| \le \sup_x \left| \frac{1}{\wh \Pi(x)^2}- \frac{1}{\Pi(x)^2} \right|
	\left( G_1^{n,1} + | G_1^{n,2} | \right).
	\]
	We have shown that $G_1^{n,1} \tendsto{P} G_1$ and $G_1^{n,2} = o_p(1)$, so $G_1^{n,4} = o_p(1)$ follows from 
	Lemma~\ref{lemma:propensity_consistency}. 
	Combining all the previous results we can conclude that $\wh G_1 \tendsto{P} G_1$.
	
	For (iii), we expand
	\begin{align*}
		\wh J_1 = J_{1}^{n,1 } + J_{1}^{n, 2} + J_{1}^{n, 3} + J_1^{n,4}
	\end{align*}
	with
	\begin{align*}
		J_1^{n,1} & = \frac{1}{k} \Sn (\log(Y_i) - \log( q_1(1-\tau_n))) 
		\frac{D_i}{ \Pi(X_i)^2} \Ind{Y_i >  q_1(1-\tau_n)}, \\
		J_1^{n,2} & = 
		\frac{\Delta_n}{k}  \Sn
		\frac{D_i}{ \Pi(X_i)^2} \Ind{Y_i >  q_1(1-\tau_n)}, \\
		J_1^{n,3} &=  \frac{1}{k} \Sn (\log(Y_i) - \log( \wh q_1(1-\tau_n))) 
		\frac{D_i}{  \wh \Pi(X_i)^2} \left( \Ind{Y_i > \wh q_1(1-\tau_n)} -
		\Ind{Y_i >  q_1(1-\tau_n)} 
		\right), \\
		J_1^{n,4} &= \frac{1}{k} \Sn (\log(Y_i) - \log( \wh q_1(1-\tau_n)))
		D_i \Ind{Y_i >  q_1(1-\tau_n)} 
		\left(
		\frac{1}{  \wh \Pi(X_i)^2}- \frac{1}{ \Pi(X_i)^2} 
		\right),
	\end{align*}
	where $\Delta_n = \log( q_1(1-\tau_n)) -\log( \wh q_1(1-\tau_n))$. The next step is to show that $J_1^{n,1} \tendsto{P} J_1$ and 
	that $J_1^{n,2}, J_1^{n,3}, J_1^{n,4}$ converge to zero in probability, which is enough to prove $\wh J_1 \tendsto{P} J_1$. 
	The remaining proof is similar to the one for $\wh G_1$, so we omit it.
\end{proof}

\subsection{Proof of Lemma~\ref{lemma:hill_conservative}}
\begin{proof}[Proof of Lemma~\ref{lemma:hill_conservative}] 
	It is sufficient to show that  $\wt \Sigma - \Sigma$ is a positive semi-definite matrix.
	Note that $\wt \Sigma - \Sigma$ can be written as the limit
	\[
	\wt \Sigma - \Sigma = \lim_{n \to \infty} \frac{\ERW{ \Delta \Sigma_n }}{\tau_n},
	\]
	where 
	\[
	\Delta \Sigma_n := v_n w_n^T
	\]
	with
	\begin{align*}
		v_n := \left( 
		\begin{matrix}
			\frac{1-\Pi(X)}{\Pi(X)} \ERW{S_{1,n}\mid X} \\ 
			-\ERW{S_{0,n}\mid X} \\
			\frac{1-\Pi(X)}{\Pi(X)} P(Y(1) > q_1(1-\tau_n)\mid X) \\
			-P(Y(0) > q_0(1-\tau_n) \mid X)
		\end{matrix}
		\right)
		\quad \text{and} \quad
		w_n := \left( 
		\begin{matrix}
			\ERW{S_{1,n}\mid X} \\ 
			-\frac{\Pi(X)}{1-\Pi(X)} \ERW{S_{0,n}\mid X} \\
			P(Y(1) > q_1(1-\tau_n)\mid X) \\
			-\frac{\Pi(X)}{1-\Pi(X)}  P(Y(0) > q_0(1-\tau_n) \mid X)
		\end{matrix}
		\right).
	\end{align*}
	Therefore, $\Delta \Sigma_n$ is of rank 1 and has at most one non-zero
	eigenvalue.
	In addition, since all entries on the diagonal of $\Delta \Sigma_n$ are non-negative, we have
	$\text{trace}(\Delta \Sigma_n)\ge 0$, which implies that
	$\Delta \Sigma_n$ is positive semi-definite.
	Linearity and monotonicity of the expectation then yield that 
	$\ERW{\Delta \Sigma_n}/ \tau_n$
	is positive semi-definite, which implies the positive semi-definiteness
	of the limit $\wt \Sigma - \Sigma$.

\end{proof}

\end{document}